\def\doi{8 (2:11) 2012}
\algrenewcommand{\algorithmiccomment}[1]{\hfill$\triangleright$ {\sffamily\itshape #1}}
\theoremstyle{plain}
\newtheorem{theorem}{Theorem}[section]
\newtheorem{lemma}[theorem]{Lemma}
\newtheorem{proposition}[theorem]{Proposition}
\newtheorem{corollary}[theorem]{Corollary}
\theoremstyle{definition}
\newtheorem{definition}{Definition}[section]
\newtheorem{example}{Example}[section]
\theoremstyle{remark}
\def\moverlay{\mathpalette\mov@rlay}
\newcommand\mov@rlay[2]{\leavevmode\vtop{%
    \baselineskip\z@skip
    \lineskiplimit-\maxdimen
    \ialign{\hfil$#1##$\hfil\cr#2\crcr}}}
\newcommand\minbasiselt{\raise1pt\hbox{$\scriptscriptstyle\protect\moverlay{\bullet\cr\bigcirc}$}\xspace}
\newcommand\basiselt{\raise1pt\hbox{$\scriptscriptstyle\bullet$}\xspace}
\newcommand{\partie}[1]{2^{#1}}
\newcommand\fyset[1]{\mathcal{F}(\vec #1)}
\newcommand\Pyset[1]{\mathcal{P}(\vec #1)}
\newcommand\eg{{\em e.g.}}
\newcommand\ie{{\em i.e.}}
\let\emptyset\varnothing
\let\savenabla\nabla
\renewcommand\nabla{\mathbin{\savenabla}}
\newcommand\trans{\xRightarrow{*}}
\newcommand\invar[2]{\ensuremath{{#1}\trans{#2}}}
\newcommand{\tup}[1]{\langle #1\rangle}
\newcommand{\dc}[1][\Default]{\def\Default{}%
  \ifthenelse{\equal{#1}{}}%
  {\mathop{\downarrow}}
  {\mathop{\downarrow}\nolimits_{#1}}\!}
\newcommand{\Dc}[1][\Default]{\def\Default{}%
  \ifthenelse{\equal{#1}{}}%
  {\mathop{\Downarrow}}
  {\mathop{\Downarrow}\nolimits_{#1}}\!}
\newcommand{\uc}[1][\Default]{\def\Default{}%
  \ifthenelse{\equal{#1}{}}%
  {\mathop{\uparrow}\!}
  {\mathop{\uparrow}\nolimits_{#1}\!}}
\newcommand\xdown[1]{\Dc[\vec #1]}
\newcommand\down[1]{\dc[{#1}]}
\newcommand\Pdown[1]{\dc[{\leq_#1}]}
\newcommand\filter[2]{\operatorname{Filter}(#1,\vec #2)}
\newcommand\vassz{\ensuremath{\text{VASS}_z}\xspace}
\newcommand\vass{\ensuremath{\text{VASS}}\xspace}
\newcommand\vasz{\ensuremath{\text{VAS}_z}\xspace}
\newcommand\vas{\ensuremath{\text{VAS}}\xspace}
\newcommand\az{\ensuremath{a_{\mathit z}}\xspace}
\newcommand\lequn{\ensuremath{\leq_1}\xspace}
\newcommand{\ru}[1]{\xrightarrow{#1}}
\newcommand{\Lim}{\operatorname{\mathsf{Lim}}}
\newcommand{\postset}{\operatorname{\mathsf{Reach}}}
\let\leq\leqslant
\let\geq\geqslant
\let\preceq\preccurlyeq
\renewcommand\vec[2][\Default]{\def\Default{}%
  \ifthenelse{\equal{#1}{}}%
  {\ensuremath{\boldsymbol{#2}}}%
  {\ensuremath{(\boldsymbol{#1},\boldsymbol{#2})}}%
}
\newcommand\poststar[1][\Default]{\def\Default{}%
  \ifthenelse{\equal{#1}{}}%
  {\ensuremath{\mathsf{Reach}}}%
  {\ensuremath{\mathsf{Reach}_{#1}}}}
\newcommand\ini{\ensuremath{\vec{x}_{\textit{in}}}\xspace}
\newcommand\iniprime{\ensuremath{\vec{x}'_{\textit{in}}}\xspace}
\newcommand\xini{\ini}
\newcommand\qini{\ensuremath{q_\mathit{in}}\xspace}
\newcommand\qf{\ensuremath{q_\mathit{f}}\xspace}
\newcommand\limra{\overset{n \to \infty}{\dashrightarrow}}
\newcommand\cover[2][\Default]{\def\Default{}%
  \ifthenelse{\equal{#1}{}}%
  {\ifthenelse{\equal{#2}{}}%
    {\ensuremath{\mathsf{Cover}}}%
    {\ensuremath{\mathsf{Cover}_{\vec #2}}}}%
  {\ifthenelse{\equal{#2}{}}%
    {\ensuremath{\mathsf{Cover}_{\vec #1}}}%
    {\ensuremath{\mathsf{Cover}_{\vec #1,#2}}}}}
\newcommand\xcover[2][\Default]{\def\Default{}%
  \ifthenelse{\equal{#1}{}}%
  {\ifthenelse{\equal{#2}{}}%
    {\ensuremath{\mathsf{Cover}}}%
    {\ensuremath{\mathsf{Cover}_{#2}}}}%
  {\ifthenelse{\equal{#2}{}}%
    {\ensuremath{\mathsf{Cover}_{#1}}}%
    {\ensuremath{\mathsf{Cover}_{#1,\vec #2}}}}}
\newcommand{\setN}{\ensuremath{\mathbb{N}}\xspace}
\newcommand{\setZ}{\ensuremath{\mathbb{Z}}\xspace}
\newcommand{\TT}{\ensuremath{\mathcal{T}}\xspace}
\newcommand{\NN}{\ensuremath{\mathcal{N}}\xspace}
\newcommand{\VV}{\ensuremath{\mathcal{V}}\xspace}
\newcommand{\embed}[1]{\ensuremath{\mathrel{{#1}^{*}}}}
\renewcommand{\figurename}{Fig.}
\title[Model Checking Vector Addition Systems with one zero-test]{Model Checking Vector Addition Systems with one zero-test\rsuper*}
\author[R\'emi Bonnet]{R\'emi Bonnet\rsuper a}
\address{{\lsuper{a,b,d}}LSV, ENS Cachan, CNRS \& INRIA, France}	
\email{firstname.lastname@lsv.ens-cachan.fr} 
\author[Alain Finkel]{Alain Finkel\rsuper b}
\address{{\lsuper{c,d}}LaBRI, Univ. Bordeaux \& CNRS, France}	
\email{firstname.lastname@labri.fr\vspace{-6 pt}} 
\author[J\'er\^ome Leroux]{J\'er\^ome Leroux\rsuper c}
\author[Marc Zeitoun]{Marc Zeitoun\rsuper d}
\thanks{{\lsuper{a,b,c,d}}Supported by the Agence Nationale de la Recherche,
    AVERISS (grant ANR-06-SETIN-001), AVERILES (grant
    ANR-05-RNTL-002), ANR 2010 BLAN 0202 01 FREC, and
    REACHARD-ANR-11-BS02-001.}
\keywords{Vector addition system, zero-test, reachability, cover, boundedness, place boundedness,  Karp-Miller algorithm, LTL model-checking.}
\subjclass{F.1.1}
\begin{document}

\begin{abstract}
  We design a variation of the Karp-Miller algorithm to compute, in a
  forward manner, a finite representation of the cover (\ie, the
  downward closure of the reachability set) of a vector addition
  system with one zero-test. This algorithm yields decision procedures
  for several problems for these systems, open until now, such as
  place-boundedness or LTL model-checking. The proof techniques to
  handle the zero-test are based on two new notions of cover: the
  \emph{refined} and the \emph{filtered} cover. The refined cover is a
  hybrid between the reachability set and the classical cover. It
  inherits properties of the reachability set: equality of two refined
  covers is undecidable, even for usual Vector Addition Systems (with
  no zero-test), but the refined cover of a Vector Addition System is
  a recursive set. The second notion of cover, called the filtered
  cover, is the central tool of our algorithms. It inherits properties
  of the classical cover, and in particular, one can effectively
  compute a finite representation of this set, even for Vector
  Addition Systems with one zero-test.
\end{abstract}

\maketitle

\section{Introduction}

\paragraph{\bfseries Context: verifying properties of Vector Addition
  Systems.} Petri Nets, Vector Addition Systems
(VAS), and Vector Addition Systems with control States (VASS) are
equivalent well-known classes of counter systems for which the
reachability problem is decidable
\cite{Mayr:81,Kosaraju:82,LEROUX-POPL2011}, even if its complexity is
still open. On the other hand, testing equality of the
reachability sets of two such systems is
undecidable~\cite{Baker73,Hack:76}. For this reason, one cannot
compute a canonical finite representation of the reachability set that
would make it possible to test for equality of two reachability sets.
However, there is such an effective finite representation for the
\emph{cover}, a useful over-approximation of the reachability set
which is connected to various verification problems. Therefore, one
can decide not only the coverability problem (that is, membership to
the cover), but also whether two VAS have the same cover.

\smallskip

Vector Addition Systems are powerful models for the verification of
networks of identical finite-state machines communicating by
rendez-vous, with dynamic creation and destruction. Intuitively, a
global configuration of such a system is abstracted by nonnegative
counters, one for each possible location of the finite-state
machine. A counter value denotes the number of machines in the
corresponding location (see for instance \cite{Emerson:98}). Notice
that dynamic creation makes the number of processes, and therefore the
values of counters, possibly unbounded. For modeling client-server
systems where clients are identical finite-state machines, and the
server is another finite-state machine that can check that no process
is in a critical section, the VAS model is no longer
sufficient. Indeed, one must be able to check that a particular
counter is equal to zero, namely the one counting processes in the
critical section. This is a first practical motivation for adding to
VAS the ability to test a counter for 0.

Another reason to consider such a model is that it constitutes a first
step towards the verification of VAS equipped with a stack, a model
borrowing features both to pushdown automata and to VAS, and that
abstracts recursive programs manipulating constrained counters.
However, these systems are difficult to analyze. Abstracting away the
actual stack alphabet transforms the stack into a counter that can be
tested to zero. In this paper, we study verification problems for VAS
with one zero test.

\smallskip If one adds to VAS the ability to test at least two
counters for zero, one obtains a model equivalent to Minsky machines,
for which all nontrivial properties (in the sense of Rice's theorem)
of the language they recognize are undecidable, and many properties of
their behavior, such as reachability of a control state or
termination, are also undecidable. The study of VAS with \emph{a
  single} zero-test is recent, and only few results are known for this
model. Reinhardt~\cite{Reinhardt:08} has shown that the reachability
problem is decidable for VAS with one zero-test transition (as well as
for hierarchical zero-tests), and an alternate, simpler proof of this
result was recently given by the first author~\cite{MFCS:11}.  Abdulla
and Mayr have shown that the coverability problem is decidable
in~\cite{AbdullaM:09}, by using both the backward procedure of Well
Structured Transition Systems~\cite{AbdullaCJT:96} (see
\cite{Finkel&Schnoebelen:01} for a survey on Well Structured
Transition Systems), and the decidability of forward-reachability of
ordinary VASS as an oracle. The boundedness problem (whether the
reachability set is finite), the termination and the
reversal-boundedness problem (whether the counters can alternate
infinitely often between the increasing and the decreasing modes) are
all decidable by using a forward procedure, computing a finite, yet
\emph{incomplete}, Karp-Miller tree~\cite{Finkel&Sangnier:10}.

\paragraph{\bfseries LTL specifications}
Linear time temporal logic is a widely used specification logic, which
can express safety and liveness properties. Emerson~\cite{Emerson:98}
has designed an algorithm based on a covering graph to check LTL
properties on Well Structured Transition Systems, but which may not
terminate.  Esparza~\cite{Esparza:94,Esparza:98} has shown that LTL
specifications on the actions of a VAS is decidable (contrary to CTL)
and that LTL becomes undecidable if one adds state
predicates. Habermehl \cite{Habermehl:97} completed this proof by
showing EXPSPACE-completeness of LTL satisfiability, by generalizing
Rackoff's proof \cite{Rackoff:78}. These results have been unified
in~\cite{Blockelet&Schmitz:Model-Checking-Coverability-Graphs:2011:a}.

\paragraph{\bfseries Our contribution.} 
We give an algorithm for computing a \emph{finite representation} of the cover for a 
VAS with one zero-test. This result makes it possible to decide the 
place-boundedness problem, which is in general undecidable for VAS extensions 
(such as VAS with resets~\cite{Dufourd:98} or lossy counter machines,
\ie, lossy VAS with zero-test
transitions~\cite{Bouajjani:99,Mayr:03}).

Our proof first introduces a new notion of cover, called \emph{refined
  cover}, where the usual ordering on vectors is replaced by one that
insists on keeping equality on certain components. The refined cover
is a set hybrid between the reachability set and the classical cover.
We show that equality of two refined covers is undecidable, even for
usual VAS (with no zero-test). However, one can show that such a
refined cover is recursive for a VAS. We then introduce \emph{filtered
  covers}, the main technical tool of our algorithm. A filtered cover
is defined wrt.\ some specific values attached to some components. It
consists in retaining only these vectors from the reachability set
that agree with these values, before taking the usual downward
closure. By transferring decidability results from refined covers to
filtered covers, we are able to compute a finite representation of any
filtered cover. We use this representation to propose an algorithm
\emph{\`a la} Karp and Miller, which builds a tree to compute the cover
of a \vas with one zero-test. This allows us to obtain new
decidability results for such systems, namely for the classical
problems of place-boundedness. Finally, we show that the repeated
control state reachability for vector addition systems with states and
one zero-test is decidable, as well as LTL model-checking, by reducing
these problems to the reachability problem. Note that, for VASS (with
no zero-test), both problems can be reduced to the computation of the
cover set. We do not know whether there is such a reduction between
the corresponding problems for VASS with one zero test, and we leave
it as an open problem.

\smallskip Thus, this work can be viewed as a contribution to
understanding the limits of decidability, taking into account two
parameters: the models (VAS and VAS with one zero-test) and the
problems (reachability, cover, refined and filtered cover).

\paragraph{\bfseries The difficulty.}
\label{sec:difficulty}
The central problem is to compute the cover of a VAS \emph{with one
  zero-test}. Let us explain why the usual Karp-Miller algorithm
is not sufficient for that purpose. A crucial property of VAS used by
this algorithm is \emph{monotony}: actions fireable from a
state are still fireable from any larger state. This property is
clearly broken by the zero-test.

\smallskip
A natural idea appearing in~\cite{Finkel&Sangnier:10} is to adapt the
classical Karp-Miller construction~\cite{Karp&Miller:69}, first
building the Karp-Miller tree, but \emph{without} firing the zero
test. To continue the construction after this first stage, we need to
fire the zero test from the leaves of the Karp-Miller tree carrying a
$0$ value on the component that is tested to~$0$. The problem is that
accelerations performed while building the Karp-Miller tree may have
produced, on this component in the label of such a leaf, an $\omega$ value
that represents arbitrarily large values, and that abstracts actual
values. For this reason, one may not be able to determine if the zero
test succeeds or not. We therefore want a more accurate information
for the labeling of the leaves, for the component tested to~$0$. This
is what the filtered cover actually captures.

\smallskip To be more precise, let us illustrate this difficulty with
some short examples (assuming basic knowledge on VAS/VASS, see
Sec.~\ref{sec:vas}/\ref{sec:rcsrp}). The Karp-Miller algorithm
\cite{Karp&Miller:69,Finkel:minimal-coverability-graph-Petri:1993:a}
computes a finite representation of the cover of a VASS, \ie, the
downward closure of its reachability set (for the usual ordering over
$\setN^d$, where $d$ is the dimension of the VASS). It builds a finite
tree, whose nodes are labeled by elements of $(\setN\cup\{\omega\})^d$, where
intuitively $\omega$ represents arbitrary large values. At the end of the
algorithm, the cover is exactly the set of vectors of $\setN^d$ belonging
to the downward closure of the set of labels. The tree is obtained by
unwinding the system, and by performing acceleration when possible, in
order to guarantee termination: if one finds two nodes on the same
branch, such that the lowest one in the branch is labeled by a greater
element, one replaces by $\omega$ all components that have grown (this
captures the iteration of the firing sequence between the two nodes,
and this is where monotony is~used). We aim at generalizing this
algorithm for VASS with one zero-test.

\smallskip As a first example, consider in dimension~1 the two VASS
with one zero-test represented in \figurename~\ref{fig:km1}.
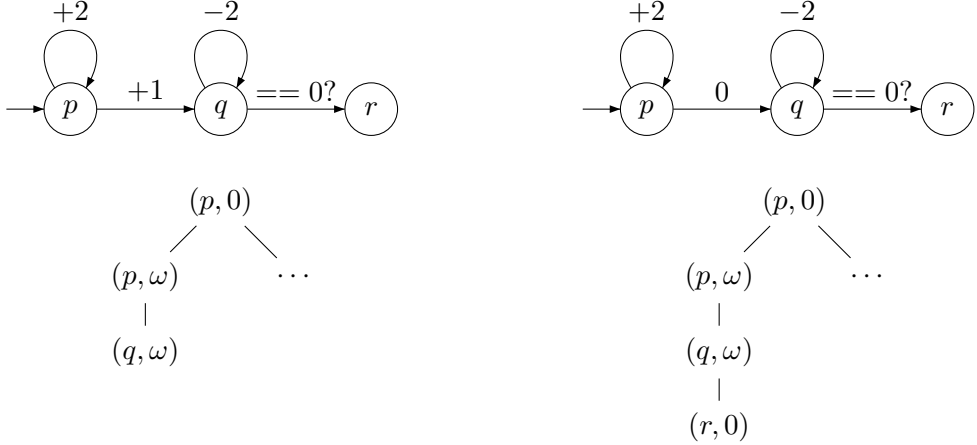
\begin{figure}[ht]
  \centering
  \scalebox{1}{%
    \begin{picture}(60,14)(-30,-2)
      \gasset{Nw=7,Nh=7,ExtNL=n,NLangle=0,NLdist=0,loopdiam=7}
      \node[Nmarks=i](p)(-20,0){$p$}
      \node(q)(0,0){$q$}
      \node(r)(20,0){$r$}
      \drawloop(p){$+2$}
      \drawloop(q){$-2$}
      \drawedge(p,q){$+1$}
      \drawedge(q,r){$==0?$}
    \end{picture}
    \qquad\qquad
    \begin{picture}(60,14)(-30,-2)
      \gasset{Nw=7,Nh=7,ExtNL=n,NLangle=0,NLdist=0,loopdiam=7}
      \node[Nmarks=i](p)(-20,0){$p$}
      \node(q)(0,0){$q$}
      \node(r)(20,0){$r$}
      \drawloop(q){$-2$}
      \drawloop(p){$+2$}
      \drawedge(p,q){$0$}
      \drawedge(q,r){$==0?$}
    \end{picture}}

  \vspace*{3ex}\hspace*{2ex}
  \scalebox{1}{%
    \begin{picture}(60,37)(-20,-32)
      \gasset{Nframe=n,AHnb=0,Nh=7,Nadjust=w,ExtNL=n,NLangle=0}
      \node(0)(0,0){$(p,0)$}
      \node(00)(-10,-10){$(p,\omega)$}
      \node(01)(10,-10){$\cdots$}
      \node(000)(-10,-20){$(q,\omega)$}
      \node[Nframe=n](0000)(-10,-30){}
      \drawedge(0,00){}
      \drawedge(0,01){}
      \drawedge(00,000){}
    \end{picture}
    \begin{picture}(60,37)(-35,-32)
      \gasset{Nframe=n,AHnb=0,Nh=7,Nadjust=w,ExtNL=n,NLangle=0}
      \node(0)(0,0){$(p,0)$}
      \node(00)(-10,-10){$(p,\omega)$}
      \node(01)(10,-10){$\cdots$}
      \node(000)(-10,-20){$(q,\omega)$}
      \node(0000)(-10,-30){$(r,0)$}
      \drawedge(0,00){}
      \drawedge(0,01){}
      \drawedge(00,000){}
      \drawedge(000,0000){}
    \end{picture}}
  \caption{Two VASS with one zero-test, and their Karp-Miller trees}
  \label{fig:km1}
\end{figure}
They only differ by the transition from $p$ to $q$. The transition
from $q$ to $r$ is the zero-test, fireable only when the counter is 0,
and which does not affect the counter. Starting from the initial state
$(p,0)$ and firing the loop from $p$ to itself, the algorithm first
computes as left child of the root a node labeled $(p,2)$, which then
gets accelerated as $(p,\omega)$. Then, firing the transition from $p$ to
$q$ yields the node $(q,\omega)$. Now, the zero-test is not fireable in the
first case, while it is fireable in the second case. Therefore, the
Karp-Miller trees we want to compute should differ (see
\figurename~\ref{fig:km1}, which shows two such partial Karp-Miller
trees). However, this cannot be detected with the information
available on the branch from $(p,0)$ to $(q,\omega)$, because this
information is identical for both systems: it consists of the nodes
$(p,0)$, $(p,\omega)$, $(q,\omega)$. This example illustrates the fact that the
$\omega$ component, in $(q,\omega)$, hides the actual reachable values, and
therefore also hides the ability or inability to fire the zero-test.

\smallskip The next example (\figurename~\ref{fig:km2}) is in
dimension 2. The zero-test occurs on the first component. It shows
that even if one could determine when to fire the zero-test, one might
be unable to compute the relevant node labeling using only information
provided by classical Karp-Miller trees.
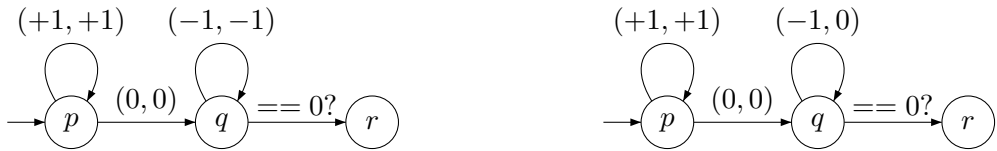
\begin{figure}[h]
  \centering
  \scalebox{1}{%
    \begin{picture}(45,22)(-20,-4)
      \gasset{Nw=7,Nh=7,ExtNL=n,NLangle=0,NLdist=0,loopdiam=7}
      \node[Nmarks=i](p)(-20,0){$p$}
      \node(q)(0,0){$q$}
      \node(r)(20,0){$r$}
      \drawloop(p){$(+1,+1)$}
      \drawloop(q){$(-1,-1)$}
      \drawedge(p,q){$(0,0)$}
      \drawedge(q,r){$==0?$}
    \end{picture}
    \qquad\qquad\qquad
    \begin{picture}(45,22)(-30,-4)
      \gasset{Nw=7,Nh=7,ExtNL=n,NLangle=0,NLdist=0,loopdiam=7}
      \node[Nmarks=i](p)(-20,0){$p$}
      \node(q)(0,0){$q$}
      \node(r)(20,0){$r$}
      \drawloop(p){$(+1,+1)$}
      \drawedge(p,q){$(0,0)$}
      \drawloop(q){$(-1,0)$}
      \drawedge(q,r){$==0?$}
    \end{picture}}
  \caption{Two VASS with one zero-test}
  \label{fig:km2}
\end{figure}
Indeed, the Karp-Miller trees for both systems before firing the
zero-test are identical. However, firing the zero-test from $(q,\omega,\omega)$
should produce a node labeled $(r,0,0)$ in the first case, and
$(r,0,\omega)$ in the second one.  Here, $\omega$ values in $(q,\omega,\omega)$ hide
relevant relationships between components (namely, that both
components remain equal in the first system).

\paragraph{\bfseries The schema of our proof.}
\begin{enumerate}[leftmargin=*]
\item We start in Section~\ref{sec:set-limits-reachable} with usual
  VAS: we extend the decidability of the reachability problem for VAS,
  by proving that the set $\Lim\poststar{}$ of \emph{limits} of
  sequences of reachable states is also recursive. This set
  $\Lim\poststar{}$ contains the reachability set, and captures more
  information, in general. Actually, it is more sophisticated than
  both the cover and the reachability set: it allows one to know
  whether an element in $(\setN\cup\{\omega\})^d$ is a reachable state or if
    it is the limit of a sequence of reachable states. This
  information is not given by the reachability set, neither by the
  cover (using the pointwise ordering over $(\setN\cup\{\omega\})^d$, and the
    natural ordering over $\setN\cup\{\omega\}$: $n\leq\omega$ for all $n$). The proof
  carries on by using Higman's Lemma, using a nontrivial~ordering.

\item In Section~\ref{sec:filtered-covers},  we refine the definition of 
  cover in which the first component of the vectors has now to be known 
  exactly (and not only bounded by some maximal value).
  We prove that, for VAS, the fact that $\Lim\poststar{}$ is recursive
  implies that one can \emph{compute} the finite basis of this filtered cover.

\item In Section \ref{sec:alg}, we compute the finite basis of the
  cover of a VAS with one zero-test by using a variation of the
  Karp-Miller algorithm that uses the previously defined filtered
  covers in order to convey enough information to go through the
  zero-test.

\item We add control states to our VAS with one zero-test in Section
  \ref{sec:rcsrp}, and we show that one can detect reachable
  increasing loops on a given control state, by reducing this problem
  to the reachability problem for VASS with one zero-test, a decidable
  problem~\cite{Reinhardt:08,MFCS:11}. This allows us to decide
  repeated control state reachability. We also note that this makes it
  possible to solve model checking against LTL or $\omega$-regular
  specifications. However, contrary to the situation without any
  zero-test, this is obtained by reducing this problem to the
  reachability problem, and not to the computation of the
  cover. Whether a reduction to this simpler problem exists is left
  open.
\end{enumerate}

\section{Preliminaries}
\label{sec:prelim}

\paragraph{\bfseries Words.} We denote by $A^*$ the set of finite
words over $A$. A word $u \in A^*$ is written $a_1 \cdots a_n$, with $a_i \in
A$. The concatenation of two words $u$ and $v$ is simply written $u v$
and the empty word is denoted $\varepsilon$, with $\varepsilon a = a \varepsilon = a$. We let
$A^+=A^*\setminus\{\varepsilon\}$ be the set of nonempty words.

\paragraph{\bfseries Orderings.} An \emph{ordering} $\preceq$ on a set~$X$
is a reflexive, transitive and antisymmetric binary relation
over $X$. Given $x,y\in X$, we write $x\prec y$ for $x\preceq y$ and $x\neq y$.
For $Y \subseteq X$, let $$\down{\preceq} Y = \{ x \in X\mid \exists y \in Y,\; x \preceq y\}$$ denote
the \emph{downward closure} of $Y$ with respect to $\preceq$.  The set $Y$
is said \emph{downward closed} if~$Y=\down{\preceq} Y$.  When working in
$\setN^d$ or $\setN^d_\omega$ with the usual ordering $\leq$ (see below), we
shorten the corresponding downward closure operator $\dc[\leq]\,$
as~${\dc}\,$.  Symmetrically, the \emph{upward closure} of $Y \subseteq X$,
denoted $\uc[\preceq] Y$ is defined by
\begin{equation*}
  \uc[\preceq] Y = \{ x \in X \mid \exists y \in Y,~ y \preceq x \}.
\end{equation*}
The set $Y$ is said to be \emph{upward closed} if $\uc[\preceq] Y = Y$.

\paragraph{\bfseries Vectors.}
For $d\geq1$, we write any vector $\vec x\in X^d$ as $\vec x=(\vec
x(1),\ldots,\vec x(d))$, with $\vec x(i)\in X$. Given an ordering $\preceq$ over
$X$, the \emph{pointwise ordering} over $X^d$, still denoted $\preceq$, is
defined by $\vec x\preceq\vec y$ if $\vec x(i)\preceq\vec y(i)$ for all~$i$.  For
$X=\setN$, we let~$\vec 0$ be the vector whose components are all~0, and
we say that $\vec x$ is \emph{nonnegative} if $\vec x\geq\vec 0$. For
$i\in\{1,\ldots,d\}$, we let $\vec{e}_i$ be the vector such that
$\vec{e}_i(i)=1$ and $\vec{e}_i(k)=0$ if~$k\neq i$.

\paragraph{\bfseries Limits in $\setN^d_\omega$.} We introduce an element
$\omega\not\in\setN$ and the set $\setN_\omega=\setN\cup\{\omega\}$. A sequence $(\ell_n)_{n\geq0}$ (also
written $(\ell_n)_n$) of elements of $\setN_\omega$ \emph{converges to $\ell\in\setN_\omega$},
if either it is ultimately constant with value $\ell$, or its subsequence
of integer values is infinite, tends to infinity, and~$\ell=\omega$.  We then
say that $\ell$ is \emph{the} \emph{limit} of $(\ell_n)_n$, noted
$\lim_n\ell_n=\ell$, or $\ell_n \limra \ell$.  A sequence $(\vec{x}_n)_n$ of
vectors of $\setN^d_\omega$ has limit $\vec x\in\setN^d_\omega$, noted
$\lim_n\vec{x}_n=\vec x$, if $\lim_n\vec{x}_n(i)=\vec x(i)$ for all
$i\in\{1,\ldots,d\}$.

\smallskip For $\vec M\subseteq\setN^d_\omega$, let $\Lim \vec M$ be the set of limits
of sequences of elements of $\vec M$. Notice that
\begin{equation}
  \label{eq:1}
  \vec M\subseteq\Lim \vec M,
\end{equation}
and
\begin{equation}
  \label{eq:2}
  \text{if } \vec M\subseteq\setN^d, \text{ then } \vec M=\setN^d\cap\Lim \vec M.
\end{equation}
Topologically speaking, $\Lim \vec M$ is the least limit closed set
containing $\vec M$. It is called the \emph{limit closure} of~$\vec
M$. The set $\vec M$ is said to be \emph{limit closed} if $\vec M=\Lim
\vec M$.

\paragraph{\bfseries Downward closed sets of\/ $\setN^d$ and $\setN^d_\omega$.}
Given an ordered set, one may under suitable hypotheses construct a
topological completion of this set, to recover a \emph{finite
  description} of its downward closed
subsets~\cite{Finkel&Goubault-Larrecq:09:a,Finkel&Goubault-Larrecq:09:b}. The
completion of $(\setN^d,\leq)$ is $(\setN_\omega^d,\leq)$ where we extend the ordering
$\leq$ over $\setN$ by $n\leq\omega$ for all $n\in\setN_\omega$.

\smallskip
A \emph{basis} of a set $\vec D\subseteq\setN^d_\omega$ is a \emph{finite} set $\vec
B\subseteq\setN_\omega^d$ such that
\begin{equation}
\label{eq:3}
\Lim \vec D=\down{} \vec B.
\end{equation}
Such a set $\vec B$ is a finite representation of $\Lim \vec D$. One
verifies that the maximal elements of any basis~$\vec B$ of $\vec D$
still form a basis, which only depends on $\vec D$. It is minimal for
inclusion among all bases, and is called \emph{the minimal basis} of
$\vec D$. Of course, not all sets admit a
basis. By~\cite{Finkel&Goubault-Larrecq:09:a,Finkel&Goubault-Larrecq:09:b},
any downward closed set~$\vec D\subseteq\setN^d$ admits a basis. This extends to
any downward closed set $\vec D$ of $\setN^d_\omega$. Indeed, one can check
that
\begin{equation}
  \label{eq:4}
  \Lim\vec D=\Lim(\vec{D}\cap\setN^d),
\end{equation}
so that a basis $\vec B$ of the downward closed set $\vec{D}\cap\setN^d$
satisfies $\Lim \vec D=\down{} \vec B$. Note that conversely, if $\vec
B\subseteq\setN^d_\omega$ is \emph{finite}, then $\down{}\vec B$ is limit~closed (this
may fail if $\vec B$ is infinite). Finally, the limit and downward
closure operators commute:
\begin{equation}
  \label{eq:5}
  \dc\Lim\vec M=\Lim\dc\vec M
\end{equation}

\paragraph{\bfseries Upward closed sets.} 
If~$\preceq$ is a well ordering over $X$ (see
Sec.~\ref{sec:set-limits-reachable} page~\pageref{wqo}), then for any
upward closed set $Y \subseteq X$, there exists a finite set $B \subseteq Y$ such that
$Y = \uc[\preceq] B$. Such a set is again called a basis (as for downward
sets, but there will be no ambiguity). Observe that contrary to the
case of downward closed sets, no topological completion is needed
here.

\begin{example}      
  Consider the set $\vec D=\bigl\{(x,y)\in\setN^2\mid x\leq3\lor
  y\leq1\bigr\}\cup\bigl\{(4,2),(4,3),(5,2)\bigr\}$, which is downward
  closed. It is represented by the greyed grayed area in
  \figurename~\ref{fig:exmaple-limit}. Its limit closure~is
  \begin{math}
    \Lim\vec D=\vec D\cup\bigl(\{0,1,2,3\}\times\{\omega\}\bigr)\cup\{\omega\}\times\{0,1\}.
  \end{math}
  A non-minimal basis of $\vec D$ is $(\Lim\vec D\setminus\vec
  D)\cup\{(4,3),(5,2)\}$, shown with dots \basiselt and \minbasiselt in
  \figurename~\ref{fig:exmaple-limit}, where elements involving~$\omega$
  fall beyond the grid. Its minimal basis is
  $\{(3,\omega),(4,3),(5,2),(\omega,1)\}$ (circled~\minbasiselt in
  \figurename~\ref{fig:exmaple-limit}). The minimal basis of its
  (upward closed) complement in $\setN^d$ is $\{(4,4),(5,3),(6,2)\}$.
  \begin{figure}[htpb]
    \begin{center}
      \def\nd#1#2{\fill (#1,#2) circle (3.5pt);}%
      \def\cir#1#2{\fill (#1,#2) circle (5pt);}%
      \def\cirq#1#2{\draw (#1,#2) circle (9pt);}%
      \def\sq#1#2{\draw (#1-.2,#2-.2) rectangle (#1+.2,#2+.2);}%
      \begin{tikzpicture}[scale=.3,grid/.style ={black!50}]
        \draw[grid] (0,0) grid (10.5,5.5); \draw[->,>=stealth']
        (-.5,0)--(11,0); \draw[->,>=stealth'] (0,-.5)--(0,6);
        \draw[color=black,fill=black,join=round,opacity=.2]
        (10.2,0)--(0,0)--(0,5.2)--(3,5.2)--(3,3)--(4,3)--(4,2)--(5,2)--(5,1)--(10.2,1);
        \cir {13.5} 0\cir {13.5} 1\cir5 2\cir4 3\cir 0 8\cir 1 8\cir 2
        8\cir 3 8 \cirq 3 8 \cirq 4 3\cirq 5 2\cirq {13.5}1;
        \draw(12,0) node {$\cdots$};
        \draw(12,1) node {$\cdots$};
        \draw(12,-1) node {$\cdots$};
        \draw(0,7.2) node {$\vdots$};
        \draw(1,7.2) node {$\vdots$};
        \draw(2,7.2) node {$\vdots$};
        \draw(3,7.2) node {$\vdots$};
        \draw(-1,1) node {\small\smaller 1};
        \draw(-1,2) node {\small\smaller 2};
        \draw(-1,3) node {\small\smaller 3};
        \draw(1,-1) node {\small\smaller 1};
        \draw(2,-1) node {\small\smaller 2};
        \draw(3,-1) node {\small\smaller 3};
        \draw(4,-1) node {\small\smaller 4};
        \draw(5,-1) node {\small\smaller 5};
        \draw(-1,8) node {\small\smaller$\omega$};
        \draw(13.5,-1) node {\small\smaller $\omega$};
      \end{tikzpicture}
    \end{center}  
    \caption{A set $\vec D$ (grayed), elements of a basis (\basiselt and \minbasiselt)
      and of its minimal basis (\minbasiselt)}
    \label{fig:exmaple-limit}
  \end{figure}
\end{example}

\section{Vector Addition Systems}
\label{sec:vas}

\begin{definition}
  A \emph{Vector Addition System with one zero-test} (shortly \vasz)
  of dimension~$d$ is a tuple $\VV = \tup{A, \az, \delta, \ini}$, where $A$
  is a finite alphabet of \emph{actions}, $\az \not\in A$ is called the
  \emph{zero-test}, $\delta : A \cup \{\az\} \to \setZ^d$ is a mapping, and $\ini \in
  \setN^d$ is the \emph{initial state}.
\end{definition}

Other equivalent formalisms exist, for instance with states, or with
multiple zero-tests transitions that test the same counter for
zero. For now, we stick to the simplest version, and we shall
introduce states in Section \ref{sec:rcsrp}.

\medskip

Intuitively, a \vasz works with $d$ counters, one for each component,
whose initial values are given by \ini. Executing action $a \in A \cup \{
\az \}$ translates the counter values according to $\delta(a) \in \setZ^d$. The
mapping $\delta$ extends to a monoid morphism $\delta: (A \cup \{ \az \})^*\to\setZ^d$, so
that $\delta(\varepsilon)=\vec 0$ and $\delta(uv) = \delta(u) + \delta(v)$ for $u,v \in (A \cup
\{\az\})^*$. More formally, a \vasz $\VV = \tup{A, \az, \delta, \ini}$ of
dimension $d$ induces a transition relation ${\to} \subseteq \setN^d \times A \times \setN^d$
with:
\begin{equation}
  \label{eq:vasz}
  \left\{
    \begin{array}{rcll}
    \vec x \xrightarrow{~a~} \vec y & \text{if} & \delta(a) = \vec y - \vec x
    & \text{~~~for all $a \in A$} \\[1ex]
    \vec x \xrightarrow{\;\az\;} \vec y & \text{if} & 
    \delta(\az) = \vec y-\vec x, \text{ and } \vec x(1) = 0.
  \end{array}
  \right.
\end{equation}
We extend this relation to words by $\vec
x \ru{~\varepsilon~} \vec x$ and $\vec x \ru{~uv~} \vec z$ if there exists $\vec
y$ such that $\vec x \ru{~u~} \vec y \ru{~v~} \vec z$. We say that $u \in (A
\cup \{\az\})^*$ is \emph{fireable} from $\vec x$ if there exists $\vec y$
such that $\vec x \ru{~u~} \vec y$. When there may be ambiguity on the
\vasz, we will write $\xrightarrow{~u~}_\VV$ instead of
$\xrightarrow{~u~}$.

\begin{definition}
  A \emph{Vector Addition System (VAS) of dimension $d$} is a tuple
  $\tup{A, \delta, \ini}$, where $A$ is a finite alphabet, $\delta : A \to 
  \setZ^d$ is a mapping and $\ini\in\setN^d$ is the \emph{initial state}.
\end{definition}

A VAS is a particular \vasz: choosing $\az \not\in A$, this VAS is formally
equivalent to the \vasz $\tup{A, \az, \delta', \ini}$, where $\delta'$
extends $\delta$ by $\delta'(\az) = (-1, 0, ..., 0)$ (\ie, \az can never be fired).

\medskip For a \vasz or a VAS \VV of dimension~$d$, the
\emph{reachability set} $\poststar(\VV)$ and the \emph{cover}
$\cover{}(\VV)$ of $\VV$ are the following subsets of $\setN^d$:
\begin{align*}
  \poststar(\VV) & = \bigl\{ \vec y \in\setN^d\mid \exists u \in (A \cup \{\az\})^*, 
  \ini \xrightarrow{~u~} \vec y \bigr\}, \\
  \cover{}(\VV) & = \down{}\poststar(\VV).
\end{align*}

We call elements of $\poststar(\VV)$ \emph{reachable states} (also
called reachable markings in related work). The reachability (resp.\
coverability) problem consists in deciding membership in
$\poststar(\VV)$ (resp.\ in $\cover{}(\VV)$). Reachability is
decidable for VAS \cite{Mayr:81,Kosaraju:82,LEROUX-POPL2011} and
\vasz~\cite{Reinhardt:08,MFCS:11}.

\begin{theorem}
  \label{thm:reach}
  Given a VAS or \vasz $\VV$, the reachability problem for $\VV$ is
  decidable.
\end{theorem}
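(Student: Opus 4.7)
The plan is to treat this as a citation of two well-established results, one for \vas and one for \vasz, and sketch how each is obtained. For ordinary \vas, decidability of reachability was proved by Mayr and Kosaraju via the KLM decomposition of perfect sequences, and more recently by Leroux using a Presburger separability argument. The strategy I would follow is Leroux's: I would enumerate in parallel (i) all candidate firing sequences $u \in A^*$ witnessing $\ini \ru{u} \vec y$, and (ii) all candidate Presburger formulas $\varphi$ defining a set that is \emph{invariant} under every action and that separates $\vec y$ from \ini. The semi-decision procedures for the two sides are both clear; the non-trivial content is to prove that one of them must terminate, i.e.\ that whenever $\vec y \notin \poststar(\VV)$, there exists a Presburger invariant certifying it. This is the main obstacle and is the heart of Leroux's argument.

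For \vasz, I would reduce reachability to the already decidable \vas case, following Reinhardt, with the streamlined variant of Bonnet. A run of a \vasz decomposes as a finite alternation of ordinary \vas phases separated by firings of \az. Each firing of \az is only legal when the first counter is exactly $0$. The reduction replaces the zero-tested counter by an auxiliary gadget that simulates the ``value~$=~0$'' guard using only monotonic operations: one introduces a complementary counter and uses reachability queries to ordinary \vas to check that the phase between two consecutive firings of \az indeed brings the first component back to zero. Concatenating the encodings of the individual phases produces a \vas $\VV'$ such that a target is reachable in $\VV$ iff a corresponding target is reachable in $\VV'$, and then one invokes decidability of \vas reachability to conclude.

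The hard part in the second step is faithfully simulating the zero-test without actually having one: one must ensure that no ``spurious'' phase is accepted in which the first counter has positive value when \az is supposed to fire. This requires a careful invariant ensuring that the complementary counter, together with the original one, forms a conserved quantity whose decrease by exactly the right amount certifies the zero condition. Once this simulation is proved sound and complete, Theorem~\ref{thm:reach} follows from the decidability of \vas reachability combined with the reduction. I would not redo either argument in detail here; rather I would cite \cite{Mayr:81,Kosaraju:82,LEROUX-POPL2011} for the \vas case and \cite{Reinhardt:08,MFCS:11} for the \vasz case, since both results are used as a black box in the sequel.
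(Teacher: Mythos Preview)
Your conclusion matches the paper exactly: Theorem~\ref{thm:reach} is stated without proof, immediately after the sentence citing \cite{Mayr:81,Kosaraju:82,LEROUX-POPL2011} for \vas and \cite{Reinhardt:08,MFCS:11} for \vasz. It is used as a black box throughout, so ``just cite it'' is precisely what the paper does.

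That said, the sketch you offer for the \vasz case is not how either cited proof actually works, and as written it would not go through. You describe a reduction that ``concatenates the encodings of the individual phases'' into a single \vas $\VV'$, using a complementary counter to certify that the tested counter is zero at each firing of $\az$. The difficulty is that the number of firings of $\az$ along a run is unbounded, so there is no fixed finite sequence of phases to concatenate; and the complementary-counter trick requires a fixed conserved sum, which you do not have when the tested counter can grow arbitrarily between zero-tests. No such direct polynomial reduction from \vasz reachability to a single \vas reachability instance is known. Reinhardt's proof~\cite{Reinhardt:08} instead extends the Kosaraju--Mayr decomposition to handle the zero-test internally, and Bonnet's simplification~\cite{MFCS:11} is much closer in spirit to what you wrote for the \vas case: it adapts Leroux's Presburger-invariant certificate argument to \vasz, showing that non-reachability is still witnessed by a Presburger inductive invariant even in the presence of one zero-test. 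If you want to sketch something, that is the line to take; otherwise, simply citing the references, as the paper does, is entirely adequate here.
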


Testing membership in the cover set is much easier, and one even gets
a more precise
result~\cite{Karp&Miller:69,Finkel&Schnoebelen:01,Finkel&Goubault-Larrecq:09:b}:

\begin{theorem}
  \label{thm:cover}
  Given a VAS \VV, one can effectively compute a (finite) basis of 
  $\cover{}(\VV)$. 
\end{theorem}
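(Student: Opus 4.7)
The plan is to construct the classical Karp-Miller tree for $\VV$ and to read off a basis of $\cover{}(\VV)$ from it. I would maintain a finite, rooted, finitely-branching tree whose nodes are labeled by elements of $\setN_\omega^d$, with the root labeled by $\ini$. The transition relation is extended to $\setN_\omega^d$ by declaring $\omega + n = \omega$ for every $n \in \setZ$, so that an action $a \in A$ is fireable from $\vec{x} \in \setN_\omega^d$ whenever $\vec{x} + \delta(a) \geq \vec{0}$ componentwise. From a leaf labeled $\vec{x}$ and for every fireable $a$, I compute $\vec{y} = \vec{x} + \delta(a)$, and then apply the \emph{acceleration}: if some ancestor of the current node on the path from the root is labeled by $\vec{z}$ with $\vec{z} \leq \vec{y}$, I replace each coordinate $i$ such that $\vec{z}(i) < \vec{y}(i)$ by $\omega$ in $\vec{y}$, obtaining $\vec{y}'$. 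The new node is attached as a child, unless $\vec{y}'$ already appears as a label on the path from the root, in which case the branch is closed.

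Termination is the first thing to check. The tree is finitely branching, since $A$ is finite. For termination it suffices, by König's lemma, to rule out infinite branches. Along any infinite branch, one would obtain an infinite sequence $(\vec{x}_n)_n$ in $\setN_\omega^d$ such that no $\vec{x}_m$ is dominated by a later $\vec{x}_n$ with $m < n$; but since $(\setN_\omega^d, \leq)$ is a well-quasi-order (an immediate consequence of Dickson's lemma, extending the ordering by $n \leq \omega$ as introduced in Section~\ref{sec:prelim}), such an infinite bad sequence cannot exist. Hence the construction halts and produces a finite tree.

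Correctness then decomposes into two inclusions. For \emph{soundness}, I would show by induction on the depth at which a label $\vec{x}$ appears that $\dc\{\vec{x}\} \cap \setN^d \subseteq \cover{}(\VV)$. The only nontrivial case is acceleration: if $\vec{z} \ru{~u~} \vec{y}$ along the path and $\vec{z} \leq \vec{y}$, monotony of VAS guarantees that $u$ can be fired arbitrarily many times starting from $\vec{z}$, so the coordinates where $\vec{y}$ strictly dominates $\vec{z}$ can be made to grow unboundedly in genuine reachable states, justifying the $\omega$'s and, via~\eqref{eq:4} and~\eqref{eq:5}, membership of every integer vector below the accelerated label in $\cover{}(\VV)$. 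For \emph{completeness}, I would prove by induction on the length of a firing sequence $\ini \ru{~u~} \vec{r}$ in $\setN^d$ that $\vec{r}$ is dominated by the label of some node of the tree; the induction step uses that either the child produced for the next action is already present (closure condition) or it has been accelerated upward, so the image of $\vec{r}$ remains below a label. Taking $\vec{B}$ to be the (finite) set of maximal labels appearing in the tree then yields $\dc \vec{B} = \Lim\cover{}(\VV)$, which is a basis of $\cover{}(\VV)$ in the sense of~\eqref{eq:3}.

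The delicate point is the acceleration step: one must verify that replacing strictly-increased coordinates by $\omega$ does not introduce spurious integer elements into $\dc \vec{B}$, i.e., that every integer vector dominated by an accelerated label is genuinely in $\cover{}(\VV)$. This is precisely where monotony of VAS is essential, and it is also the reason why the same construction cannot be transplanted to $\vasz$ without further ideas: iterating the witness sequence $u$ from $\vec{z}$ may go through configurations violating the zero-test, which, as emphasized in the introduction, motivates the refined and filtered covers developed later in the paper.
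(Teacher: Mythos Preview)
The paper does not actually prove Theorem~\ref{thm:cover}: it is quoted as a known result with references to~\cite{Karp&Miller:69,Finkel&Schnoebelen:01,Finkel&Goubault-Larrecq:09:b}, and no proof is given in the paper itself. Your approach---building the classical Karp-Miller tree and reading off a basis from its labels---is exactly the construction underlying those references, so in spirit you are reproducing the intended argument.

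That said, your termination argument contains a genuine gap. You assert that along an infinite branch one would obtain a sequence with \emph{no} $\vec{x}_m \leq \vec{x}_n$ for $m<n$, and then invoke the well-quasi-ordering of $(\setN_\omega^d,\leq)$ for a contradiction. But your own description of the algorithm closes a branch only when the \emph{same} label repeats, not when a smaller one appears below a larger one; domination $\vec{x}_m \leq \vec{x}_n$ with $m<n$ is certainly allowed along a branch---indeed it is precisely what triggers acceleration. The correct argument (and the one the paper itself uses later, in the proof of Proposition~\ref{prop:algo-correctness}, for its \vasz variant) runs as follows: by K\"onig's lemma an infinite tree has an infinite branch; along it the labels are pairwise distinct (by your closure condition), so the well-quasi-order yields an infinite \emph{strictly} increasing subsequence; but each strict increase $\vec{x}_m < \vec{x}_n$ forces acceleration to turn at least one finite coordinate into~$\omega$, and since the set of $\omega$-coordinates is non-decreasing along a branch and bounded by~$d$, only finitely many strict increases are possible---a contradiction. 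Once you fix this step, the soundness and completeness sketches you give are the standard ones and are fine.
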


Observe that given a (finite) basis $\vec B$ of a downward closed set
$\vec D\subseteq\setN^d$, one can effectively test membership in~$\vec D$, since
$\vec D=\setN^d\cap\dc \vec B$ by \eqref{eq:2} and~\eqref{eq:3}. Therefore,
Theorem~\ref{thm:cover} implies that one can effectively decide
membership in $\cover{}(\VV)$.

Computing a finite basis of the cover makes it also possible to decide
whether two VAS have the same cover, since from a finite basis, one
can also compute the minimal basis, which is canonical. Likewise, one
can decide inclusion of covers. Finally, Theorem~\ref{thm:cover}
implies that one can decide \emph{place-boundedness}, that is, whether
the projection of $\poststar(\VV)$ on some given component is
bounded. In the next three sections, we shall show that one can also
effectively compute a finite basis for the cover of a $\vasz$.

\section{Limits of reachable states of a VAS}
\label{sec:set-limits-reachable}

As observed above, for $\vec M \subseteq \setN^d$, one can immediately construct
an algorithm deciding membership in $\vec M$ from an algorithm
deciding membership in $\Lim \vec M$, since $\vec M = \setN^d \cap \Lim \vec
M$ by~\eqref{eq:2}. However, the converse is not true. Let us explain
two reasons for this.

\begin{enumerate}[leftmargin=*,itemsep=.5ex,label={\!\!$\alph*.$\!},ref=$\alph*$]
\item First, even if $\vec M$ is recursive, it may happen that $\Lim
  \vec M$ is not. We recall here an example from
  \cite[Prop.~2.4]{Finkel&McKenzie&Picaronny:well-structured-framework-analysing-petri:2004:b}.
  Let $T_0,T_1,\ldots$ be an effective enumeration of Turing machines. Let
  $\alpha(k,\ell)=\bigl|\{j\leq k\mid T_j\text{ halts in at most $\ell$ steps on
    $\varepsilon$}\}\bigr|$ and $\vec M=\bigl\{(k,\ell, \alpha(k,\ell))\mid k,\ell\geq0\bigr\}$. It is
  easy to describe an algorithm computing $\alpha(k,\ell)$ given $k,\ell\in\setN$, and
  therefore also an algorithm to decide membership in~$\vec
  M$. However, $\Lim\vec M$ is not recursive, since the halting
  problem reduces to it. Indeed, $(k,\omega,m)\in\Lim\vec M$ means that
  exactly $m$ machines among $T_0,\ldots,T_k$ halt on the empty
  word. Therefore, $T_k$ halts on $\varepsilon$ if and only if there exists $m\leq
  k+1$ such that $(k,\omega,m)\in\Lim\vec M$ and $(k-1,\omega,m-1)\in\Lim\vec M$.

\item\label{item:1} Second, even if $\Lim \vec M$ is recursive, one
  may not be able to effectively derive an algorithm deciding
  membership in $\Lim\vec M$ from a description of $\vec M$ (such as a
  data structure, or an algorithm deciding membership in $\vec M$). As
  an example, consider the reachability~set~$\vec M$ of a lossy
  counter machine (see again~\cite{Mayr:03}, or
  \cite{Schnoebelen:Lossy-counter-machines-decidability:2010:a} for a
  survey). An algorithm to decide membership of $\vec x$ in $\vec M$
  is to compute the bases of the upward closed sets
  $\text{Pre}^i(\uc\vec x)$ for $i=0,1,2,... $, where $\text{Pre}(X)$
  denotes the set of predecessors of $X$. The sequence stabilizes,
  since it consists only of upward closed sets.  Moreover, due to the
  lossy behavior, $\vec M$ is downward closed. Therefore, it admits a
  finite basis $\vec B$, so that $\Lim\vec M=\dc\vec B$ is
  recursive. However, there is no algorithm taking as input a lossy
  counter machine and a vector $\vec x\in\setN^d_\omega$, and deciding membership
  of $\vec x$ in $\Lim\vec M$, where $\vec M$ is the reachability
  set. Indeed, the set $\vec M$ is infinite if and only if $\Lim \vec
  M$ contains some vector of $\setN^d_\omega$ having at least an
  $\omega$-component. Therefore, the existence of such an algorithm would
  imply that the boundedness problem (\ie, whether the reachability
  set is finite) is co-recursively enumerable, which is not the case:
  boundedness for lossy counter machines is $\Sigma_1^0$-complete.
\end{enumerate}

\medskip\noindent The main result of this section considers the case
where $\vec M$ is the reachability set of a~\vas~\VV.  Since
$\cover{}(\VV)=\dc\poststar(\VV)=\setN^d\cap \Lim\dc\poststar(\VV)=\setN^d\cap
\dc\Lim\poststar(\VV)$ (where the last two equalities follow
from~\eqref{eq:2} and \eqref{eq:5}), one can by
Theorem~\ref{thm:cover} effectively compute a basis of
$\dc\Lim\poststar(\VV)$. However, since $\Lim\poststar(\VV)$ is not
necessarily downward closed, this does not directly entail an
algorithm for deciding membership in this set.

\begin{theorem}\label{thm:reclim}
  Given a \vas \VV and $\vec x\in\setN^d_\omega$, one can decide whether $\vec x\in\Lim \poststar(\VV)$.
\end{theorem}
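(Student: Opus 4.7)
My plan is to reduce the decision of $\vec x\in\Lim\poststar(\VV)$ to the \vas reachability problem (Theorem~\ref{thm:reach}). Set $J=\{j:\vec x(j)\in\setN\}$ and $I=\{1,\dots,d\}\setminus J$. If $I=\emptyset$ then $\vec x\in\setN^d$ and, by~\eqref{eq:2}, $\vec x\in\Lim\poststar(\VV)$ if and only if $\vec x\in\poststar(\VV)$, so decidability is immediate from Theorem~\ref{thm:reach}. Assume henceforth $I\neq\emptyset$.

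The core step I would establish is a \emph{pumping characterization}: $\vec x\in\Lim\poststar(\VV)$ if and only if there exist $\vec y_0,\vec y_1\in\setN^d$ and $u,v\in A^*$ with $\ini\ru{u}\vec y_0\ru{v}\vec y_1$ in~$\VV$, $\vec y_0(j)=\vec y_1(j)=\vec x(j)$ for all $j\in J$, and $\vec y_0(i)<\vec y_1(i)$ for all $i\in I$. The ``if'' direction is elementary: one has $\delta(v)(j)=0$, $\delta(v)(i)\geq 1$ for $i\in I$, and $\delta(v)\geq\vec 0$ throughout; hence iterating gives $\vec y_0\ru{v^n}\vec y_0+n\delta(v)$, a sequence of reachable states whose $J$-components stay equal to $\vec x(j)$ while each $I$-component tends to infinity, and whose limit is therefore~$\vec x$.

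For the ``only if'' direction I use Higman's lemma on $(\setN^d)^*$ with a nontrivial WQO on letters tailored to~$\vec x$: define $\vec y\preceq_{\vec x}\vec z$ iff $\vec y\leq\vec z$ componentwise and, for each $j\in J$ with $\vec y(j)\leq\vec x(j)$, one also has $\vec y(j)=\vec z(j)$. This is a WQO on $\setN^d$: given an infinite sequence, Dickson yields a coordinatewise non-decreasing subsequence; for each $j\in J$, either the $j$-th coordinate eventually exceeds $\vec x(j)$ (making the equality clause vacuous) or it stays in the finite set $\{0,\dots,\vec x(j)\}$, where pigeonhole forces a constant value on a further subsequence. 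Now take reachable $\vec z_n\to\vec x$ realised by runs $\ini\ru{w_n}\vec z_n$ with visited-state trajectories $T_n\in(\setN^d)^*$. Higman's lemma applied to $(T_n)_n$ under~$\preceq_{\vec x}$ extracts indices $m<n$ with $T_m$ embedded in $T_n$; the embedding aligns positions along the two runs, and combining it with the commutativity of~$\delta$ and the monotony of \vas reachability, one reorganises $w_n$ into $w_m$ followed by a word $v$ whose effect is zero on $J$-components and strictly positive on all $I$-components, producing the witnesses $\vec y_0=\vec z_m$ and $\vec y_1=\vec z_n$.

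Finally, I reduce the pumping condition to \vas reachability. Construct a \vass $\VV^*$ of dimension~$2d$ with control states $q_1,q_2,q_3,q_f$ operating on two copies $(\vec a,\vec b)$ of $\VV$'s counters: from $(\ini,\ini)$, phase $q_1$ simulates $\VV$ on both copies in lockstep; a silent transition $q_1\to q_2$ freezes $\vec a$; phase $q_2$ simulates~$\VV$ on $\vec b$ only; and phase $q_3$ performs, for each $i\in I$, a lockstep decrement of $\vec a(i),\vec b(i)$ down to $\vec a(i)=0$, then a single mandatory extra decrement of $\vec b(i)$ (encoding the strict inequality), then a free drain of $\vec b(i)$. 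The specific target marking in $q_f$ has both $\vec a(j)=\vec b(j)=\vec x(j)$ for $j\in J$ and $\vec a(i)=\vec b(i)=0$ for $i\in I$; its reachability in $\VV^*$ is equivalent to the pumping condition and reduces to \vas reachability, hence is decidable by Theorem~\ref{thm:reach}. The hard part of the whole argument is the necessity of the pumping characterization: choosing the WQO~$\preceq_{\vec x}$ so that Higman's lemma applies, and then turning the abstract embedding it yields into a concrete VAS run from~$\vec z_m$ to~$\vec z_n$ via a commutativity/monotony reorganisation.
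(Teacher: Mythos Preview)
Your pumping characterization is false, and this is the central gap. The ``only if'' direction fails: there are VAS where $\vec x\in\Lim\poststar(\VV)$ but no pair $\vec y_0,\vec y_1$ as you describe exists. Take dimension~$4$, $\ini=(1,0,0,0)$, and actions $a=(-1,1,1,0)$, $b=(1,-1,0,0)$, $c=(0,0,-1,1)$, $d=(0,-1,0,0)$. Then $(ab)^k\,a\,d\,c^{k+1}$ is fireable and reaches $(0,0,0,k+1)$, so $\vec x=(0,0,0,\omega)\in\Lim\poststar(\VV)$. Here $J=\{1,2,3\}$, and every reachable state whose $J$-components all vanish is of the form $(0,0,0,p)$ with $p\geq 1$; but from $(0,0,0,p)$ no action is fireable at all, so there is no $v$ with $\vec y_0\ru{v}\vec y_1$. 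Hence your equivalence breaks.

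The step that fails is precisely the ``reorganisation'' you invoke: commutativity of $\delta$ and monotony do \emph{not} let you rearrange $w_n$ into $w_m$ followed by a fireable $v$. Knowing $\delta(v)=\vec z_n-\vec z_m$ says nothing about whether any word with that displacement is fireable from $\vec z_m$. Applying Higman to the state trajectories $T_n\in(\setN^d)^*$ (with letters ordered by~$\preceq_{\vec x}$) is also too weak: the resulting embedding aligns states but not actions, so it does not give you a decomposition of $w_n$ compatible with~$w_m$. The paper avoids both pitfalls by applying Higman to sequences over $A\times\setN^d$ (matching actions \emph{and} dominating states), which yields a decomposition $w_n=u_0a_1u_1\cdots a_ku_k$ with $w_m=a_1\cdots a_k$; crucially, the pump pieces $u_0,\ldots,u_k$ remain \emph{interspersed} with the base path rather than pushed after it. The resulting ``productive sequence'' condition (Lemma~\ref{lem:caraprod}) is strictly more permissive than your single trailing loop, and it is exactly this interspersing that handles examples like the one above. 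The paper then pairs this r.e.\ characterization of $\Lim\poststar(\VV)$ with a separate co-r.e.\ argument (Lemma~\ref{lem:co-re}) rather than a single reachability query.
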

We establish Theorem~\ref{thm:reclim} by describing two
semi-algorithms proving that $\Lim\poststar(\VV)$ and its complement
in $\setN_\omega^d$ are both recursively enumerable sets.
%
%
Let us start with the most interesting direction. We shall prove that
$\Lim\poststar(\VV)$ is recursively enumerable, by introducing
\emph{productive sequences}, a notion inspired by
Hauschildt~\cite{Hauschildt:90}.
\begin{definition}
  \label{def:productive}
  Let $\VV = \tup{A,\delta,\ini}$ be a VAS, and let $\pi=(u_i)_{0\leq i\leq k}$ be
  a sequence of words over $A$. We say that $\pi$ is \emph{productive}
  in \VV for a word $v=a_1\cdots a_k\in A^*$ if the words
  \begin{equation*}
    u_0^na^{}_1u_1^n\cdots a^{}_ku_k^n,\qquad n\geq1
  \end{equation*}
  are all fireable from \ini. 
\end{definition}

In particular, if $\pi$ is productive for $v$, the state $\ini + \delta(v)+
n\delta(\pi)$ is a reachable state in \VV, where
$\delta(\pi)=\sum_{i=0}^k\delta(u_i)$. Definition~\ref{def:productive} shows that the
set $\bigl\{(\pi,v)\mid\pi\text{ productive in \VV for }v\bigr\}$ is
co-recursively enumerable. The following characterization immediately
gives an algorithm to decide membership in this set, showing that it
is actually recursive.

\begin{lemma}
  \label{lem:caraprod}
  A sequence $\pi=(u_i)_{0\leq i\leq k}$ is \emph{productive} in \VV for a
  word $a_1\cdots a_k$ if and only~if 
  \begin{enumerate}[label=\quad$(\arabic*)$,ref=$(\arabic*)$]
  \item\label{item:2} the partial sums $\delta(u_0)+\cdots+\delta(u_j)$ are nonnegative for 
    every $j\in\{0,\ldots,k\}$, and
  \item\label{item:3} the word $u_0 a_1 u_1 \cdots a_k u_k$ is fireable from \ini.
  \end{enumerate}
\end{lemma}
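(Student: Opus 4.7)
The plan is to prove the two directions separately, with monotonicity of VAS firings doing the work in the nontrivial direction.

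For the forward direction, I would first take $n=1$ in the productivity assumption, which is condition (2) verbatim. For condition (1), fix $j\in\{0,\ldots,k\}$. Since $u_0^n a_1 u_1^n \cdots a_j u_j^n$ is a prefix of a fireable word for every $n\geq 1$, the state it reaches, namely
\[
  \ini + \delta(a_1\cdots a_j) + n\bigl(\delta(u_0)+\cdots+\delta(u_j)\bigr),
\]
lies in $\setN^d$. The constant term is fixed, so letting $n$ grow forces $\delta(u_0)+\cdots+\delta(u_j)$ to be componentwise nonnegative.

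For the backward direction, assume (1) and (2) and fix $n\geq 1$; I want $w^{(n)} = u_0^n a_1 u_1^n \cdots a_k u_k^n$ to be fireable from \ini. I would compare the iterated execution with the base execution of $w = u_0 a_1 u_1 \cdots a_k u_k$, which is available by (2), and use monotonicity of VAS firings: if a word is fireable from $\vec x$, it is fireable from any $\vec y \geq \vec x$, with correspondingly larger intermediate states. Write $\sigma_i = \delta(u_0)+\cdots+\delta(u_i)$, with the convention $\sigma_{-1}=\vec 0$; each $\sigma_i$ is nonnegative by (1). Let $s_i$ be the state just before firing $u_i$ in the base execution, and let $s_i^{(n,j)}$ be the state just before the $(j{+}1)$-st firing of $u_i$ inside the block $u_i^n$ of the iterated execution, for $0\leq j\leq n$.

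The crucial step is the algebraic identity
\[
  s_i^{(n,j)} - s_i \;=\; (n-1)\,\sigma_{i-1} + j\,\delta(u_i)
  \;=\; (n-1-j)\,\sigma_{i-1} + j\,\sigma_i,
\]
which is nonnegative for $0\leq j\leq n-1$ thanks to (1). Monotonicity then shows that every individual firing of $u_i$ inside $u_i^n$ is legal and keeps all intermediate states nonnegative. A parallel identity $s_{i+1}^{(n,0)} - s_{i+1} = (n-1)\,\sigma_i \geq \vec 0$ handles the single separating actions $a_{i+1}$. Chaining these observations along $i = 0, 1, \ldots, k$ yields fireability of $w^{(n)}$ from \ini, as required. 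The only real pitfall is bookkeeping the indices in the rearrangement of partial sums; beyond that, no substantial obstacle appears, since monotonicity does the rest automatically.
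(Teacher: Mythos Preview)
Your proposal is correct and follows essentially the same route as the paper. Both directions match: the forward direction is identical, and for the backward direction the paper also reduces to the rearrangement $(n-1-j)\sigma_{i-1}+j\sigma_i\geq\vec 0$ (in its notation, $(n-p-1)\vec x_{j-1}+p\vec x_j$), the only cosmetic difference being that the paper checks every prefix of $w^{(n)}$ directly---including prefixes that cut inside a copy of $u_i$---whereas you check only block boundaries and invoke monotonicity to cover the interior states.
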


\begin{proof}
  Let us introduce the states $\vec{y}_0=\ini$ and
  $\vec{y}_j=\ini+\delta(a_1\cdots a_j)$ for $j\in\{1,\ldots,k\}$, and the partial sums
  $\vec x_{-1}=\vec 0$ and $\vec{x}_j=\delta(u_0)+\cdots+ \delta(u_j)$ for
  $j\in\{0,\ldots,k\}$. We put $u[-1,n]=\varepsilon$, $u[j,n]=u_0^na^{}_1u_1^n\cdots
  a^{}_ju_j^n$ for $j\geq0$, and $v[j,n]=u[j,n]a_{j+1}$.

  \smallskip If $\pi$ is productive for $v=a_1\cdots a_k$, then $u[k,n]$ is
  fireable from $\ini$ for all $n\geq1$. Therefore, $u[j,n]$ is also
  fireable from $\ini$ for $j\leq k$. We deduce that
  $\ini+\delta(u[j,n])=\vec{y}_j+n\vec{x}_j$ is nonnegative for every
  $n\in\setN$. In particular $\vec{x}_j\geq \vec0$.  We have
  proved~\ref{item:2}, and~\ref{item:3} is obvious.

  Conversely, assume that~\ref{item:2} and \ref{item:3} both hold. For
  all $n\geq1$, we have to show that $u[k,n]$ is fireable from $\ini$,
  \ie, that $\ini+\delta(w)\geq\vec0$ for any nonempty prefix $w$ of
  $u[k,n]$. Such a prefix is of the form $v[j-1,n]u_j^pu'_j$ for some
  $0\leq j\leq k$, $0\leq p<n$, and some prefix $u'_j$ of $u^{}_j$. By
  rearranging terms, we obtain
  \begin{align*}
    \ini+ \delta(v[j-1,n]u_j^pu'_j) &=\ini+\delta(u_0^na^{}_1u_1^n\cdots a^{}_{j-1}u_{j-1}^na_ju_j^pu'_j)\\
    &=\ini + \delta(u_0a_1u_1\cdots a_ju'_j)+ (n-1)\vec{x}_{j-1}+\delta(u_j^p)\\
    &=\ini + \delta(u_0a_1u_1\cdots a_ju'_j)+ (n-p-1)\vec{x}_{j-1}+p\vec{x}_j.
  \end{align*}
  By~\ref{item:2}, we have $\vec{x}_{j-1},\vec{x}_j\geq\vec0$.
  By~\ref{item:3}, the word $u_0a_1u_1\cdots a_ku_k$ is fireable
  from~$\ini$, and in particular, $\ini+\delta(u_0a_1u_1\cdots a_ju'_j)\geq \vec
  0$. Therefore, $\ini+\delta(v[n,j-1]u_j^pu'_j)\geq\vec0$, which proves that
  $u_0^na^{}_1u_1^n\cdots a^{}_ku_k^n$ is fireable from $\ini$. We have
  shown that $\pi$ is productive for~$a_1\cdots a_k$.
\end{proof}

We will now show in Proposition~\ref{prop:re} below that limits of
reachable states are witnessed by productive sequences. Its essential
argument is Higman's Lemma. We recall that an ordering $\preceq$ is
\emph{well} if every infinite sequence $(\ell_n)_{n\in\setN}$ admits an
infinite increasing subsequence $(\ell_{n_k})_{k\in\setN}$: $\ell_{n_0}\preceq \ell_{n_1}\preceq
\ell_{n_2}\preceq\cdots$.\label{wqo} The pointwise ordering over $\setN^d$ or over
$\setN^d_\omega$ is well (Dickson's~Lemma).

\paragraph{\bfseries Higman's Lemma.} Let $\Sigma$ be a (possibly infinite)
set.  Given an ordering $\preceq$ over $\Sigma$, let $\embed{\preceq}$ be the ordering
over $\Sigma^*$ defined as follows: for $u,v\in \Sigma^*$, we have $u\embed\preceq v$ if
$u=a_1\cdots a_n$ with $a_i\in \Sigma$, $v=v_0b_1v_1\cdots v_{n-1}b_nv_n$, with $v_i\in
\Sigma^*$, $b_j\in \Sigma$, and for all $i=1,\ldots,n$, we have $a_i\preceq b_i$. In other
words, $u$ is obtained from $v$ by removing some letters, and then
replacing some of the remaining letters by smaller ones. Higman's
Lemma is the following result. See for instance~\cite{DIE05B} for a
proof.

\begin{lemma}[Higman]
  \label{thm:higman}
  If $\preceq$ is a well ordering over $\Sigma$, then $\embed\preceq$ is a well
  ordering over~$\Sigma^*$.
\end{lemma}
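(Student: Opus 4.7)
The plan is to prove Higman's Lemma by contradiction, using the classical Nash-Williams \emph{minimal bad sequence} argument. Call an infinite sequence $(u_n)_{n\in\setN}$ of words \emph{bad} if $u_i\not\embed\preceq u_j$ for all $i<j$; I would assume for contradiction that a bad sequence exists, and then build a minimal one. Namely, I would define $u_n$ inductively as a word of minimum length such that the prefix $u_0,\ldots,u_{n-1},u_n$ can be extended to a bad sequence. A short verification, relying on the fact that badness is a pairwise property and that $\setN$ is well ordered, shows both that such a minimum exists at each stage and that the resulting sequence $(u_n)$ is itself bad.

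I would then exploit the well-orderedness of $\preceq$ on $\Sigma$. Since $\varepsilon\embed\preceq w$ for every word $w$, each $u_n$ must be nonempty, so we can write $u_n=a_nv_n$ with $a_n\in\Sigma$ and $v_n\in\Sigma^*$. Applying the hypothesis to the sequence $(a_n)_n$ yields indices $n_0<n_1<\cdots$ such that $a_{n_0}\preceq a_{n_1}\preceq a_{n_2}\preceq\cdots$ Consider next the shortened sequence
\begin{equation*}
u_0,\,u_1,\,\ldots,\,u_{n_0-1},\,v_{n_0},\,v_{n_1},\,v_{n_2},\,\ldots
\end{equation*}
Because $|v_{n_0}|<|u_{n_0}|$, minimality of $(u_n)$ prevents this new sequence from being bad, so some pair of its entries is related by $\embed\preceq$. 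The case analysis is short: an embedding inside the prefix $u_0,\ldots,u_{n_0-1}$ directly contradicts badness of $(u_n)$; an embedding $u_i\embed\preceq v_{n_j}$ with $i<n_0\leq j$ yields $u_i\embed\preceq v_{n_j}\embed\preceq u_{n_j}$, still a contradiction; and an embedding $v_{n_i}\embed\preceq v_{n_j}$ combines with $a_{n_i}\preceq a_{n_j}$ to give $u_{n_i}=a_{n_i}v_{n_i}\embed\preceq a_{n_j}v_{n_j}=u_{n_j}$. Every case is impossible, so no bad sequence exists.

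The main obstacle is justifying the minimal bad sequence construction itself: it requires dependent choice and a careful check that the property ``extends to a bad sequence'' is preserved from one step to the next, so that a word of minimum length always exists among the allowed completions. Once this framework is in place, extracting the ascending subsequence $(a_{n_k})_k$ and carrying out the three-case embedding analysis are routine.
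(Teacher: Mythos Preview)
Your argument is the classical Nash-Williams minimal bad sequence proof, and it is correct. The only minor slip is in the second case of your analysis: you write ``$i<n_0\leq j$'' where you presumably mean $i<n_0\leq n_j$ (or simply $i<n_0$, since $n_j\geq n_0$ is automatic); this does not affect the argument.

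As for comparison with the paper: the paper does not prove Higman's Lemma at all. It merely states the lemma and refers the reader to Diestel's \emph{Graph Theory} for a proof. Incidentally, the proof given there is precisely the Nash-Williams minimal bad sequence argument you have reproduced, so your proposal is not just correct but is exactly the proof the cited reference supplies.
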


We extend the multiplication over $\setN_\omega$ by $\omega\cdot0=0=0\cdot\omega$ and $\omega\cdot
k=\omega=k\cdot\omega$ if $k\neq0$. This multiplication then extends componentwise to
the scalar multiplication of $\setN^d_\omega$ by $\setN_\omega$.

\begin{proposition}
  \label{prop:re}
  Let $\VV=\tup{A,\delta,\ini}$ be a VAS. Then
  \begin{equation*}
    \Lim\poststar(\VV)=\bigl\{\ini+\delta(v)+\omega\delta(\pi)\mid 
    v \in A^* \text{and $\pi$ productive in \VV for $v$}\bigr\}.
  \end{equation*}
\end{proposition}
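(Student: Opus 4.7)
The plan is to prove both inclusions separately.

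\smallskip

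For $(\supseteq)$, I would apply Lemma~\ref{lem:caraprod} directly. If $\pi=(u_i)_{0\leq i\leq k}$ is productive for $v=a_1\cdots a_k$, each word $u_0^na_1u_1^n\cdots a_ku_k^n$ ($n\geq 1$) is fireable from $\ini$, so $\ini+\delta(v)+n\delta(\pi)\in\poststar(\VV)$. Taking $j=k$ in Lemma~\ref{lem:caraprod}\ref{item:2} gives $\delta(\pi)\geq\vec 0$, so on coordinates $i$ with $\delta(\pi)(i)=0$ the sequence is constant at $\ini(i)+\delta(v)(i)$, while on coordinates with $\delta(\pi)(i)>0$ it tends to $\omega$. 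By the conventions on $\setN_\omega$, its limit is exactly $\ini+\delta(v)+\omega\delta(\pi)$, so this vector lies in $\Lim\poststar(\VV)$.

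\smallskip

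For $(\subseteq)$, fix $\vec x\in\Lim\poststar(\VV)$, let $I=\{i:\vec x(i)=\omega\}$, and after extracting a subsequence choose fireable words $w_n=a^n_1\cdots a^n_{k_n}$ such that the reachable states $\vec x_n:=\ini+\delta(w_n)$ satisfy $\vec x_n(i)=\vec x(i)$ for every $i\notin I$ and $\vec x_n(i)\to\infty$ for every $i\in I$. My key tool is Higman's Lemma applied to annotated versions of these words. With a fresh end-marker $\#\notin A$, form
\[
W_n=(a^n_1,\vec c^n_0)(a^n_2,\vec c^n_1)\cdots(a^n_{k_n},\vec c^n_{k_n-1})(\#,\vec c^n_{k_n})
\]
over $\Sigma=(A\cup\{\#\})\times\setN^d$, where $\vec c^n_j=\ini+\delta(a^n_1\cdots a^n_j)$ is the configuration reached after the first $j$ letters of $w_n$. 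Endow $\Sigma$ with the product ordering $\preceq$ of equality on $A\cup\{\#\}$ and pointwise $\leq$ on $\setN^d$; since this is well by Dickson, Lemma~\ref{thm:higman} yields an infinite increasing subsequence $W_{n_0}\embed\preceq W_{n_1}\embed\preceq\cdots$.

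\smallskip

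Since each $W_n$ contains exactly one $\#$-letter and $\#$ is $\preceq$-incomparable with letters of $A$, the embedding $W_{n_0}\embed\preceq W_{n_k}$ must align the two $\#$-letters; its restriction to $A$-positions then factorises $w_{n_k}=u_0a_1u_1\cdots a_su_s$ with $a_1\cdots a_s=w_{n_0}$. Setting $v:=w_{n_0}$ and $\pi:=(u_0,\ldots,u_s)$, the annotation inequalities at matched positions translate, by telescoping $\delta$, into nonnegativity of every partial sum $\delta(u_0)+\cdots+\delta(u_j)$, and the fireability of $w_{n_k}=u_0a_1u_1\cdots a_su_s$ is given by construction; Lemma~\ref{lem:caraprod} then yields productivity of $\pi$ for $v$. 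Finally $\delta(\pi)=\vec x_{n_k}-\vec x_{n_0}$, and by choosing $k$ large enough that $\vec x_{n_k}(i)>\vec x_{n_0}(i)$ holds simultaneously for all $i\in I$ (possible since $I$ is finite and $\vec x_n(i)\to\infty$ on $I$), this vector vanishes outside $I$ and is strictly positive on $I$, so $\ini+\delta(v)+\omega\delta(\pi)=\vec x$. The main obstacle is precisely the design of this annotation: pairing each letter with its intermediate configuration is what simultaneously keeps the alphabet wqo (via Dickson), makes the partial-sum hypothesis of Lemma~\ref{lem:caraprod} fall out of the Higman embedding for free, and, via the end-marker $\#$, pins $\delta(\pi)$ down as $\vec x_{n_k}-\vec x_{n_0}$ so that the $\omega$-coordinates of $\vec x$ are hit exactly.
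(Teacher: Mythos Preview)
Your proof is correct and follows essentially the same approach as the paper: both directions use the same ideas, and for the hard inclusion both apply Higman's Lemma to runs annotated letter-by-letter with intermediate configurations, then read off the hypotheses of Lemma~\ref{lem:caraprod} from the embedding. The only cosmetic differences are that you encode the final state via an end-marker $\#$ (whereas the paper instead conjoins the Higman embedding with the explicit condition $\vec x_m\leq\vec x_n$ to form its well ordering $\sqsubseteq$), and that you extract an infinite $\embed\preceq$-chain and then pick $k$ large to force strict increase on the $\omega$-coordinates, whereas the paper first passes to a subsequence strictly increasing on those coordinates and then uses any comparable pair $m<n$.
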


\begin{proof}
  For the inclusion from right to left, if $\pi$ is a productive
  sequence for a word $v$, then $\ini+\delta(v)+\omega\delta(\pi)$ is the limit of the
  sequence $(\vec{x}_n)_{n\in\setN}$ with $\vec{x}_n=\ini+\delta(v)+n\delta(\pi)$, which
  is a reachable state by Definition~\ref{def:productive}. We prove
  the reverse inclusion thanks to Higman's~lemma. We follow the
  approach of Jan\v{c}ar introduced in~\cite[Section~6]{Petr199071}.
  
  \smallskip
  Let us first introduce a well ordering $\sqsubseteq$ over $\poststar(\VV)$,
  using a temporary ordering~$\preceq$. Consider the infinite set 
  $\Sigma=A\times\setN^d_\omega$. This set is well ordered by $\preceq$, defined 
  by:
  \begin{equation*}
    (a,\vec{y})\preceq(b,\vec{z}) \text{ if and only if } 
    a=b \text{ and } \vec{y}\leq \vec{z}.
  \end{equation*}
  Since $\preceq$ is a well ordering, Higman's lemma shows that $\embed\preceq$ is
  a well ordering over $\Sigma^*$. We associate to every reachable state
  $\vec{y}\in \poststar(\VV)$ a word $\alpha_{\vec{y}}$ in $\Sigma^*$ as follows:
  since $\vec{y}$ is reachable, the set $V_{\vec y}=\{v\in
  A^*\mid\ini\xrightarrow{v}\vec y\}$ is nonempty. Let us choose
  arbitrarily some $v_{\vec y}$ in $V_{\vec y}$ (the actual choice is
  irrelevant, one can choose for instance the minimal element of
  $V_{\vec y}$ wrt.\ the lexicographic ordering). Let $v_{\vec y} =
  a_1\cdots a_k$, with $k\geq0$ and $a_i \in A$. We introduce the sequence
  $(\vec{y}_i)_{0\leq i\leq k}$ of states defined by $\vec y_0=\ini$, and
  $\vec{y}_i=\ini+\delta(a_1\cdots a_i)$ for $i\geq1$. We let
  \begin{equation*}
    \alpha_{\vec{y}}=(a_1,\vec{y}_1)\cdots(a_k,\vec{y}_k).
  \end{equation*}
  We define the ordering $\sqsubseteq$ over $\poststar(\VV)$ by 
  $\vec{y} \sqsubseteq \vec{z}$ if $\alpha_{\vec{y}} \embed\preceq 
  \alpha_{\vec{z}}$ and $\vec{y}\leq \vec{z}$. Since the orderings 
  $\embed\preceq$ over $\Sigma^*$ and $\leq$ over $\setN^d$ are well, we deduce 
  that $\sqsubseteq$ is a well ordering over~$\poststar(\VV)$.

  \smallskip Now, let us pick $\vec{x} \in
  \Lim\poststar({\VV})$: $\vec x$ is the limit of a sequence
  $(\vec{x}_k)_{k\in\setN}$ of reachable states. By extracting a
  subsequence if necessary, one can assume that for every index~$i$:
  \begin{enumerate}[label=$(\roman*)$]
  \item\label{item:4} if $\vec{x}(i)<\omega$, then $\vec{x}_k(i)$ is
    constant, equal to $\vec{x}(i)$, and
  \item\label{item:5} if $\vec{x}(i) = \omega$, then $(\vec{x}_k(i))_{k\in\setN}$
    is strictly increasing.
  \end{enumerate}
  Denote by $\alpha_j$ the word $\alpha_{\vec{x}_j}$ associated to the reachable
  state $\vec{x}_j$. Since $\sqsubseteq$ is a well ordering, there exist $m<n$
  such that $\vec{x}_m \sqsubseteq \vec{x}_n$. By construction of $\alpha_m$, there
  exists a word $v = a_1\cdots a_k$ with $a_j \in A$ such that the sequence
  $(\vec{y}_j)_{1\leq j\leq k}$ defined by $\vec{y}_j = \ini+\delta(a_1\cdots a_j)$
  for every $j \in \{1,\ldots,k\}$ satisfies:
  \begin{equation*}
    \alpha_m = (a_1,\vec{y_1}) \cdots (a_k,\vec{y}_{k})
  \end{equation*}
  Since $\alpha_m\embed\preceq\alpha_n$ and by definition of $\embed\preceq$, there exist a
  sequence $(\vec{z}_j)_{1\leq j\leq k}$ of states with $\vec{y}_j \leq
  \vec{z}_j$, and a sequence $(\beta_j)_{0\leq j\leq k}$ of words in $\Sigma^*$ such
  that the following equality holds:
  $$
  \alpha_n = \beta_0 (a_1,\vec{z}_1) \beta_1 \cdots (a_k,\vec{z}_k) \beta_k
  $$
  We call \emph{label} of a word $(b_1,\vec{t}_1)\cdots(b_\ell,\vec{t}_\ell)$
  over $\Sigma$ the word $b_1\cdots b_\ell$ over~$A$. Consider the sequence
  $\pi=(u_j)_{0\leq j\leq k}$ where $u_j$ is the label of $\beta_j$. Since
  $\vec{x}_m$ and $\vec{x}_n$ are reachable, we have by definition of
  $\alpha_m$ and $\alpha_n$:
  \begin{align}
    \begin{split}
      \label{eq:6}
      \ini\xrightarrow{a_1} \vec{y}_1 \cdots& \xrightarrow{a_k} \vec{y}_k=\vec x_m
      \\
      \ini\xrightarrow{u_0a_1} \vec{z}_1 \cdots& \xrightarrow{u_{k-1}a_k}
      \vec{z}_k \xrightarrow{u_k}\vec{x}_n
    \end{split}
  \end{align}
  From \eqref{eq:6}, we obtain in particular
  \begin{equation}
    \label{eq:7}
    \vec{z}_j=\vec{y}_j+\delta(u_0)+\cdots+\delta(u_{j-1})\text{ for every $j\in\{1,\ldots,k\}$}
  \end{equation}
  and in the same way,
  \begin{equation}
    \label{eq:8}
    \vec{x}_n=\vec{x}_m+\delta(\pi)
  \end{equation}
  Using \eqref{eq:7} with $\vec{y}_j\leq \vec{z}_j$ for $1\leq j\leq k$, and
  \eqref{eq:8} with $\vec{x}_m\leq \vec{x}_n$, we deduce that $\pi$
  satisfies property~\ref{item:2} of Lemma~\ref{lem:caraprod}. Since,
  by~\eqref{eq:6}, it also satisfies~\ref{item:3}, it is productive
  for $v$.

  \smallskip\noindent It remains to prove that $\vec{x}=\vec{y}$ where
  $\vec{y}=\ini+\delta(v)+ \omega\delta(\pi)$. Let $i\in \{1,\ldots,d\}$.
  \begin{itemize}
  \item If $\vec{x}(i)<\omega$ then by~\ref{item:4}, we get $\vec{x}_m(i)=
    \vec{x}(i)=\vec{x}_n(i)$, so using~\eqref{eq:8}, we obtain
    $\delta(\pi)(i)=0$. Since we have $\vec{x}_n=\ini+\delta(v)+\delta(\pi)$
    by~\eqref{eq:6}, we deduce that
    $\vec{x}(i)=\vec{x}_n(i)=\ini(i)+\delta(v)(i)=\vec{y}(i)$.

  \item If $\vec{x}(i)=\omega$, then by~\ref{item:5} $\vec{x}_m(i)<\vec{x}_n(i)$. We
    deduce from~\eqref{eq:8} that $\delta(\pi)(i)>0$. Therefore,
    $\vec{x}(i)=\omega=\vec{y}(i)$. 
  \end{itemize}
  Finally, $\vec{x}=\vec{y}$, and we have proved that there exists a
  productive sequence $\pi$ for a word $v$ such that
  $\vec{x}=\ini+\delta(v)+\omega\delta(\pi)$.
\end{proof}

Proposition~\ref{prop:re} and Lemma~\ref{lem:caraprod} provide a
semi-algorithm to test whether a given vector $\vec x\in \setN^d_\omega$ belongs
to $\Lim\poststar(\VV)$: it suffices to enumerate the pairs $(\pi,v)$,
where $\pi$ is productive for $v$, and to check whether $\vec x =
\ini+\delta(v)+\omega\delta(\pi)$.

\bigskip It is easier to prove that the \emph{complement} of
$\Lim\poststar(\VV)$ is recursively enumerable.  Consider $\vec
y\in\setN^d_\omega$. We introduce $d$ distinct additional elements $b_1,\ldots,b_d
\not\in A$. Let $B=\{b_1,\ldots,b_d\}$. We now introduce the \vas $\VV_{\vec y}
= \tup{A\uplus B,\delta_{\vec{y}},\ini}$, where $\delta_{\vec{y}}$ extends $\delta$ by:
\begin{equation*}
  \delta_{\vec{y}}(b_i)=\begin{cases}
    \vec0 & \text{ if $\vec y(i)<\omega$,}\\
    -\vec{e}_i & \text{ if $\vec y(i)=\omega$.}
  \end{cases}
\end{equation*}
Finally, we define from $\vec y$ a sequence $(\vec{y}_\ell)_\ell$ converging to $\vec y$, by
\begin{math}
  \vec{y}_\ell(i)=
    \begin{cases}
      \vec y(i)& \text{if $\vec y(i)<\omega$,}\\
      \ell&\text{if $\vec y(i)=\omega$.}
    \end{cases}
\end{math}

\begin{lemma}
  \label{lem:co-re}
  Let $\VV_{\vec y}$ and $(\vec{y}_\ell)_\ell$ constructed from $\vec y$ as 
  above. Then,
  \begin{equation}
    \label{eq:9}
    \vec y\not\in\Lim\poststar(\VV) \Longleftrightarrow \exists \ell\in\setN,\ 
    \vec{y}_\ell\notin\poststar(\VV_{\vec y}).
  \end{equation}
  In particular, the complement of $\Lim\poststar(\VV)$ is effectively recursively 
  enumerable.
\end{lemma}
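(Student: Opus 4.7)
The plan is to reduce both sides of \eqref{eq:9} to a single common condition on ordinary reachable states of $\VV$. Set $I_f=\{i:\vec y(i)<\omega\}$ and $I_\omega=\{i:\vec y(i)=\omega\}$, and consider the property $P(\ell)$: there exists $\vec z\in\poststar(\VV)$ with $\vec z(i)=\vec y(i)$ for every $i\in I_f$ and $\vec z(i)\geq \ell$ for every $i\in I_\omega$. I will show that $\vec y\in\Lim\poststar(\VV)$ iff $P(\ell)$ holds for every $\ell\in\setN$, and that $\vec{y}_\ell\in\poststar(\VV_{\vec y})$ iff $P(\ell)$ holds. The equivalence \eqref{eq:9} then follows by contraposition.

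The first characterization is essentially a reformulation of convergence. If $(\vec x_n)$ is a sequence of reachable states of $\VV$ converging to $\vec y$, then the integer-valued sequence $(\vec x_n(i))_n$ is ultimately constant equal to $\vec y(i)$ when $i\in I_f$ and tends to $\infty$ when $i\in I_\omega$; hence for every $\ell$, any sufficiently late $\vec x_n$ witnesses $P(\ell)$. Conversely, given witnesses $\vec z_\ell$ for $P(\ell)$ as $\ell$ varies, the sequence $(\vec z_\ell)_\ell$ converges to $\vec y$ directly from the definition of convergence in $\setN^d_\omega$.

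The second characterization relies on the fact that in $\VV_{\vec y}$, each new action $b_i$ has effect either $\vec 0$ or $-\vec{e}_i$, hence never positive. Therefore, any firing sequence of $\VV_{\vec y}$ from $\ini$ can be rearranged so that all occurrences of letters from $B$ come at the very end: moving a $b_i$ past a later letter only increases the intermediate configurations pointwise, hence preserves non-negativity. After such a rearrangement, reaching $\vec{y}_\ell$ decomposes as reaching some $\vec z\in\poststar(\VV)$ by $A$-actions and then firing $b_i$'s; since $b_i$ is inert for $i\in I_f$ and decrements component $i\in I_\omega$, this is possible exactly when $\vec z(i)=\vec y(i)$ on $I_f$ and $\vec z(i)\geq \vec{y}_\ell(i)=\ell$ on $I_\omega$, i.e., exactly when $P(\ell)$ holds. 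The rearrangement step is the only delicate point of the proof; crucially, it uses the monotonicity of plain VAS, which is precisely the property that would fail once a zero-test is added.

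For the effective recursive enumerability of the complement, observe that $\VV_{\vec y}$ and $\vec{y}_\ell$ are uniformly computable from $\vec y$ and $\ell$, and reachability in a VAS is decidable by Theorem~\ref{thm:reach}. Hence, enumerating $\ell=0,1,2,\ldots$ and testing $\vec{y}_\ell\notin\poststar(\VV_{\vec y})$ provides a semi-algorithm for $\vec y\notin\Lim\poststar(\VV)$ that halts exactly when this holds, by the equivalence \eqref{eq:9}.
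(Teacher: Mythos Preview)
Your proof is correct and follows essentially the same approach as the paper: both establish the contrapositive by relating reachability of $\vec{y}_\ell$ in $\VV_{\vec y}$ to the existence of a reachable state of $\VV$ agreeing with $\vec y$ on finite coordinates and dominating $\ell$ on the $\omega$-coordinates, using that $\delta_{\vec y}(b_i)\leq\vec 0$ to push (or, in the paper, erase) the $B$-letters. The only cosmetic difference is that you isolate this property explicitly as $P(\ell)$ and phrase the key step as a rearrangement, whereas the paper erases the $B$-letters directly; the underlying monotonicity argument is identical.
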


\begin{proof}
  We prove the following, which is equivalent to \eqref{eq:9}:
  $$
  \vec y\in\Lim\poststar(\VV) \Longleftrightarrow 
  \forall \ell\in\setN,~   \vec{y}_\ell\in\poststar(\VV_{\vec y}).
  $$
  Assume that $\vec y\in\Lim\poststar(\VV)$. Fix $\ell\in\setN$. There exists a 
  sequence $(\vec{z}_n)_n$ of  elements of $\poststar(\VV)$ such that 
  $\lim_n\vec{z}_n=\vec y$, so for $n$ large enough, we have for all 
  $i=1,\ldots,d$:
  \begin{itemize}[leftmargin=3em]
  \item $\vec{z}_n(i)=\vec y(i)=\vec{y}_\ell(i)$ if $\vec y(i)<\omega$,
  \item $\vec{z}_n(i)\geq \ell=\vec{y}_\ell(i)$ if $\vec y(i)=\omega$.
  \end{itemize}
  Then $\vec{z}_n\xrightarrow{u}\vec{y}_\ell$ in $\VV_{\vec y}$, 
  with $u=\prod_{i=1}^d{b_i}^{\vec{z}_n(i)-\vec{y}_\ell(i)}$. Since
  $\vec{z}_n$ is reachable from $\ini$ (already in $\VV$), we deduce
  that $\vec{y}_\ell\in\poststar(\VV_{\vec y})$.

  \smallskip Conversely, assume that $\vec{y}_\ell\in\poststar(\VV_{\vec
    y})$ for all $\ell$, and let $u_\ell\in(A\cup B)^*$ such that
  $\ini\xrightarrow{u_\ell}\vec{y}_\ell$, in $\VV_{\vec y}$. Consider the
  word $v_\ell$ obtained from $u_\ell$ by erasing all letters of $B$. Since
  $\delta(b)\leq\vec 0$ for $b\in B$, the word $v_\ell$ is still fireable from
  $\ini$, so that $\vec{z}_\ell=\ini+\delta(v_\ell)\in\poststar(\VV)$. Moreover, by
  definition of $\VV_{\vec y}$, $\vec{z}_\ell(i)=\vec{y}_\ell(i)$ if $\vec
  y(i)<\omega$ and $\vec{y}_\ell(i)\leq\vec{z}_\ell(i)$ otherwise. Therefore,
  $\lim_\ell\vec{z}_\ell=\lim_\ell\vec{y}_\ell=\vec y$, and it follows that $\vec
  y\in\Lim\poststar(\VV)$.
  
  \smallskip This shows \eqref{eq:9}. Hence, we can enumerate vectors
  $\vec{y}_\ell$ and test, for each $\vec{y}_\ell$, its membership in
  $\poststar(\VV_{\vec y})$. This proves that $\Lim\poststar(\VV)$ is
  co-recursively enumerable.
\end{proof}

Theorem~\ref{thm:reclim} now follows from Proposition~\ref{prop:re}
and Lemma~\ref{lem:co-re}.

\section{Refined and filtered covers}
\label{sec:filtered-covers}

In this section, we introduce two new notions of covers: refined and
filtered covers. Both are parameterized, and the following inclusions
will hold, regardless of the parameters:
\begin{displaymath}
  \poststar(\VV)\subseteq\mathsf{RefinedCover}(\VV)\subseteq\cover{}({\VV}),\text{\qquad and\qquad}  \mathsf{FilteredCover}(\VV)\subseteq\cover{}(\VV)
\end{displaymath}

\smallskip Let us first introduce the refined cover, a set hybrid
between the reachability and cover sets, that to our knowledge has not
yet been considered. Instead of the downward closure $\cover{}(\VV)$
of $\poststar(\VV)$ wrt.\ the pointwise ordering~$\leq$, we consider
$$\cover{\leq_P}(\VV)=\Pdown{P}{\poststar(\VV)},$$ that is, we
replace $\leq$ with an ordering ${\leq_P}$ over $\setN_\omega^d$ parameterized by a
set of ``positions'' $P\subseteq\{1,\ldots,d\}$:
\begin{equation*}
  \vec x\leq_P\vec y \text {\quad if\quad}
  \begin{cases}
    \vec x(i) =\vec y(i)&\text{ for $i\in P$,} \\
    \vec x(i) \leq\vec y(i)&\text{ for $i\notin P$.}
  \end{cases}
\end{equation*}

The set $P$ contains the components for which we insist on keeping
equality. Thus, $\leq_\emptyset$ is the usual pointwise ordering~$\leq$, while
$\leq_{\{1,\ldots,d\}}$ boils down to equality. Notice that $\leq_P$ is \emph{not}
a well ordering, except if $P=\emptyset$ (\eg, $\setN$ ordered by $\leq_{\{1\}}$
consists only of incomparable elements, since in this case, $\leq_{\{1\}}$
is just equality).

\medskip The ordering $\leq_{\{1\}}$ will be abbreviated as $\lequn$. It is
a natural order to study for a $\vasz$ (recall that the zero-test
occurs on the first component). Indeed, the transition relation of a
\vasz is monotonic with respect to this order: if $\vec{x} \ru{u}
\vec{x}'$ and $\vec x \lequn \vec y$, then there exists $\vec{y}'$
with $\vec {y} \ru{u} \vec {y}'$ and $\vec {x}' \lequn \vec{y}'$. In
words, from a \lequn-larger state than $\vec x$, one can perform the
same transitions as from $\vec x$, and reach a state \lequn-above that
the one reached from $\vec x$. This is clearly not the case if one
uses the pointwise ordering $\leq$ instead of \lequn: some zero-tests may
fail from the largest state and succeed from the smallest one.

More precisely, testing if $\xcover{\lequn}(\VV)$ contains a vector
whose first component is 0 is what we need to design our algorithm
computing the cover of a VAS with one zero test. Unfortunately, the
set $\xcover{\lequn}(\VV)$ cannot be represented by a finite set of
$\lequn$-maximal elements, since it may well have infinitely many of them.
Actually, the following theorem shows that we cannot find a sensible
way to compute a representation of this set, as any representation
would not allow to test for~equality.

\begin{theorem}\label{thm:indecidable}
  Given two VAS $\VV_1$, $\VV_2$, it is undecidable whether
  $\xcover{\lequn}(\VV_1) = \xcover{\lequn}(\VV_2)$.
\end{theorem}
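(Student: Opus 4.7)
The plan is to reduce from the equality of reachability sets of VAS, which is undecidable by the classical Baker--Hack theorem. Given VAS $\VV_1,\VV_2$ of common dimension $d$, I will construct VAS $\widetilde\VV_1,\widetilde\VV_2$ of dimension $d+1$ such that $\xcover{\lequn}(\widetilde\VV_1)=\xcover{\lequn}(\widetilde\VV_2)$ if and only if $\poststar(\VV_1)=\poststar(\VV_2)$, which, combined with Baker--Hack, yields the desired undecidability.

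The starting observation is that, since $\lequn$ preserves the first component exactly, the refined cover decomposes as $\xcover{\lequn}(\widetilde\VV)=\bigcup_{k\ge 0}\{k\}\times \dc S_k(\widetilde\VV)$, where $S_k(\widetilde\VV)=\{\vec y : (k,\vec y)\in\poststar(\widetilde\VV)\}$. Equality of two refined covers is therefore equivalent to equality of the slice-wise downward closures for every value $k$ of the first component. The construction I have in mind augments each $\VV_i$ with a fresh first component $c$ initialized at $0$, lifts the dynamics of $\VV_i$ onto components $2,\dots,d+1$ without touching $c$, and introduces two families of auxiliary actions. \emph{Garbage-collector} actions decrement any single component among $2,\dots,d+1$ by one, so that each slice $S_k(\widetilde\VV_i)$ is forced to be downward closed in those components and therefore equals its own downward closure. \emph{Commit} actions raise $c$ in a state-dependent fashion, so that the first-component value attained becomes a signature of the simulated configuration $\vec z\in\poststar(\VV_i)$. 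With these ingredients in place, $\xcover{\lequn}(\widetilde\VV_i)=\poststar(\widetilde\VV_i)$ is already $\lequn$-downward closed and equality of refined covers reduces to equality of the underlying reachability sets of $\widetilde\VV_i$.

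The main obstacle is designing the commit mechanism so that the signature map $\vec z\mapsto c$ is injective, which is what ensures that equality of $\poststar(\widetilde\VV_1)$ and $\poststar(\widetilde\VV_2)$ is equivalent to equality of $\poststar(\VV_1)$ and $\poststar(\VV_2)$. Since VAS arithmetic is purely additive — no multiplication, no equality test — one cannot trivially compute a pairing function of the coordinates of $\vec z$. I expect this to be addressable via a Hack-style weak simulation: a controlled cascade of additive transitions, together with a few extra auxiliary counters, implements a polynomial pairing on the coordinates of $\vec z$ and writes the result into $c$ during the commit phase. Verifying that distinct configurations of $\VV_i$ yield distinct signatures, and therefore incomparable $\lequn$-slices in $\widetilde\VV_i$, is the crux of the argument; once that is established, the chain of equivalences $\xcover{\lequn}(\widetilde\VV_1)=\xcover{\lequn}(\widetilde\VV_2) \iff \poststar(\widetilde\VV_1)=\poststar(\widetilde\VV_2) \iff \poststar(\VV_1)=\poststar(\VV_2)$ closes the reduction.
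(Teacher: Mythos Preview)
Your high-level plan --- reduce from equality of reachability sets via Baker--Hack --- matches the paper's, but the construction you sketch is far more elaborate than needed, and its crux has a genuine gap.

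The gap is the commit mechanism. A Hack-style weak computation of a polynomial pairing function only lets the VAS reach values \emph{at most} the intended pairing value, never exactly it; that is precisely what ``weak'' means. So the signature map $\vec z\mapsto c$ you obtain is not injective: many distinct $\vec z$ share the same attainable~$c$, and your chain of equivalences breaks at the last step. Mixing this with garbage collectors makes matters worse: nothing prevents decrementing components before committing, so the ``simulated configuration'' is not even well defined at commit time. Your proposal does not explain how a stateless VAS would separate the simulate/commit/garbage-collect phases, and there is no evident way to do so.

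The insight you are missing --- and which makes the paper's proof a few lines long --- is that an injective signature is unnecessary. You only need a signature that separates $\leq$-comparable vectors: if $\vec x\leq\vec y$ and the signatures agree, then $\vec x=\vec y$. The plain sum $\sigma(\vec z)=\sum_i\vec z(i)$ already has this property, and a VAS maintains it for free as an invariant: put $\iniprime=(\sigma(\ini),\ini)$ and $\delta'(a)=(\sigma(\delta(a)),\delta(a))$. Then $(n,\vec x)\in\poststar(\VV')$ iff $\vec x\in\poststar(\VV)$ and $n=\sigma(\vec x)$. If $\xcover{\lequn}(\VV_1')=\xcover{\lequn}(\VV_2')$ and $\vec z\in\poststar(\VV_1)$, then $(\sigma(\vec z),\vec z)\in\xcover{\lequn}(\VV_2')$, so some $(\sigma(\vec z),\vec y)\in\poststar(\VV_2')$ satisfies $\vec z\leq\vec y$; but $\sigma(\vec y)=\sigma(\vec z)$ together with $\vec z\leq\vec y$ forces $\vec z=\vec y\in\poststar(\VV_2)$. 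No garbage collectors, no commit phase, no pairing function.
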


\begin{proof}
  We reduce the equality problem $\poststar(\VV_1)=\poststar(\VV_2)$,
  which is known to be undecidable \cite{Baker73,Hack:76}, to the
  problem of the statement. Let us first consider a VAS
  $\VV=\tup{A,\delta,\ini}$ of dimension $d$. We introduce a VAS
  $\VV'=\tup{A,\delta',\iniprime}$ of dimension $d+1$ that counts in the
  first component the sum of the other components. Formally,
  $\iniprime=\bigl(\sum_{i=1}^d\ini(i),\ini\bigr)$ and
  $\delta'(a)=\bigl(\sum_{i=1}^d\delta(a)(i),\delta(a)\bigr)$ for every $a\in A$.  Observe
  that the following equivalence holds:
  \begin{equation*}
    (n,\vec{x}) \in \poststar(\VV')~\Longleftrightarrow~\vec{x}\in
    \poststar(\VV)~\text{ and }n=\sum_{i=1}^d\vec{x}(i).
  \end{equation*}
  Finally, consider two VAS $\VV_1$ and $\VV_2$, and just observe that
  $\poststar(\VV_1)=\poststar(\VV_2)$ if and only if
  $\xcover{\lequn}(\VV_1') = \xcover{\lequn}(\VV_2')$.
\end{proof}

So, we cannot hope for a useful representation of the sets
$\xcover{\leq_P}(\VV)$. However, one can capture the needed information
differently, by replacing the downward closure $\Pdown{P}$ in
$\cover{\leq_P}(\VV)=\Pdown{P}\poststar(\VV)$ with another operator
$\xdown f{}$, parameterized by a vector $\vec f$ of $\setN^d_\omega$ (the
letter $\vec f$ stands for \emph{filter}). Informally, $\xdown f{}\vec
M$ is a downward closure taking into account only elements of $\vec M$
that agree with $\vec f$ on its finite components. Other elements will
just be discarded. Formally, for $\vec f\in\setN^d_\omega$ and $\vec M\subseteq\setN_\omega^d$,~we
define the \emph{filtered cover} $\xdown f{\vec M}$ by:
\begin{align*}
  \filter{\vec M}f&=\Bigl\{\vec x\in \vec M\mid\bigwedge_{i=1}^d \bigl[\,\vec f(i)<\omega\Longrightarrow\vec x(i)=\vec f(i)\bigr]\Bigr\},\\
  \xdown f{\vec M}&=\big\downarrow\filter{\vec M}f.
\end{align*}

Observe that $\xdown f{\vec M}$ is a downward closed subset of
$\down{} \vec M$, and that $\xdown {{}(\omega,\omega,\ldots,\omega)}{\vec M}=\dc{}{\vec
  M}$.
Elements of the minimal basis of ${\xdown f{\vec M}}$ agree with $\vec
f$ on components $i$ where $\vec f(i)<\omega$.  One can check that the
limit and filter operators commute:
$$\filter {\Lim\vec M}{f}=\Lim\filter{\vec M}{f}.$$
Since the limit and the downward closure operators also commute
(see~\eqref{eq:5}), we obtain
\begin{equation}
  \label{eq:10}
  \xdown {f}{\Lim\vec M}=\Lim\xdown{f}{\vec M}.
\end{equation}

The motivation for considering filtered covers is that, for $\vec
f=(0,\omega,\ldots,\omega)\in\setN_\omega^d$ and $\vec M=\poststar{}(\VV)$ where $\VV$ is a \vas
of dimension $d$, the set $\xdown f{\vec M}$ captures all information
we need to overcome the difficulty described on
page~\pageref{sec:difficulty}. Moreover, contrary to the refined cover
of a \vas, all its filtered covers are computable, as stated in
Theorem~\ref{thm:RE} below. Our goal in this section is to describe an
algorithm computing a filtered cover of a \vas. Our algorithm both
refines Karp and Miller's one to compute the usual cover, and
generalizes Theorem~\ref{thm:reclim}.

\begin{theorem}
  \label{thm:RE}
  Let $\VV$ be a \vas.  Given $\vec f\in\setN^d_\omega$, one can compute a basis
  of $\xdown f\poststar(\VV)$.
\end{theorem}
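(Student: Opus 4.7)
The plan is to compute the finite basis of $\xdown f\poststar(\VV)$ by iteratively enlarging a candidate antichain, combining Theorem~\ref{thm:cover} (for an effective upper bound on candidates) with Theorem~\ref{thm:reclim} (as a decidable membership oracle for $\Lim\poststar(\VV)$). By equation~\eqref{eq:10}, such a basis consists of the (finitely many, by Dickson's lemma applied to the coordinates $I=\{i:\vec f(i)=\omega\}$) maximal elements of $\filter{\Lim\poststar(\VV)}{f}$; the coordinates in $J=\{1,\ldots,d\}\setminus I$ are held fixed at $\vec f$-values by the filter.

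First I would invoke Theorem~\ref{thm:cover} to obtain a basis $\vec C$ of $\cover{}(\VV)$. Any element of $\filter{\Lim\poststar(\VV)}{f}$ lies in $\Lim\poststar(\VV)\subseteq\dc\vec C$, so it sits below some $\vec c\in\vec C$ with $\vec c(j)\geq\vec f(j)$ for $j\in J$; replacing the $J$-coordinates of such a $\vec c$ by the corresponding $\vec f$-values produces the \emph{filter-adjusted} vector $\vec c_f\in\setN^d_\omega$, and every basis element is bounded above by some $\vec c_f$. This yields a finite effective family of upper bounds, and by Theorem~\ref{thm:reclim}, membership of any candidate $\vec b\in\setN^d_\omega$ (agreeing with $\vec f$ on $J$) in $\filter{\Lim\poststar(\VV)}{f}$ is decidable.

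I would then maintain a growing antichain $\vec B\subseteq\filter{\Lim\poststar(\VV)}{f}$, starting empty. At each iteration, compute the finitely many minimal elements $\vec\mu_1,\ldots,\vec\mu_r\in\setN^d$ of the complement of $\dc\vec B$ within the $\vec f$-slice. For each~$\vec\mu_s$, test whether $\vec\mu_s\in\xdown f\poststar(\VV)$: this reduces to a reachability question into the semi-linear set $\{\vec y:\vec y(j)=\vec f(j)\text{ on }J,\ \vec y\geq\vec\mu_s\}$ for the VAS $\VV$, which is decidable by combining Theorem~\ref{thm:reach} with standard gadgets that encode the target constraints as auxiliary counters. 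If some $\vec\mu_s$ is dominated by a witness $\vec y$, saturate~$\vec y$ to a maximal element of $\filter{\Lim\poststar(\VV)}{f}$ and add it to~$\vec B$, then restart; otherwise, $\vec B$ is certified complete.

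The main obstacle I anticipate is the saturation step, which must extract a maximal dominating element from a witness~$\vec y$. I would proceed coordinate by coordinate: for each $i\in I$, test via Theorem~\ref{thm:reclim} whether the current vector admits promotion to $\omega$ in coordinate~$i$ (a single membership check suffices, by limit-closedness of $\Lim\poststar(\VV)$), and otherwise raise the coordinate as far as possible within the bound $\vec c_f(i)$, falling back to a bounded enumeration when~$\vec c_f(i)=\omega$. Termination of the outer loop follows from Dickson's lemma, since the sequence of antichains $\vec B$ strictly grows within the finite target basis of maximal elements, bounding the number of restarts.
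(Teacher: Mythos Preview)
Your proposal has a genuine gap in the saturation step. The greedy coordinate-by-coordinate promotion need not produce a maximal element of $\filter{\Lim\poststar(\VV)}{f}$, and without maximality your termination argument collapses.

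Consider the VAS of dimension~$2$ with initial state $(0,0)$ and a single action $a$ with $\delta(a)=(1,1)$, and take the trivial filter $\vec f=(\omega,\omega)$. Then $\Lim\poststar(\VV)=\{(k,k):k\geq0\}\cup\{(\omega,\omega)\}$, with unique maximal element $(\omega,\omega)$. Starting from any witness $\vec y=(k,k)$, promoting coordinate~$1$ alone to $\omega$ fails (since $(\omega,k)\notin\Lim\poststar(\VV)$), and likewise for coordinate~$2$; raising each coordinate ``as far as possible'' individually leaves you stuck at $(k,k)$. Your antichain $\vec B$ then steps through $(0,0),(1,1),(2,2),\ldots$ without ever reaching the basis $\{(\omega,\omega)\}$, and the outer loop never terminates. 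The claim that the added elements lie ``within the finite target basis of maximal elements'' is precisely what fails here.

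A secondary issue: ``bounded enumeration when $\vec c_f(i)=\omega$'' has no effective bound. Knowing (by limit-closedness) that the set of valid finite values at coordinate~$i$ is bounded does not hand you that bound, and those values need not form an interval, so you cannot simply stop at the first failure.

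The paper avoids saturation entirely. It reduces the computation of a basis of $\xdown f\poststar(\VV)$, via the Valk--Jantzen technique (Lemma~\ref{lem:vj-down}), to deciding membership of an arbitrary pair $(\vec f,\vec y)$ in $\fyset{{}\Lim\poststar(\VV)}$; this is in turn reduced (through the refined orderings~$\leq_P$, Corollary~\ref{cor:f-vs-P}) to membership in $\Lim\poststar(\VV_P)$ for an auxiliary VAS~$\VV_P$ (Lemma~\ref{lem:redP}), where Theorem~\ref{thm:reclim} applies. The key point is that the Valk--Jantzen oracle asks whether $\vec y\in\xdown f\Lim\poststar(\VV)$ for \emph{arbitrary} $\vec y\in\setN_\omega^d$, including vectors with several $\omega$-components simultaneously; this is exactly what lets one detect a basis element like $(\omega,\omega)$ in one shot, rather than coordinate by coordinate.
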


Karp and Miller's algorithm computing $\cover{}(\VV)$ corresponds to
the case $\vec f=(\omega,\ldots,\omega)$.  Since $\vec M$ and $\Lim\vec M$ have the
same bases (by definition~\eqref{eq:3} of a basis), computing a basis
of $\xdown f\poststar(\VV)$ is the same as computing a basis of
$\Lim\xdown f\poststar(\VV)$, \ie, by~\eqref{eq:10}, of $\xdown
f\Lim\poststar(\VV)$. We first reduce this computation to a decision
problem, as in~\cite[Th.~2.10]{Valk&Jantzen:85a}.

\smallskip
 For $\vec M\subseteq\setN_\omega^d$, let us introduce the following set:
\begin{align*}
  \fyset{M}&=\bigl\{(\vec{f},\vec{y})\in\setN_\omega^d\times\setN_\omega^d \mid \vec{y}\in \xdown f{\vec M}\bigr\}.
\end{align*}

\begin{lemma}
  \label{lem:vj-down}
  Let $\vec M\subseteq\setN_\omega^d$ be a limit closed set. From an algorithm solving
  the membership problem for $\fyset{M}$, one can construct an
  algorithm which, given an input vector $\vec f\in\setN_\omega^d$, outputs a
  basis of $\xdown f{\vec M}$.
\end{lemma}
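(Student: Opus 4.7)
The strategy is a classical Valk--Jantzen style enumeration. Since $\vec M$ is limit closed, I would first check that $\xdown f {\vec M}$ is itself downward closed and limit closed in $\setN^d_\omega$. Downward-closedness is immediate. For limit-closedness, $\filter{\vec M}{f}$ is limit closed because the equalities $\vec x(i) = \vec f(i)$ for $\vec f(i) < \omega$ pass to the limit; then the downward closure of a limit closed set in $(\setN^d_\omega, \leq)$ is again limit closed, by extracting a convergent subsequence of witnesses via Dickson's Lemma in $\setN^d_\omega$. Hence $\xdown f {\vec M}$ admits a finite basis in $\setN^d_\omega$, and the oracle for $\fyset{M}$ directly decides membership queries $\vec y \in \xdown f {\vec M}$ by asking whether $(\vec f, \vec y) \in \fyset{M}$.

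The algorithm then enumerates finite sets $\vec B \subseteq \setN^d_\omega$ in some systematic order and, for each, tests whether $\dc \vec B = \xdown f {\vec M}$, outputting the first $\vec B$ that succeeds. This decomposes into two subtests: $(a)$~$\dc \vec B \subseteq \xdown f {\vec M}$, which by downward-closedness of $\xdown f {\vec M}$ reduces to one oracle query per element of $\vec B$; and $(b)$~$\xdown f {\vec M} \subseteq \dc \vec B$, the completeness check. For~$(b)$, I would compute the finite basis $\vec C$ of the upward closed complement $\setN^d_\omega \setminus \dc \vec B$, which is effective by standard componentwise operations starting from $\vec B$, and then test $\vec c \notin \xdown f {\vec M}$ for each $\vec c \in \vec C$ with the oracle.

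The correctness of~$(b)$ rests on the equivalence $\xdown f {\vec M} \subseteq \dc \vec B \Longleftrightarrow \vec c \notin \xdown f {\vec M}$ for every $\vec c \in \vec C$. The nontrivial direction uses downward-closedness of $\xdown f {\vec M}$: if some $\vec x \in \xdown f {\vec M}$ lay outside $\dc \vec B$, then by definition of $\vec C$ there would exist $\vec c \in \vec C$ with $\vec c \leq \vec x$, and downward closure would force $\vec c \in \xdown f {\vec M}$, a contradiction. Termination follows because the minimal basis of $\xdown f {\vec M}$ eventually appears in the enumeration and then passes both tests. The main subtlety is to verify the downward- and limit-closedness of $\xdown f {\vec M}$, which underpins both the existence of a finite basis and the equivalence used in the completeness check; once these are in place, the rest is a routine adaptation of the Valk--Jantzen procedure.
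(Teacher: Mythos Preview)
Your proof is correct and follows the same Valk--Jantzen philosophy as the paper, but it is organised differently. The paper first passes to $\setN^d\cap\xdown f{\vec M}$, then invokes~\cite[Th.~2.10]{Valk&Jantzen:85a} as a black box to compute a finite basis of the \emph{upward closed complement} $\setN^d\setminus\xdown f{\vec M}$ (using that $(\setN^d\setminus\xdown f{\vec M})\cap\dc\vec y=\emptyset$ is, thanks to limit-closedness, exactly the oracle query $(\vec f,\vec y)\in\fyset M$), and finally converts that complement basis back into a basis of the downward closed set via a Presburger check. You instead enumerate candidate bases $\vec B$ for $\xdown f{\vec M}$ directly and test both inclusions with the oracle, the completeness direction relying on the effectively computable minimal elements of $\setN_\omega^d\setminus\dc\vec B$. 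Both arguments hinge on the same key point---that limit-closedness of $\vec M$ makes $\xdown f{\vec M}$ itself limit closed, so that membership of an $\setN_\omega^d$-vector decides inclusion of its whole downward closure---and both terminate because a finite basis exists. Your route is arguably more self-contained (no external Valk--Jantzen citation, no Presburger step); the paper's route separates concerns more cleanly by isolating the computation of the complement's basis.
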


\begin{proof}  
  Observe that $\xdown f{\vec M}$ has the same bases as $\setN^d\cap\xdown
  f{\vec M}$. Now, if ${\vec D}\subseteq\setN^d$ is downward closed, one can
  compute a basis $\vec B_{\vec D}\subseteq\setN^d_\omega$ of ${\vec D}$ from a basis
  $\vec B_{\vec U}$ of its (upward closed) complement ${\vec U}=\setN^d\setminus
  {\vec D}$: an algorithm generates all candidates $\vec B_{\vec D}$
  for a basis of $\vec D$, (\ie, all finite subsets of the countable
  set $\setN^d_\omega$), and checks for each candidate whether it is indeed a
  basis of ${\vec D}$, \ie, that the union of the sets $\setN^d\cap\dc {\vec
    B_{\vec D}}$ and $\setN^d\cap\uc {\vec B_{\vec U}}$ is $\setN^d$, and that
  their intersection is empty. This property is Presburger definable,
  whence decidable.

  Hence, to compute a basis of ${\xdown f\vec M}$ given $\vec f$, it
  suffices to compute a basis of ${\setN^d\setminus\xdown f\vec M}$.
  Now,~\cite[Th.~2.10]{Valk&Jantzen:85a} describes an algorithm
  computing such a basis from an algorithm deciding, given $\vec
  y\in\setN_\omega^d$, whether $(\setN^d\setminus\xdown f\vec M)\cap\dc\vec y=\emptyset$, or
  equivalently, whether $\setN^d\cap\dc\vec y\subseteq\xdown{f}{\vec M}$.  Note that
  $\vec y$ may have some components whose value is~$\omega$, so, if $\vec
  M$ were an arbitrary set, it might happen that $\dc\vec
  y\not\subseteq\xdown{f}{\vec M}$. However, $\vec M$ is limit closed, and
  therefore $\setN^d\cap\dc\vec y\subseteq\xdown{f}{\vec M}$ is equivalent to $\vec
  y\in\xdown{f}{\vec M}$, that is, to $(\vec f,\vec y)\in\fyset{M}$.
\end{proof}

Notice that Lemma~\ref{lem:vj-down} requires as input an algorithm
solving the membership problem in $\fyset{M}$, \ie, a unique algorithm
solving the membership of $\vec{y}$ in $\xdown f{\vec M}$ where
$\vec{f}$ is an input parameter. This hypothesis cannot be weakened by
just assuming that for each $\vec{f}$ we have an algorithm deciding
the membership of $\vec{y}$ in $\xdown f{\vec M}$. In fact this
hypothesis is a tautology, since the set $\xdown f{\vec M}$ is
recursive, as every downward closed set. The lemma becomes clearly
wrong without any condition on $\vec{M}$.

\smallskip We will now reduce membership in
$\fyset{{}\Lim\poststar(\VV)}$ to a similar problem involving refined
covers. The next lemma provides a relationship between the sets
$\xdown f{\vec M}$ and $\Pdown{P}{\vec M}$.

\begin{lemma}
  \label{lem:fp}
  Let $\vec{M}\subseteq\setN_\omega^d$, $P\subseteq\{1,\ldots,d\}$, and $\vec{y}\in\setN_\omega^d$. Define $\vec{f}\in\setN_\omega^d$ by
  \begin{equation}
    \label{eq:11}
    \vec{f}(i)=
    \begin{cases}
      \vec{y}(i)&\text{ if }i\in P\text{, and}\\
      \omega&\text{ otherwise.}
    \end{cases}
  \end{equation}
  Then we have:
  \begin{equation}
    \label{eq:a:1}
    \vec y\in \Pdown{P}{\vec M}\Longleftrightarrow\vec y\in\xdown f{\vec M}.
  \end{equation}
\end{lemma}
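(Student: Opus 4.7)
The proof is essentially a direct unpacking of the two definitions, and no result from the paper is needed beyond the definitions themselves. The plan is to show both inclusions of \eqref{eq:a:1} by exhibiting the same witness $\vec z \in \vec M$ on each side. In one direction we start from a witness for the refined cover and verify it lies in $\filter{\vec M}{f}$; in the other direction we start from a witness in $\filter{\vec M}{f}$ and verify the equality constraint on coordinates in $P$.

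For the implication from left to right, suppose $\vec y \in \Pdown{P}{\vec M}$, so there is a $\vec z \in \vec M$ with $\vec y \leq_P \vec z$, which by definition of $\leq_P$ means $\vec y(i) = \vec z(i)$ for $i\in P$ and $\vec y(i) \leq \vec z(i)$ for $i\notin P$. I would then check that $\vec z \in \filter{\vec M}{f}$: the condition only concerns indices $i$ with $\vec f(i) < \omega$, which by the definition \eqref{eq:11} of $\vec f$ forces $i \in P$ and $\vec f(i) = \vec y(i)$; but then $\vec z(i) = \vec y(i) = \vec f(i)$ by the $\leq_P$ relation. Since moreover $\vec y \leq \vec z$ coordinatewise, we obtain $\vec y \in \dc \filter{\vec M}{f} = \xdown f {\vec M}$.

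For the converse, suppose $\vec y \in \xdown f {\vec M}$, so there is $\vec z \in \filter{\vec M}{f}$ with $\vec y \leq \vec z$. For $i \notin P$ there is nothing to check beyond $\vec y(i) \leq \vec z(i)$. For $i \in P$, I split on whether $\vec y(i)$ is finite or equal to $\omega$: if $\vec y(i) < \omega$, then $\vec f(i) = \vec y(i) < \omega$ and the filter condition forces $\vec z(i) = \vec f(i) = \vec y(i)$; if $\vec y(i) = \omega$, then $\vec y \leq \vec z$ gives $\omega \leq \vec z(i)$, hence $\vec z(i) = \omega = \vec y(i)$. In either case $\vec y(i) = \vec z(i)$, so $\vec y \leq_P \vec z$ and $\vec y \in \Pdown{P}{\vec M}$.

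There is no real obstacle here: the only mildly delicate point is that the filter $\vec f$ only records equalities at coordinates where $\vec y$ is \emph{finite}, whereas $\leq_P$ also demands equality at coordinates where $\vec y(i) = \omega$; this is precisely what the second case above handles, using the fact that $\omega$ is the top of $\setN_\omega$, so $\vec y(i) = \omega$ together with $\vec y(i) \leq \vec z(i)$ automatically yields $\vec z(i) = \omega$.
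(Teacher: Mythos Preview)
Your proof is correct and follows essentially the same route as the paper's own proof: both directions exhibit the same witness element of $\vec M$, and in the converse direction you perform the same case split on whether $\vec y(i)$ is finite or equal to $\omega$ for $i\in P$. The only differences are cosmetic (you write $\vec z$ where the paper writes $\vec x$, and you add an explanatory remark about the $\omega$ case that the paper leaves implicit).
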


\begin{proof}
  Assume first that $\vec y\in \Pdown{P}{\vec M}$. Then, there exists
  $\vec{x}\in \vec M$ such that $\vec{y}\leq_P \vec{x}$. We prove that
  $\vec{x}\in \filter{\vec M}f$ by observing that if $i$ is an index
  such that $\vec{f}(i)<\omega$, then $i\in P$ and
  $\vec{f}(i)=\vec{y}(i)<\omega$. From $i\in P$ we get
  $\vec{x}(i)=\vec{y}(i)$. Hence $\vec{x}(i)=\vec{f}(i)$ and we have
  proved that $\vec{x}\in \filter{\vec M}f$. Since $\vec{y}\leq \vec{x}$,
  we get $\vec{y}\in \xdown{f}{\vec M}$.

  \smallskip Conversely, assume that $\vec y\in\xdown f{\vec M}$: there
  exists $\vec{x}\in \filter{\vec M}f$ such that $\vec{y}\leq \vec{x}$. Let
  $i\in P$. If $\vec{y}(i)=\omega$ then from $\vec{y}(i)\leq \vec{x}(i)$ we get
  $\vec{y}(i)=\vec{x}(i)$. If $\vec{y}(i)<\omega$ then
  $\vec{f}(i)=\vec{y}(i)$ and form $\vec{x}\in \filter{\vec M}f$ we get
  $\vec{x}(i)=\vec{f}(i)$. Hence in both cases, we have
  $\vec{x}(i)=\vec{y}(i)$. We have proved that
  $\vec{y}\leq_P\vec{x}$. Therefore $\vec{y}\in \Pdown{P}{\vec M}$.
\end{proof}

Let us now introduce another set, again for a set $\vec M\subseteq\setN^d_\omega$:
\begin{align*}
  \Pyset{M}&=\bigl\{(P,\vec{y})\in \partie{\{1,\ldots,d\}}\times\setN_\omega^d \mid \vec{y}\in\Pdown{P}{\vec M}\bigr\}
\end{align*}

\begin{corollary}[of Lemma~\ref{lem:fp}]
  \label{cor:f-vs-P}
  The membership problems in $\Pyset{M}$ and in $\fyset{M}$ are
  inter-reducible.  Both reductions are effective: from an algorithm
  solving the first problem, we construct an algorithm solving the
  second one.
\end{corollary}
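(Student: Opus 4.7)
My plan is to derive both reductions directly from Lemma~\ref{lem:fp}, essentially by reading it in the two possible directions.

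\textbf{From $\Pyset{M}$ to $\fyset{M}$.} This direction is immediate from Lemma~\ref{lem:fp}. Given an input $(P,\vec{y})$, I compute $\vec{f}\in\setN_\omega^d$ from $(P,\vec{y})$ by formula~\eqref{eq:11}, and call the oracle for $\fyset{M}$ on $(\vec{f},\vec{y})$. The equivalence $\vec{y}\in\Pdown{P}{\vec M}\iff\vec{y}\in\xdown{f}{\vec M}$ provided by Lemma~\ref{lem:fp} ensures correctness.

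\textbf{From $\fyset{M}$ to $\Pyset{M}$.} Here I need a small preprocessing step, because the $\vec{f}$ given as input need not be ``compatible'' with $\vec{y}$ on its finite components. Given an input $(\vec{f},\vec{y})$, the algorithm first tests whether $\vec{y}(i)\leq\vec{f}(i)$ for every index $i$ with $\vec{f}(i)<\omega$. If this fails, it rejects: any $\vec{x}\in\filter{\vec M}{f}$ satisfies $\vec{x}(i)=\vec{f}(i)$, which is incompatible with $\vec{y}\leq\vec{x}$ on such an index. Otherwise, set $P=\{i\mid\vec{f}(i)<\omega\}$ and define $\vec{y}'\in\setN_\omega^d$ by $\vec{y}'(i)=\vec{f}(i)$ for $i\in P$ and $\vec{y}'(i)=\vec{y}(i)$ for $i\notin P$. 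The algorithm then calls the oracle for $\Pyset{M}$ on $(P,\vec{y}')$.

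\textbf{Correctness of the reverse reduction.} Applying formula~\eqref{eq:11} to the pair $(P,\vec{y}')$ gives back exactly $\vec{f}$, since on $P$ one has $\vec{y}'(i)=\vec{f}(i)<\omega$ and off $P$ one gets the value $\omega=\vec{f}(i)$. Therefore, by Lemma~\ref{lem:fp} applied to $(P,\vec{y}')$, we have $\vec{y}'\in\Pdown{P}{\vec M}\iff\vec{y}'\in\xdown{f}{\vec M}$. To conclude it suffices to check that $\vec{y}'\in\xdown{f}{\vec M}$ iff $\vec{y}\in\xdown{f}{\vec M}$. This holds because any witness $\vec{x}\in\filter{\vec M}{f}$ satisfies $\vec{x}(i)=\vec{f}(i)$ for every $i\in P$, so $\vec{y}\leq\vec{x}$ and $\vec{y}'\leq\vec{x}$ are equivalent: they agree off $P$ by definition of $\vec{y}'$, and on $P$ both reduce to the same condition $\vec{x}(i)=\vec{f}(i)\geq\vec{y}(i)$, using the compatibility check performed in the preprocessing.

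There is no significant obstacle; the only subtle point is the compatibility preprocessing, which is needed so that the auxiliary vector $\vec{y}'$ that one feeds to the $\Pyset{M}$ oracle is genuinely equivalent to the original $\vec{y}$ for the filtered-cover membership question.
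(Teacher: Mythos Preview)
Your proof is correct and follows essentially the same approach as the paper's: both directions use the same constructions (the paper calls your $\vec{y}'$ by the name $\vec{z}$), the same preprocessing test $\vec{y}\leq\vec{f}$, and the same appeal to Lemma~\ref{lem:fp}. Your write-up is slightly more explicit in justifying the equivalence $\vec{y}\in\xdown{f}{\vec M}\iff\vec{y}'\in\xdown{f}{\vec M}$, which the paper merely states.
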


\begin{proof}
  From $P\subseteq\{1,\ldots,d\}$ and $\vec y\in\setN^d_\omega$, define $\vec{f}\in\setN_\omega^d$
  by~\eqref{eq:11}. From \eqref{eq:a:1}, we deduce that $(P,\vec y)\in
  \Pyset{M}$ if and only if $(\vec f,\vec y)\in \fyset{M}$.

  \smallskip\noindent Conversely, let $\vec{f}\in\setN_\omega^d$ and $\vec
  y\in\setN^d_\omega$. Observe that if $\vec{y}\not\leq\vec{f}$ then $\vec{y}\not\in
  \xdown f{\vec M}$. So we can assume that $\vec{y}\leq \vec{f}$. We
  introduce the set $P=\bigl\{i\in\{1,\ldots,d\}\mid \vec{f}(i)<\omega\bigr\}$ and the
  vector $\vec{z}\in\setN_\omega^d$ defined by $\vec{z}(i)=\vec{f}(i)$ if $i\in P$
  and $\vec{z}(i)=\vec{y}(i)$ otherwise. We have $\vec{y}\in \xdown
  f{\vec M}$ if and only if $\vec{z}\in \xdown f{\vec M}$. Moreover,
  from Lemma~\ref{lem:fp} we deduce that $\vec z\in\xdown f{\vec M}$ if
  and only if $\vec z\in \Pdown{P}{\vec M}$. In summary, $(\vec y,\vec
  f)\in\fyset M$ if and only if $\vec y\leq\vec f$ and $(\vec z,P)\in\Pyset
  M$.
\end{proof}

To establish Theorem~\ref{thm:RE}, it remains, in view of
Lemma~\ref{lem:vj-down} and Corollary~\ref{cor:f-vs-P}, to find an
algorithm solving membership to $\Pyset{{}\Lim\poststar(\VV)}$.  This
is obtained by first proving that, for a VAS $\VV_P$ suitably
constructed from $\VV$ and $P$, we have
 \begin{equation}
   \label{eq:12}
   \poststar(\VV_P)=\xcover{\leq_P}(\VV)
\end{equation}
which implies $\Lim\poststar(\VV_P)=
\Lim\xcover{\leq_P}(\VV)=\Lim\Pdown{P}\poststar(\VV)=\Pdown{P}\Lim\poststar(\VV)$. Then,
Theorem~\ref{thm:reclim} applied to $\VV_P$ will give an algorithm to
decide membership in this set. Since there is a finite number of
subsets $P$ of $\{1,\ldots,d\}$, this yields an algorithm to decide
membership in $\Pyset{{}\Lim\poststar(\VV)}$.

\medskip So let $\VV = \tup{A,\delta,\ini}$ be a VAS and $P\subseteq\{1,\ldots,d\}$, and
let us define a VAS $\VV_P$ satisfying~\eqref{eq:12}. We consider $d$
distinct additional elements $b_1,\ldots,b_d\not\in A$. Let
$B=\{b_1,\ldots,b_d\}$. We consider the VAS $\VV_{P}=\tup{A\uplus B,\delta_P,\ini}$,
where $\delta_P$ extends $\delta$ by:
\begin{equation*}
  \delta_P(b_i)=\begin{cases}
    \vec0 & \text{ if $i\in P$}\\
    -\vec{e_i} & \text{ if $i\notin P$.}
  \end{cases}
\end{equation*}

\begin{lemma}\label{lem:redP}
  Let $\VV_{P}$ constructed from $\VV$ and $P$ as above. Then 
  $\xcover{\leq_P}(\VV) = \poststar(\VV_P) $.
\end{lemma}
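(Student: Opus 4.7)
The plan is to prove the two inclusions separately, exploiting the fact that $\VV_P$ is designed so that the extra actions $b_i$ precisely implement $\leq_P$-weakening: for $i \in P$ they do nothing (preserving equality on $P$), while for $i \notin P$ they decrement component $i$ by one (allowing arbitrary lowering of components outside $P$).

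For the inclusion $\poststar(\VV_P) \subseteq \xcover{\leq_P}(\VV)$, I would start with a firing $\ini \xrightarrow{w}_{\VV_P} \vec y$ where $w \in (A \cup B)^*$, and let $u \in A^*$ be the projection of $w$ obtained by erasing all letters from $B$. Since $\delta_P(b) \leq \vec 0$ for every $b \in B$, a routine induction on prefixes of $w$ shows that $u$ is fireable from $\ini$ in $\VV$ and leads to a state $\vec z = \ini + \delta(u)$ satisfying $\vec z \geq \vec y$ componentwise. The key observation is then that the contribution of the erased $b$-letters, namely $\vec z - \vec y$, is zero on every component $i \in P$: letters $b_j$ with $j \in P$ contribute $\vec 0$, and letters $b_j$ with $j \notin P$ affect only component $j$, which lies outside $P$. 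Hence $\vec y \leq_P \vec z$, so $\vec y \in \Pdown{P}\poststar(\VV) = \xcover{\leq_P}(\VV)$.

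For the converse $\xcover{\leq_P}(\VV) \subseteq \poststar(\VV_P)$, I take $\vec y \in \setN^d$ with $\vec y \leq_P \vec z$ for some $\vec z \in \poststar(\VV)$ reached via $\ini \xrightarrow{u}_\VV \vec z$ with $u \in A^*$. The same word is fireable in $\VV_P$ (since actions of $A$ behave identically). From $\vec z$, I then fire, for each $i \notin P$, the action $b_i$ exactly $\vec z(i) - \vec y(i)$ times; this is a nonnegative integer by $\vec y \leq_P \vec z$, and each firing only decreases the already-nonnegative component~$i$, so it is legal in $\VV_P$. Since components in $P$ already agree between $\vec z$ and $\vec y$, the resulting state is exactly $\vec y$, and thus $\vec y \in \poststar(\VV_P)$.

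I do not expect a real obstacle: both directions reduce to bookkeeping about how the $b_i$-actions distribute over the partition $P$ versus $\{1,\ldots,d\}\setminus P$. The only step requiring minor care is the first one, where I must justify that erasing $b$-letters preserves fireability in $\VV$; this follows purely from the sign condition $\delta_P(b) \leq \vec 0$, which guarantees that intermediate configurations along $u$ dominate the corresponding configurations along $w$ and hence remain nonnegative.
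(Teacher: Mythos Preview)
Your proposal is correct and follows essentially the same approach as the paper's own proof: erase the $B$-letters to pass from $\poststar(\VV_P)$ to $\xcover{\leq_P}(\VV)$ using that $\delta_P(b)\leq\vec 0$, and conversely append a suitable product of $b_i$'s to realize the $\leq_P$-weakening from a $\VV$-reachable state. The only cosmetic differences are the order in which you treat the two inclusions and the slightly more explicit bookkeeping about components in~$P$.
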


\begin{proof}
  Let $\vec{x}\in \xcover{\leq_P}(\VV)$. By definition, there exists
  $\vec{y}\in \poststar(\VV)$ such that $\vec{x}\leq_P\vec{y}$. Note that
  $\vec y\in\poststar(\VV_P)$, and that $\vec{y}\xrightarrow{u}\vec{x}$
  in $\VV_P$ with $u=\prod_{i=1}^db_i^{\vec{y}(i)-\vec{x}(i)}$, so
  $\vec{x}\in \poststar(\VV_P)$. Conversely let $\vec{x}\in
  \poststar(\VV_P)$, and $u\in(A\cup B)^*$ such that
  $\ini\xrightarrow{u}_{\VV_P}\vec{x}$. Let~$v$ be obtained from $u$
  by erasing all letters of $B$. Since $\delta_P(b)\leq\vec0$ for $b\in B$, the
  word $v$ is fireable from $\ini$. Thus
  $\vec{y}=\ini+\delta(v)\in\poststar(\VV)$.  By definition of $\VV_P$ we
  have $\vec{x}\leq_P\vec{y}$, so~$\vec{x}\in \xcover{\leq_P}(\VV)$.
\end{proof}

As explained above, Theorem~\ref{thm:RE} is now established, by
combining Lemmas~\ref{lem:vj-down} and Corollary~\ref{cor:f-vs-P}
applied to $\vec M=\Lim\poststar(\VV)$, as well as
Lemma~\ref{lem:redP}.

\section{Computing the cover of a VAS with one zero-test}
\label{sec:alg}

This section describes an algorithm computing a basis of the cover of a \vasz given as input.

\medskip It will be convenient to consider \vas or \vasz whose initial
state belongs to $\setN^d_\omega$. The semantics given by~\eqref{eq:vasz} is
generalized by extending addition to $\setN_\omega$, letting $\omega+n=n+\omega=\omega$ for
all $n\in\setZ$. Notice that all results obtained so far for a \vas, and in
particular Theorem~\ref{thm:RE}, extend to \vas with such generalized
initial states. Indeed, an $\omega$ value in some component of \ini remains
frozen to $\omega$, whatever action is executed, and can therefore be
safely ignored.

We introduce a notation to change the initial state of a
\vas/\vasz~\VV. For   $\vec x \in \setN^d_\omega$, 
 we let $\VV(\vec
x)$ be the \vas/\vasz obtained from $\VV$ by replacing the initial state $\ini$ by~$\vec x$. 

\medskip In this section, we fix a \vasz
$\VV_z=\tup{A,\az,\delta,\ini}$. To simplify the presentation, we assume
without loss of generality that $\ini\in\{0\}\times\setN^{d-1}$, and that $\delta(\az) \in
\{0\} \times \setZ^{d-1}$.  In the sequel, we denote by $\VV=\tup{A,\delta,\ini}$ the
VAS obtained from $\VV_z$ by removing the zero test.  We shall work
with a single filter throughout the section: we introduce
$\vec{f}=(0,\omega,\ldots,\omega)$.

\medskip
\paragraph{\bfseries Input/output of the algorithm}
Our algorithm is inspired by Karp and Miller's one for a
VAS~\cite{Karp&Miller:69}.  Given as input a \vasz $\VV_z$, it builds
a finite tree with nodes labeled by vectors in $\{0\} \times \setN_\omega^{d-1}$, such
that when the algorithm terminates:
\begin{equation}
  \label{eq:13}
  \text{The set $\vec R$ of node labels is a basis of~$\xdown{f}{\poststar(\VV_z)}$.}\tag{$\ast$}
\end{equation}
Observe that, at the end of the algorithm, $\vec R$ is not a basis of
the whole cover of $\VV_z$, but only a basis of an $\vec f$-filtered
cover of $\VV_z$.

Let us first explain how to compute from $\vec R$ a basis of
$\cover{}(\VV_z)$. If $\vec x\in\cover{}(\VV_z)$, then there exist $u\in
A^*$ and $\vec y\in\setN^d$ such that $\ini\xrightarrow{u}\vec y\geq\vec x$.
Let us factorize $u=u_1u_2$, where $u_1$ ends with the last zero test
\az, or is empty if there is no zero-test. Then, we have
$\ini\xrightarrow{u_1}\vec r\xrightarrow{u_2}\vec y\geq\vec x$, with
$\vec r\in\{0\}\times\setN^{d-1}$ (if $u_1$ is empty, we use the assumption
$\ini\in\{0\}\times\setN^{d-1}$). In particular, $\vec
r\in\xdown{f}{\poststar(\VV_z)}=\dc \vec R\cap\setN^d$ by~\eqref{eq:13}. Since
no zero-test occurs in $u_2$, the state $\vec y$ reached after firing
$u$ belongs to $\poststar({\VV}(\vec r))$, and therefore, $\vec
x\in\dc\poststar(\VV(\vec r))$. This simple remark yields the following
result:
\begin{lemma}
  \label{lem:cover-from-filter} If $\vec R$ is a basis of $\xdown
  f\poststar(\VV_z)$, then $\cover{}(\VV_z)=\bigcup_{\vec r\in\vec
    R}\dc\poststar(\VV(\vec r))$.
\end{lemma}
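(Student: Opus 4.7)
The plan is to prove the two inclusions separately. The forward direction closely follows the informal argument already sketched in the paragraph preceding the lemma. The reverse direction is the more delicate one, because elements of $\vec R$ may carry $\omega$ components and so $\VV(\vec r)$ does not literally start from a reachable state of $\VV_z$; this will require a limit argument to recover genuine reachable witnesses in $\VV_z$.

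For the forward inclusion, I would pick $\vec x\in\cover{}(\VV_z)$ and a firing sequence $\ini\xrightarrow{u}\vec y\geq\vec x$ in $\VV_z$. Factor $u=u_1 u_2$ where $u_1$ ends at the last occurrence of $\az$ (or $u_1=\varepsilon$, using the standing assumption $\ini\in\{0\}\times\setN^{d-1}$). Writing $\ini\xrightarrow{u_1}\vec r\xrightarrow{u_2}\vec y$, we have $u_2\in A^*$ and $\vec r(1)=0$, so $\vec r\in\filter{\poststar(\VV_z)}{f}\subseteq\xdown{f}\poststar(\VV_z)$. Since $\vec R$ is a basis, there exists $\vec r'\in\vec R$ with $\vec r\leq\vec r'$; by monotonicity of $\VV$ (with $\omega$-components acting as absorbing values), $u_2$ is firable from $\vec r'$ in $\VV(\vec r')$ and the state reached dominates $\vec y$, hence $\vec x\in\dc\poststar(\VV(\vec r'))$.

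For the reverse inclusion, take $\vec r\in\vec R$ and $\vec y\in\dc\poststar(\VV(\vec r))$: there is $u\in A^*$ firable from $\vec r$ in $\VV$ with $\vec y\leq\vec r+\delta(u)$. Since $\vec R$ is a basis, $\vec r\in\dc\vec R=\Lim\xdown{f}\poststar(\VV_z)$, so I can pick a sequence $(\vec x_n)_n$ in $\xdown{f}\poststar(\VV_z)$ converging to $\vec r$, and for each $n$ some $\vec s_n\in\filter{\poststar(\VV_z)}{f}$ with $\vec x_n\leq\vec s_n$. In particular $\vec s_n(1)=0$, and the convergence $\vec x_n\to\vec r$ gives $\vec s_n(i)\geq\vec r(i)$ for $n$ large whenever $\vec r(i)<\omega$, while $\vec s_n(i)\to\infty$ whenever $\vec r(i)=\omega$. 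Using that $u$ has only finitely many prefixes, I verify that for $n$ large enough, $u$ is firable from $\vec s_n$ in $\VV$ (equivalently in $\VV_z$, since $u$ contains no zero test), and that $\vec s_n+\delta(u)\geq\vec y$. Then $\vec s_n+\delta(u)\in\poststar(\VV_z)$ and $\vec y\in\cover{}(\VV_z)$.

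The main obstacle is the reverse inclusion, and specifically the need to replace the ``virtual'' starting state $\vec r$ (which may have $\omega$-entries) by actual reachable states of $\VV_z$ large enough to mimic the firing of $u$. The argument is clean because the filter $\vec f=(0,\omega,\ldots,\omega)$ pins the first component to $0$ exactly, so the witnesses $\vec s_n$ automatically satisfy $\vec s_n(1)=0=\vec r(1)$, eliminating any interaction with the zero test; on the remaining components, ordinary VAS monotonicity suffices.
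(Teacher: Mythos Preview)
Your proposal is correct. The forward inclusion matches the paper's own argument, which is given only as an informal remark in the paragraph preceding the lemma; you make explicit the monotonicity step passing from the intermediate state $\vec r\in\{0\}\times\setN^{d-1}$ to some dominating $\vec r'\in\vec R$, which the paper leaves implicit.

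For the reverse inclusion, the paper gives no argument at all: the lemma is stated immediately after the forward sketch with the phrase ``This simple remark yields the following result.'' Your limit argument is therefore not a different route but a genuine completion of what the paper omits. It is the natural proof: since basis elements $\vec r\in\vec R$ may contain $\omega$-components, $\VV(\vec r)$ does not start from an actual reachable state of $\VV_z$, and one must approximate $\vec r$ by true witnesses $\vec s_n\in\filter{\poststar(\VV_z)}{f}$; the filter forces $\vec s_n(1)=0$, so no zero-test interferes, and finiteness of the prefix set of $u$ lets a single large $n$ absorb all firing constraints on the $\omega$-components. One small point worth making explicit in a write-up: the inclusion you prove is for $\vec y\in\setN^d$ (so that $\vec y(i)$ is finite on the $\omega$-coordinates of $\vec r$), which is the relevant case since $\cover{}(\VV_z)\subseteq\setN^d$; this is consistent with how the paper uses the lemma, namely to obtain a basis of $\cover{}(\VV_z)$ as the union of Karp--Miller bases of the $\VV(\vec r)$.
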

In words, we obtain a basis of $\cover{}(\VV_z)$ as the union of all
bases output by the usual Karp-Miller algorithm run on inputs
$\VV(\vec r)$, for $\vec r\in\vec R$. Let us now explain how to
compute~$\vec R$.

\paragraph{\bfseries Outline of the algorithm.}
To build a tree whose set of labels is $\vec R\subseteq\{0\}\times\setN_\omega^d$, the
algorithm works top-down from the root labeled by the initial state
$\ini\in\{0\}\times\setN^{d-1}$. Its main loop is similar to that of the
Karp-Miller algorithm: for each leaf of the tree,

\begin{enumerate}[leftmargin=*]
\item if the label of the leaf already occurs above it along the path
  to the root, then the leaf is not expanded, and will remain a leaf
  during the execution of the algorithm.
\item\label{item:6} Otherwise, we try to expand the tree from the
  leaf. As in the Karp-Miller algorithm:
  \begin{enumerate}[label=$\alph*.$,ref=$\alph*$,leftmargin=*]
  \item\label{item:7} we perform some standard acceleration, which is
    explained below,

  \item\label{item:8} we then expand the leaf, adding new children to
    it. However, unlike the Karp-Miller algorithm, which fires all
    original transitions of the VAS from the label of the leaf, we add
    two kinds of children to the current leaf labeled $\vec
    x\in\{0\}\times\setN^{d-1}_\omega$:
    \begin{enumerate}[label=$(\roman*)$,leftmargin=*]
    \item\label{item:9} one child corresponding to firing the
      zero-test from the leaf label, if possible,

    \item\label{item:10} several children representing a basis of
      $\xdown f{\poststar({\VV(\vec x)})}$.
    \end{enumerate}
  \end{enumerate}
\end{enumerate}
Note that Step~\ref{item:10} involves \VV and not $\VV_z$, \ie, the
zero-test is not considered during this step. It is a macro-step
computing itself a basis of a cover, to be used in the whole
computation.  In the particular case where the \vasz is obtained by
just adding to states of a VAS an extra first component, left
untouched (therefore remaining 0 forever) and where the zero-test is
never fired, step~\ref{item:10} actually computes in one shot the
cover of the original VAS (completed with the first component, left
to~0).  Theorem~\ref{thm:RE} shows that Step~\ref{item:10} is
effective.


\medskip We now enter the details of the algorithm.  At any step of
the execution, in the tree built by the algorithm, every ancestor node
$n_{\vec x}$ of a node $n_{\vec y}$ satisfies the invariant $\invar
{\vec{x}} {\vec{y}}$ where $\vec{x},\vec{y}$ are the labels of
$n_{\vec x},n_{\vec y}$ and where $\trans$ is the binary relation
defined over $\{0\}\times \setN_\omega^{d-1}$ by:
\begin{equation*}
  \invar {\vec{x}} {\vec{y}}~\text{ if }~\vec{y} \in\xdown f\Lim\poststar(\VV_z(\vec{x})).
\end{equation*}
By the next lemma, it is sufficient to maintain this invariant along
each parent-child edge.

\begin{lemma}
  \label{lem:transitive}
  The binary relation $\trans$ over $\{0\}\times\setN^{d-1}_\omega$ is reflexive and transitive.
\end{lemma}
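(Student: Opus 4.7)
The claim decomposes into reflexivity and transitivity.

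\textbf{Reflexivity} is immediate: for $\vec{x}\in\{0\}\times\setN^{d-1}_\omega$, the empty firing yields $\vec{x}\in\poststar(\VV_z(\vec{x}))\subseteq\Lim\poststar(\VV_z(\vec{x}))$ by~\eqref{eq:1}, and since $\vec{f}=(0,\omega,\ldots,\omega)$ has only one finite coordinate, namely the first, matched by $\vec{x}(1)=0$, we get $\vec{x}\in\filter{\Lim\poststar(\VV_z(\vec{x}))}{f}$, hence $\vec{x}\in\xdown{f}\Lim\poststar(\VV_z(\vec{x}))$.

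For \textbf{transitivity}, assume $\invar{\vec{x}}{\vec{y}}$ and $\invar{\vec{y}}{\vec{z}}$. Pick witnesses $\vec{y}^{*}\in\Lim\poststar(\VV_z(\vec{x}))$ with $\vec{y}\leq\vec{y}^{*}$ and $\vec{y}^{*}(1)=0$, and $\vec{z}^{*}\in\Lim\poststar(\VV_z(\vec{y}))$ with $\vec{z}\leq\vec{z}^{*}$ and $\vec{z}^{*}(1)=0$. Since $\vec{z}\leq\vec{z}^{*}$ and $\vec{z}(1)=0$, it suffices to establish $\vec{z}^{*}\in\Lim\poststar(\VV_z(\vec{x}))$. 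The first step lifts reachability from $\vec{y}$ to $\vec{y}^{*}$: because $\vec{y}\leq\vec{y}^{*}$ and $\vec{y}(1)=\vec{y}^{*}(1)=0$ we have $\vec{y}\lequn\vec{y}^{*}$, so the \lequn-monotonicity of $\VV_z$ (stated at the beginning of Section~\ref{sec:filtered-covers}) lifts each $\vec{z}_n\in\poststar(\VV_z(\vec{y}))$ in a sequence converging to $\vec{z}^{*}$ to some $\vec{z}_n'\in\poststar(\VV_z(\vec{y}^{*}))$ with $\vec{z}_n\lequn\vec{z}_n'$. Extracting, componentwise in $\setN_\omega$ and then diagonally over the $d$ coordinates, a convergent subsequence of $(\vec{z}_n')_n$ produces a limit $\vec{z}^{**}\in\Lim\poststar(\VV_z(\vec{y}^{*}))$; passing $\vec{z}_n\leq\vec{z}_n'$ and $\vec{z}_n(1)=\vec{z}_n'(1)=0$ to that limit yields $\vec{z}^{*}\leq\vec{z}^{**}$ and $\vec{z}^{**}(1)=0$.

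The main step is then the inclusion $\Lim\poststar(\VV_z(\vec{y}^{*}))\subseteq\Lim\poststar(\VV_z(\vec{x}))$, which will give $\vec{z}^{**}\in\Lim\poststar(\VV_z(\vec{x}))$ and therefore $\vec{z}\in\xdown{f}\Lim\poststar(\VV_z(\vec{x}))$. Writing $\vec{y}^{*}=\lim_m \vec{y}_m$ with $\vec{y}_m\in\poststar(\VV_z(\vec{x}))$, any $\vec{w}=\vec{y}^{*}+\delta(v)$ reachable from $\vec{y}^{*}$ by a word $v$ is also reachable from $\vec{y}_m$ for all sufficiently large $m$: on coordinates where $\vec{y}^{*}$ is finite one has $\vec{y}_m(i)=\vec{y}^{*}(i)$ eventually; on coordinates where $\vec{y}^{*}$ equals $\omega$, $\vec{y}_m(i)$ eventually dominates the absolute value of every partial sum of $\delta(v)$ on that coordinate; and the zero-test only reads coordinate~$1$, which is pinned to $0$ in both $\vec{y}^{*}$ and $\vec{y}_m$ and follows the same sequence of first-coordinate prefix-sums during the firing of $v$. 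Hence $\vec{y}_m+\delta(v)\in\poststar(\VV_z(\vec{x}))$ converges to $\vec{w}$, so $\vec{w}\in\Lim\poststar(\VV_z(\vec{x}))$. A diagonal argument (or, equivalently, the idempotence of the topological closure~$\Lim$) extends this from $\poststar(\VV_z(\vec{y}^{*}))$ to $\Lim\poststar(\VV_z(\vec{y}^{*}))$.

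The main obstacle is precisely this last transport step, namely moving reachability from the virtual limit state $\vec{y}^{*}$ down to its actual reachable approximations $\vec{y}_m$ without disturbing the zero-test. What makes it go through is that coordinate~$1$ stays equal to~$0$ everywhere along the chain $\vec{x}\trans\vec{y}\trans\vec{z}$: this is exactly what forces zero-tests to fire identically from $\vec{y}^{*}$ and from $\vec{y}_m$, and it is the reason one works with the filter $\vec{f}=(0,\omega,\ldots,\omega)$ throughout Section~\ref{sec:alg}.
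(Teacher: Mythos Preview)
Your proof is correct and rests on the same two ingredients as the paper's: $\lequn$-monotonicity of $\VV_z$, and the fact that reachability from a limit state $\vec{y}^*$ can be transported back to reachability from its approximants $\vec{y}_m$. The presentations differ, however. The paper isolates the second ingredient as a separate lemma (Lemma~\ref{lem:limreachlim}: $\Lim\postset\Lim\vec M=\Lim\postset\vec M$) and then argues transitivity purely algebraically, by manipulating the operators $\xdown{f}$, $\Lim$, and $\postset$ and invoking their commutation and idempotence. You instead chase elements directly: you pick explicit witnesses $\vec{y}^*,\vec{z}^*,\vec{z}^{**}$, lift from $\vec{y}$ to $\vec{y}^*$ via monotonicity, and inline the proof of the inclusion $\Lim\poststar(\VV_z(\vec{y}^*))\subseteq\Lim\poststar(\VV_z(\vec{x}))$, which is exactly the content of Lemma~\ref{lem:limreachlim} specialized to $\vec M=\poststar(\VV_z(\vec{x}))$. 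The paper's route is more modular (the auxiliary lemma is stated for arbitrary $\vec M\subseteq\setN_\omega^d$ and could be reused), while your route is more self-contained and makes the role of the first coordinate being pinned to~$0$ very explicit. One minor remark: your final paragraph is commentary rather than proof content; the argument is already complete once you have $\vec{z}^{**}\in\Lim\poststar(\VV_z(\vec{x}))$ with $\vec{z}\leq\vec{z}^{**}$ and $\vec{z}^{**}(1)=0$.
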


The proof of Lemma~\ref{lem:transitive} is itself based on the
following intermediate statement. To shorten notation, for a set $\vec
M\subseteq\setN^d_\omega$, we let $\postset\vec M=\bigcup_{\vec x\in\vec M}\poststar(\VV_z(\vec
x))$ denote the set of states that can be reached in $\VV_z$ from any
initial vector chosen in $\vec M$ (in this notation used only in
Lemmas~\ref{lem:transitive} and \ref{lem:limreachlim}, the \vasz will
always be $\VV_z$, and is therefore omitted).
\begin{lemma}
  \label{lem:limreachlim}
  Let $\vec M\subseteq\setN_\omega^d$. Then, we have
  \begin{math}
    \Lim\postset\Lim\vec M=\Lim\postset\vec M.
  \end{math}

\end{lemma}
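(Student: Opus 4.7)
The plan is to split the equality into two inclusions. The easy inclusion $\Lim\postset\vec M \subseteq \Lim\postset\Lim\vec M$ will follow immediately from $\vec M \subseteq \Lim\vec M$ (by \eqref{eq:1}) together with the monotonicity of both $\postset$ and $\Lim$. All of the work will lie in proving the converse inclusion, which I plan to handle by a continuity-and-diagonalization argument on the firing relation of $\VV_z$.

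Given $\vec y \in \Lim\postset\Lim\vec M$, I will fix a sequence $\vec y_n \to \vec y$ with each $\vec y_n \in \postset\Lim\vec M$, so that for each $n$ there exist $\vec x_n \in \Lim\vec M$ and a word $u_n \in (A \cup \{\az\})^*$ with $\vec x_n \xrightarrow{u_n} \vec y_n$ in $\VV_z$. Each $\vec x_n$ can then itself be written as the limit of a sequence $(\vec x_{n,k})_k$ drawn from $\vec M$. The central step will be to establish the following continuity property: for each $n$, the word $u_n$ is fireable from $\vec x_{n,k}$ for all sufficiently large $k$, and the resulting states $\vec y_{n,k}$ satisfy $\vec y_{n,k} \to \vec y_n$ as $k \to \infty$. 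The argument will be componentwise along the firing of $u_n$: finite components of $\vec x_n$ are matched exactly by those of $\vec x_{n,k}$ for $k$ large, while $\omega$-components of $\vec x_n$ are approximated by arbitrarily large integers in $\vec x_{n,k}$, so that all nonnegativity constraints along the firing of the fixed finite sequence $u_n$ remain satisfied. Once this continuity is established, a diagonal extraction picking some $k_n$ for each $n$ will produce a sequence $(\vec y_{n,k_n})_n$ in $\postset\vec M$ converging to $\vec y$, yielding $\vec y \in \Lim\postset\vec M$.

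The main obstacle I anticipate is the zero-test $\az$, which breaks the naive monotonicity argument in the first component. The key observation I will use is that an $\omega$-valued component is preserved by every transition, so if $\vec x_n(1) = \omega$ then the first component remains $\omega$ throughout any firing from $\vec x_n$ and never equals $0$; hence $u_n$ contains no occurrence of $\az$ in that case, and ordinary VAS monotonicity in the remaining components suffices. Conversely, if $u_n$ does use the zero-test, then necessarily $\vec x_n(1) = 0$, a finite value which therefore coincides with $\vec x_{n,k}(1)$ for $k$ large enough, so each occurrence of $\az$ in $u_n$ remains fireable from $\vec x_{n,k}$ at exactly the same intermediate point as from $\vec x_n$. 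This dichotomy on $\vec x_n(1)$ is exactly what will make the continuity step go through in the presence of the zero-test, and hence is the technical heart of the proof.
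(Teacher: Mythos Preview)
Your approach is correct and follows the same continuity-plus-diagonalization idea as the paper's proof; your double-index setup with varying starting points $\vec x_n$ and approximating sequences $(\vec x_{n,k})_k$ is in fact more explicit than the paper's presentation, which collapses to a single limit starting point and leaves the zero-test handling implicit (the paper even writes $u_n\in A^*$). One small slip to fix: from ``$u_n$ uses $\az$'' you can only conclude that $\vec x_n(1)$ is \emph{finite}, not that $\vec x_n(1)=0$ --- the first component may well be positive initially and only reach $0$ at some intermediate step of $u_n$ --- but since your subsequent argument (``a finite value which therefore coincides with $\vec x_{n,k}(1)$ for $k$ large enough'') only uses finiteness, the proof goes through unchanged once you replace ``$\vec x_n(1)=0$'' by ``$\vec x_n(1)<\omega$''.
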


\begin{proof}
  Since $\vec M\subseteq\Lim \vec M$, we have $\Lim\postset\vec
  M\subseteq\Lim\postset\Lim\vec M$. For the other inclusion, pick $\vec
  x\in\Lim\postset\Lim\vec M$. This means that we have the following
  situation
  \begin{equation*}
    \vec y_n \limra \vec y \xrightarrow{u_n} \vec x_n \limra \vec x,
  \end{equation*}
  with $\vec y_n\in\vec M$, $\vec y,\vec x_n\in\setN_\omega^d$ and $u_n\in A^*$ for
  all $n$.

  \smallskip Since $\lim_n\vec y_n=\vec y$, we may assume that $\vec
  y_n(i)=\vec y(i)$ for all $n$ if $\vec y(i)<\omega$, and that $(\vec
  y_n(i))_n$ is strictly increasing if $\vec y(i)=\omega$.  Let $k_n$ be a
  strictly increasing sequence such that $k_n\geq n+\max_{1\leq i\leq
    d}|\delta(u_n)(i)|$, and let $\vec y'_n=\vec y_{k_n}$. Clearly,
  $\lim_n\vec y'_n=\vec y$. By construction, $u_n$ is fireable from
  $\vec y'_n$: let $\vec y'_n\xrightarrow{u_n}\vec x'_n$. We then have
  $\vec x'_n(i)=\vec x_n(i)$ if $\vec y(i)<\omega$, and $\vec x'_n(i)\geq n$
  if $\vec y(i)=\omega$. So, $\vec x=\lim_n\vec x'_n\in\Lim\postset\vec M$.
\end{proof}

\begin{proof}[Proof of Lemma~\ref{lem:transitive}]
  Reflexivity is obvious. For transitivity, assume that $\invar {\vec
    x}{\invar{\vec y}{\vec z}}$. Then by definition of $\trans$, we
  have $\vec{z}\in\xdown{f}\Lim\poststar(\VV_z(\vec y))$ and $\vec{y} \in
  \xdown{f}\Lim\poststar(\VV_z(\vec x))$.  Since $\vec f=(0,\omega,\ldots,\omega)$,
  we can use monotony to obtain $\xdown{f}\poststar(\VV_z(\vec
  x))=\poststar\xdown{f}\poststar(\VV_z(\vec x))$. We deduce from this
  equality that
  \begin{alignat*}{2}
    \xdown{f}\Lim\xdown{f}\poststar(\VV_z(\vec x))&=\xdown{f}\Lim\poststar\xdown{f}\poststar(\VV_z(\vec x))&&\text{\quad by applying the monotonous}\\[-.5ex]
    &&&\text{\quad operator $\xdown{f}\Lim$,}\\
    &=\xdown{f}\Lim\poststar\Lim\xdown{f}\poststar(\VV_z(\vec
    x))&&\text{\quad by Lemma~\ref{lem:limreachlim}}.
  \end{alignat*}
  Since $\Lim$ and $\xdown{f}$ commute (see \eqref{eq:10}), and since
  the operator $\xdown{f}$ is obviously idempotent, we finally get
  $\xdown{f}\Lim\poststar(\VV_z(\vec
  x))=\xdown{f}\Lim\poststar\xdown{f}\Lim\poststar(\VV_z(\vec
  x))$. Now, the hypotheses imply that $\vec
  z\in\xdown{f}\Lim\poststar\xdown{f}\Lim\poststar(\VV_z(\vec x))$. We
  deduce that $\vec{z}\in\xdown{f}\Lim\poststar(\VV_z(\vec x))$, that
  is,~$\invar{\vec x}{\vec z}$.
\end{proof}

\medskip

Assume now that $\vec{x}\in \{0\}\times\setN_\omega^{d-1}$ labels a leaf. We create a
child of this leaf if the vector $\vec{y}=\vec{x}+\delta(\az)$ is
nonnegative. Note that in this case $\vec{y}\in\{0\}\times\setN_\omega^{d-1}$, since
$\delta(\az)(1)=0$. We do not violate the invariant when creating the child
labeled~$\vec{y}$ since $\invar{\vec{x}}{\vec{y}}$. We also add new
children labeled by elements of the minimal basis $\vec B(\vec{x})$ of
$\xdown f\Lim\poststar(\VV(\vec{x}))$. Since $\setN^d\cap \xdown
f\Lim\poststar(\VV(\vec{x}))$ is equal to $\setN^d\cap \xdown
f\poststar(\VV(\vec{x}))$, by Theorem~\ref{thm:RE}, one can compute
$\vec B(\vec x)$. Observe that $\invar{\vec{x}}{\vec{b}}$ for every
$\vec{b}\in \vec B(\vec{x})$, so that the invariant is still fulfilled
after adding elements of $\vec B(\vec x)$.

\bigskip
The termination of the algorithm is obtained by introducing an
\emph{acceleration operator}~$\nabla$.  For
$\vec{x},\vec{y}\in \{0\}\times\setN_\omega^{d-1}$ such that $\vec x\leq\vec y$, we define the
vector $\vec{x}\nabla\vec{y}\in\{0\}\times\setN^{d-1}_\omega$~by:
$$
(\vec{x}\nabla\vec{y})(i)=
\begin{cases}
  \omega & \text{if\/ $\vec{ x}(i) < \vec {y}(i)$} \\
  \vec x(i) & \text{if\/ $\vec{ x}(i) = \vec {y}(i)$}.
\end{cases}
$$
Let us first verify that performing acceleration cannot violate the invariant.
\begin{lemma} \label{lemm:accel-sound} If $\invar{\vec{x}}{\vec{y}}$
  with $\vec{x}\leq \vec{y}$ then $\invar{\vec{x}}{(\vec x\nabla\vec y)}$.
\end{lemma}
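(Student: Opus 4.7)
The plan is to unfold the definition of $\invar{\vec{x}}{\vec{y}}$, produce a reachability witness for the membership $\vec{y}\in\xdown{f}{\Lim\poststar(\VV_z(\vec{x}))}$, and then run a classical Karp--Miller pumping argument on this witness. The twist is that the monotony used during pumping must be the $\lequn$-monotony stressed earlier in the paper, so that the zero-tests inside the pumped sequence are preserved when refired from a dominating state.

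First I would unpack the hypothesis. Since $\vec{f}=(0,\omega,\ldots,\omega)$, the assumption $\invar{\vec{x}}{\vec{y}}$ gives some $\vec{z}\in\Lim\poststar(\VV_z(\vec{x}))$ with $\vec{z}(1)=0$ and $\vec{y}\leq\vec{z}$. Writing $\vec{z}=\lim_n\vec{z}_n$ with $\vec{z}_n\in\poststar(\VV_z(\vec{x}))$, I would pass to a subsequence ensuring $\vec{z}_n(1)=0$ and $\vec{z}_n\geq\vec{y}\geq\vec{x}$ for every $n$. This is possible by the componentwise definition of convergence in $\setN_\omega^d$, together with the fact that the filter forces $\vec{z}(1)=0$. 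In particular $\vec{x}\lequn\vec{z}_n$, which brings us into the regime where $\lequn$-monotony of $\VV_z$ applies.

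The core of the argument is then pumping. Fix $n$ and let $\sigma_n$ be a firing sequence witnessing $\vec{x}\xrightarrow{\sigma_n}\vec{z}_n$. Because $\vec{x}\lequn\vec{z}_n$, $\lequn$-monotony lets $\sigma_n$ be refired from $\vec{z}_n$, reaching $2\vec{z}_n-\vec{x}$, still $\lequn$-above $\vec{x}$. A straightforward induction shows that $\sigma_n^k$ is fireable from $\vec{x}$ for every $k\geq 1$ and reaches $\vec{x}+k(\vec{z}_n-\vec{x})\in\poststar(\VV_z(\vec{x}))$. Letting $k\to\infty$ yields a limit state $\vec{w}\in\Lim\poststar(\VV_z(\vec{x}))$ with $\vec{w}(1)=0$, $\vec{w}(i)=\omega$ on every index $i$ with $\vec{z}_n(i)>\vec{x}(i)$, and $\vec{w}(i)=\vec{x}(i)$ elsewhere.

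To finish, I only need to check that $\vec{w}\geq\vec{x}\nabla\vec{y}$. On indices $i\in I=\{j:\vec{x}(j)<\vec{y}(j)\}$, we have $\vec{z}_n(i)\geq\vec{y}(i)>\vec{x}(i)$, so $\vec{w}(i)=\omega=(\vec{x}\nabla\vec{y})(i)$; on the other indices, $(\vec{x}\nabla\vec{y})(i)=\vec{x}(i)\leq\vec{w}(i)$. Since $\vec{w}(1)=0=\vec{f}(1)$, it follows that $\vec{x}\nabla\vec{y}\in\xdown{f}{\Lim\poststar(\VV_z(\vec{x}))}$, that is, $\invar{\vec{x}}{(\vec{x}\nabla\vec{y})}$. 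The main obstacle is precisely the legitimacy of iterating $\sigma_n$ in the presence of the zero-test: without the guarantee $\vec{x}(1)=\vec{z}_n(1)=0$, $\lequn$-monotony would not be available, the refiring of $\sigma_n$ could fail at a zero-test, and the whole pumping step would be unsound.
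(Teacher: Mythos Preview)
Your argument is essentially the paper's: extract from the hypothesis a state $\vec z\in\Lim\poststar(\VV_z(\vec x))$ with $\vec z(1)=0$ and $\vec z\geq\vec y$, approximate it by reachable states $\vec z_n$ with first component~$0$, fix one index, and pump the corresponding firing sequence using $\lequn$-monotony. One small inaccuracy to fix: you cannot in general arrange $\vec z_n\geq\vec y$ by passing to a subsequence, because on an index $i$ with $\vec x(i)<\omega$ and $\vec y(i)=\omega$ every $\vec z_n$ reachable from $\vec x$ has a finite $i$-th component, so $\vec z_n(i)<\vec y(i)$. What you really need (and what the paper states) is only $\vec z_n\geq\vec x$ together with the strict inequality $\vec z_n(i)>\vec x(i)$ on each index with $\vec x(i)<\vec y(i)$; this weaker claim is achievable for large $n$ and already suffices for the $\lequn$-monotony pumping and for the final comparison with $\vec x\nabla\vec y$.
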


\begin{proof}
  If  $\invar{\vec{x}}{\vec{y}}$, then $\vec y\in\xdown
  f\Lim\poststar(\VV_z(\vec x))$, and  we obtain the following situation
  \begin{equation*}
    \vec x\xrightarrow{~u_n~}\vec z_n \limra{\vec z}\geq\vec y,
  \end{equation*}
  with $u_n\in (A \cup \{ \az \})^*$ and $\vec z,\vec
  z_n\in\{0\}\times\setN_\omega^{d-1}$. Since $\vec z\geq\vec y\geq\vec x$, there exists $\ell$
  such that $\vec z_\ell(i)\geq\vec x(i)$ for all indices $i$ satisfying
  $\vec x(i)<\omega$, and further $\vec z_\ell(i)>\vec x(i)$ if $\vec
  x(i)<\vec y(i)$.
  Therefore, $\vec z_\ell\geq\vec x$, and as we have $\vec z_\ell(1) = \vec
  x(1) = 0$, we deduce that $u_\ell^k$ is fireable from $\vec x$ for
  all~$k$. Call $\vec t_k$ the state reached from $\vec x$ after
  firing $u_\ell^k$. Then we have $\vec t_k\in\{0\}\times\setN^{d-1}_\omega$ and
  $\lim_{k\to\infty}\vec t_k\geq\vec x\nabla\vec y$, which proves $\vec x\nabla\vec
  y\in\xdown f\Lim\poststar(\VV_z(\vec x))$.
\end{proof}

\begin{algorithm}[htpb]
  \caption{An algorithm to compute a basis of $\xdown{f} \poststar(\VV_z)$}
  \label{alg:main}
  \begin{itemize}
  \item \underbar{\sf Inputs}: A \vasz $\VV_z$ such that
    $\ini\in\{0\}\times\setN^{d-1}$ and $\delta(\az)\in\{0\}\times\setZ^{d-1}$.
  \item \underbar{\sf Outputs}: $\vec R$, a finite subset of $\{0\} \times 
    \setN^{d-1}_\omega$.
  \item \underbar{\sf Internal Variables}: 
    \begin{itemize}
    \item $\TT$, a tree labeled by elements of $\setN^d_\omega$.
    \item $\NN$, a set of nodes.
    \end{itemize}
  \item \underbar{\sf Algorithm}:
    \begin{algorithmic}[1]
      \State {Initialize $\TT$ as a single root $n_{\textit{in}}$, labeled by $\ini$}
      \State {$\NN \gets \{ n_{\textit{in}} \}$}
      \While {$\NN \neq \emptyset$}
      \State {Choose a node $n$ from $\NN$}
      \State {$\NN\gets \NN\setminus\{n\}$}
      \State {$\vec x \gets label(n)$}
      \If{no strict ancestor of $n$ has label $\vec{x}$}     
      \ForAll {strict ancestor $n_0$ of $n$} \label{line:accel-begin}
      \Comment{\small\smaller Acceleration, step~\ref{item:6}.\ref{item:7}}
      \State {$\vec{x}_0 \gets label(n_0)$}
      \If {$\vec{x}_0\leq \vec{x}$}
      \State{$\vec{x}\gets\vec{x}_0\nabla\vec{x}$}
      \EndIf \label{line:accel-end}
      \EndFor
      \State{Replace the label of $n$ by $\vec{x}$}
      \If{$\vec{x}+\delta(\az)\geq \vec0$} \label{line:az}
      \Comment{\small\smaller Expand by zero-test, step~\ref{item:6}.\ref{item:8}~\ref{item:9}}
      \State {Create a new node in $\TT$ labeled by $\vec{x}+\delta(\az)$, as 
        a child of $n$} \label{line:create-az}
      \State {Add this node to $\NN$}
      \EndIf
      \ForAll{$\vec{b}\in \vec B(\vec{x})$} \label{line:vas}
      \Comment{\small\smaller Expand by $\vec B(\vec x)$, step~\ref{item:6}.\ref{item:8}~\ref{item:10}}
      \State {Create a new node in $\TT$ labeled by $\vec{b}$, 
        as a child of $n$} \label{line:create-vas}
      \State {Add this node to $\NN$}
      \EndFor
      \EndIf
      \EndWhile
      \State {$\vec R \gets \bigl\{ label(n) ~|~ n \in nodes(\TT) \bigr\}$}
      \State{\Return {$\vec R$}}
    \end{algorithmic}
  \end{itemize}
\end{algorithm}

Algorithm \ref{alg:main} computes $\vec R$.  If every leaf has a
(strict) ancestor with the same label, then it terminates and returns
the current set of node labels. If it finds some leaf $n$ whose
ancestors carry different labels than that of~$n$, it performs
acceleration at $n$ (step \ref{item:6}.\ref{item:7} of the outline):
while $n$ has an ancestor $n_0$ labeled by a vector $\vec{x_0}$ such
that $\vec{x_0}\leq\vec{x}<\vec{x_0}\nabla\vec{x}$, it replaces the label
$\vec{x}$ of the leaf $n$ with $\vec{x_0}\nabla\vec{x}$. 

From Lemma~\ref{lemm:accel-sound}, we deduce that the invariant still
holds. Since this loop just replaces some components by $\omega$, it
terminates.  Finally, once the label $\vec x$ of $n$ has been updated,
the algorithm creates a new child labeled by $\vec{x}+\delta(\az)$ if this
vector is nonnegative (step~\ref{item:6}.\ref{item:8}\ref{item:9}),
and it creates a new child of $n$ labeled by $\vec{b}$ for each
$\vec{b}\in \vec B(\vec{x})$ (step
\ref{item:6}.\ref{item:8}\ref{item:10}).  Note that all labels belong
to $\{0\}\times\setN^d_\omega$, since $\{\xini,\delta(\az)\}\cup \vec B(\vec x)\subseteq\{0\}\times\setN^d_\omega$.
\begin{proposition}
  \label{prop:algo-correctness}
  Algorithm \ref{alg:main} terminates, and it returns a finite set
  $\vec R$ such that
  \begin{equation}
    \label{eq:14}
    \dc \vec R =\xdown{f}{\Lim\poststar(\VV_z)}.
  \end{equation}
\end{proposition}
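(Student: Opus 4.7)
The proposition has three parts: termination, the inclusion $\dc\vec R\subseteq\xdown{f}\Lim\poststar(\VV_z)$ (soundness), and the reverse inclusion (completeness). The plan is to handle them in this order, the last being by far the most delicate.

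For termination, I would follow the classical Karp--Miller argument. The tree is finitely branching because each $\vec B(\vec x)$ is finite by Theorem~\ref{thm:RE} and at most one zero-test child is added per node; an infinite tree would therefore, by K\"onig's lemma, contain an infinite branch. Along such a branch, once a component becomes $\omega$ it stays $\omega$ forever: acceleration only introduces $\omega$'s, the zero-test preserves them since $\omega+n=\omega$, and every element of $\vec B(\vec x)$ inherits the $\omega$-positions of $\vec x$ (reachable states and their limits do). Hence the set of $\omega$-positions stabilizes along the branch, and past stabilization the labels restricted to their finite-valued coordinates form an infinite sequence in some $\setN^{d'}$. Dickson's lemma produces a pair $\vec x_i\leq\vec x_j$ with $i<j$: if the inequality is strict, the acceleration inner loop introduces a new $\omega$, contradicting stabilization; if $\vec x_i=\vec x_j$, the descendant becomes an unexpanded leaf, contradicting the infinitude of the branch.

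For soundness I would maintain the invariant $\invar{\ini}{\vec x}$ for every label $\vec x$ in the tree. It holds at the root by reflexivity of $\trans$ (Lemma~\ref{lem:transitive}). For each newly added child $\vec y$ of a node labeled $\vec x$, the local invariant $\invar{\vec x}{\vec y}$ holds immediately when $\vec y=\vec x+\delta(\az)$, and when $\vec y\in\vec B(\vec x)$ it follows from the inclusion $\xdown{f}\Lim\poststar(\VV(\vec x))\subseteq\xdown{f}\Lim\poststar(\VV_z(\vec x))$. Acceleration preserves the invariant by Lemma~\ref{lemm:accel-sound}, and transitivity of $\trans$ (Lemma~\ref{lem:transitive}) propagates it from the root to every label. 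Thus $\vec R\subseteq\xdown{f}\Lim\poststar(\VV_z)$, and the inclusion $\dc\vec R\subseteq\xdown{f}\Lim\poststar(\VV_z)$ follows by downward closure of the right-hand side.

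For completeness, since $\vec R$ is finite, $\dc\vec R$ is already limit closed, so it is enough to show $\xdown{f}\poststar(\VV_z)\subseteq\dc\vec R$. Pick $\vec s$ there with $\vec s\leq\vec t\in\poststar(\VV_z)$ and $\vec t(1)=0$, and induct on the number $k$ of zero-tests in a fixed firing sequence $\ini\xrightarrow{u}\vec t$. When $k=0$, one has $\vec t\in\xdown{f}\poststar(\VV(\ini))\subseteq\dc\vec B(\ini)$ by Theorem~\ref{thm:RE}; the root is always expanded (it has no strict ancestor), so $\vec B(\ini)\subseteq\vec R$ and $\vec t$ is dominated. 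For $k\geq 1$, factor $u=u'\,\az\,w$ at the last zero-test, let $\vec t_k=\ini+\delta(u')$, and apply the induction hypothesis to find a tree label $\vec r\geq\vec t_k$. The key technical point is that the node carrying $\vec r$ may be an unexpanded leaf; one then walks up the branch to the earliest ancestor with the same pre-acceleration label, which is expanded and whose post-acceleration label $\vec r'$ dominates $\vec r$ since acceleration only inflates components to~$\omega$. Because $\vec r'(1)=0$, the algorithm creates a zero-test child labeled $\vec r'+\delta(\az)\geq\vec t_k+\delta(\az)$, and a second application of Theorem~\ref{thm:RE} to $\VV(\vec r'+\delta(\az))$ captures the remaining tail $w\in A^*$, yielding an element of $\vec B(\vec r'+\delta(\az))\subseteq\vec R$ dominating~$\vec t$. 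The principal obstacle, where most of the effort lies, is this bookkeeping around unexpanded leaves and the interaction between acceleration and the ``same label on the branch'' stopping condition.
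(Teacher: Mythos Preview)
Your overall structure---termination via K\"onig and monotone $\omega$-positions along branches, soundness via the invariant $\invar{\ini}{\vec x}$ and Lemmas~\ref{lem:transitive} and~\ref{lemm:accel-sound}, completeness by first showing $\xdown{f}\poststar(\VV_z)\subseteq\dc\vec R$ and then closing under limits---is the paper's. The one structural difference is in the completeness induction: the paper inducts on the \emph{length} of the witnessing word $u$, peeling off one action $a$ at a time, so that each inductive step descends a single tree level (the zero-test child of an expanded node labeled $\vec\gamma$ when $a=\az$, one of its $\vec B(\vec\gamma)$-children when $a\in A$). Your induction on the number of zero-tests bundles the last $\az$ together with the entire trailing VAS-segment $w$, which forces you through \emph{two} tree levels per step (the zero-test child, then its $\vec B$-children) and hence through two applications of the unexpanded-leaf argument rather than one; it works, but it is heavier than what the paper does.

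On that unexpanded-leaf argument, your phrasing is more tangled than necessary. The clean statement, implicit in the paper, is simply: every $\vec r\in\vec R$ is the final label of some \emph{expanded} node. Indeed, if a node with final label $\vec r$ was not expanded, then by the test at line~7 some strict ancestor already carries (post-acceleration) label $\vec r$, and that ancestor, having a descendant, was itself expanded. So one may take $\vec r'=\vec r$ directly; there is no walk up the branch to a different label, and no need to argue that acceleration inflates it. Two smaller bookkeeping points: children created from $\vec B(\vec x)$ may subsequently be accelerated, so you only have $\vec B(\vec x)\subseteq\dc\vec R$, not $\vec B(\vec x)\subseteq\vec R$; and in your inductive step the $\vec B$-children you ultimately need are those of the \emph{post-acceleration} label of the zero-test child, not of $\vec r'+\delta(\az)$ itself.
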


\begin{proof}
  The termination of the algorithm follows from K\"onig's lemma. If
  the algorithm does not terminate, then it would generate an infinite
  tree. Because this tree has a finite branching degree, by K\"onig's
  lemma, there is an infinite branch. Since $\leq$ is a well-ordering over
  $\{0\}\times\setN_\omega^{d-1}$, this implies that we can extract from this infinite
  branch an infinite increasing subsequence. However, since we add
  children to a leaf only if there does not exist a strict ancestor
  labeled by the same vector, this sequence cannot contain the same
  vector twice, and must therefore be \emph{strictly} increasing. But,
  due to the use of the operator $\nabla$, a component with an integer is
  replaced by $\omega$ at every acceleration step. Because the number of
  $\omega$'s in the vectors labeling a branch cannot decrease, we obtain a
  contradiction. Let us now prove~\eqref{eq:14}.

  \begin{itemize}[leftmargin=1.7em]
  \item [$\subseteq$] Let $n$ be a node of $\TT$, whose label is $\vec x$. By
    Lemmas~\ref{lem:transitive} and \ref{lemm:accel-sound}, we have
    $\invar{\vec{x_{\textit{in}}}}{\vec x}$. By definition of
    $\trans$, we conclude that $\vec x\in \xdown{f}
    \Lim\poststar(\VV_z)$.

    \smallskip
  \item [$\supseteq$] We shall show $\xdown{f}{\poststar(\VV_z)}\subseteq\dc\vec
    R$. The desired inclusion follows by taking limits of both sides,
    since
    $\Lim\xdown{f}{\poststar(\VV_z)}=\xdown{f}\Lim{\poststar(\VV_z)}$
    and $\Lim\dc\vec R=\dc\vec R$ (since $\vec R$ is finite).  So let
    $(0, \vec \alpha) \in \xdown{f} \poststar(\VV_z)$: there exist $\vec {\alpha}'
    \in \setN^{d-1}$ with $\vec\alpha\leq\vec{\alpha}'$ and $u \in (A \cup \{ \az \})^*$ such
    that $\ini \ru{u} (0, \vec {\alpha}')$. We will show by induction on
    the length of $u$ that $(0, \vec {\alpha}') \in \dc \vec R$. If $u$ is
    empty, just observe that $\ini$ labels the root, hence $\ini\in \vec
    R$. Otherwise, $u=va$ and we have:
    $$
    \ini \ru{v} (0, \vec \beta) \ru{a} (0, \vec {\alpha}')
    $$

    The induction hypothesis yields $(0, \vec \beta) \in \dc \vec R$. Hence,
    there is in the tree a node labeled $\vec \gamma\geq\vec \beta$. Since a node
    label cannot be modified after acceleration (lines
    \ref{line:accel-begin} to \ref{line:accel-end}), this means that
    instructions at lines \ref{line:az} and \ref{line:vas} have been
    executed when the variable $\vec x$ was set to $\vec \gamma$, and this
    ensures that $\vec {\alpha}' \in \dc \vec R$.

  \end{itemize}    

  \noindent
  We have proved that Algorithm~\ref{alg:main} computes a basis $\vec R$ of
  $\xdown{f}{\poststar(\VV_z)}$. 
\end{proof}

Proposition~\ref{prop:algo-correctness} and
Lemma~\ref{lem:cover-from-filter} finally imply the central theorem of
this paper:

\begin{theorem}
  Given a \vasz $\VV_z$, one can effectively compute the minimal basis of
  $\cover{}(\VV_z)$.\label{thm:cover-vas0}
\end{theorem}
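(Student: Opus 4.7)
The plan is to assemble the theorem directly from the two technical building blocks established just above, and then extract the minimal basis. First, by Proposition~\ref{prop:algo-correctness}, Algorithm~\ref{alg:main} terminates on input $\VV_z$ and outputs a finite set $\vec R\subseteq\{0\}\times\setN_\omega^{d-1}$ such that $\dc\vec R=\xdown{f}\Lim\poststar(\VV_z)$. In particular, $\vec R$ is a basis of the filtered reachability set $\xdown{f}\poststar(\VV_z)$, since $\xdown{f}\poststar(\VV_z)$ and $\xdown{f}\Lim\poststar(\VV_z)$ share the same bases (they have the same downward closure in $\setN^d$ by limit-closure arguments already used in the paper).

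Next, I would invoke Lemma~\ref{lem:cover-from-filter}, which gives
\begin{equation*}
  \cover{}(\VV_z)=\bigcup_{\vec r\in\vec R}\dc\poststar(\VV(\vec r)).
\end{equation*}
Since $\vec R$ is finite, it is enough to compute, for each $\vec r\in\vec R$, a basis of $\cover{}(\VV(\vec r))=\dc\poststar(\VV(\vec r))$. This is precisely what the classical Karp--Miller construction gives (Theorem~\ref{thm:cover}), applied to the pure VAS $\VV(\vec r)$ obtained from $\VV_z$ by dropping the zero-test and using $\vec r$ as initial state. The only subtlety is that $\vec r$ may have $\omega$-components, but as observed at the beginning of Section~\ref{sec:alg}, all results for VAS extend to generalized initial states in $\setN^d_\omega$, the $\omega$-components being ignored throughout the run. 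Taking the union of these finitely many finite bases yields a finite set $\vec C\subseteq\setN_\omega^d$ with $\dc\vec C=\cover{}(\VV_z)$.

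Finally, to obtain the \emph{minimal} basis, I would just keep the $\leq$-maximal elements of $\vec C$. This is a standard finite computation: by the remarks in Section~\ref{sec:prelim}, the maximal elements of any basis of a downward closed set $\vec D\subseteq\setN_\omega^d$ still form a basis of $\vec D$, and this basis depends only on $\vec D$, hence is canonical.

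There is no real obstacle left at this point; the difficulty has been entirely absorbed into Proposition~\ref{prop:algo-correctness} and Lemma~\ref{lem:cover-from-filter}. The only item deserving care in the write-up is to verify that the normalization performed at the start of Section~\ref{sec:alg} (i.e., the assumptions $\ini\in\{0\}\times\setN^{d-1}$ and $\delta(\az)\in\{0\}\times\setZ^{d-1}$) is without loss of generality, so that the argument applies to an arbitrary input \vasz: this is immediate by padding with an extra component that carries the original value of the zero-tested counter and is reset to~$0$ along the zero-test.
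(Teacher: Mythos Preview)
Your proposal is correct and follows exactly the paper's approach: the paper simply states that Proposition~\ref{prop:algo-correctness} and Lemma~\ref{lem:cover-from-filter} together imply the theorem, and you have spelled out precisely this derivation (with the Karp--Miller step for each $\VV(\vec r)$ and the extraction of maximal elements made explicit). Your write-up is in fact more detailed than the paper's own one-line justification; the only place to tighten is the last paragraph on the normalizing assumptions, where your ``padding'' description is a little vague, but this is cosmetic rather than a gap.
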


This theorem solves the place-boundedness problem for \vasz. For
vector addition systems, it can be transferred to obtain
model-checking algorithms. We investigate model-checking problems in
the presence of one zero-test in the next section. However, we shall
use the decidability of the reachability problem instead of
Theorem~\ref{thm:cover-vas0}.

\section{Repeated Control State Reachability is decidable for \texorpdfstring{\vassz}{VASS0}}
\label{sec:rcsrp}

Vector addition systems can be extended with control flow graphs. Such
a control flow graph is given by a finite set of \emph{control states}
and a finite set of \emph{transitions} labeled by \emph{actions}. This
model is called \emph{Vector Addition Systems with States} (\vass for
short). If instead of a VAS, we enrich a \vasz with a control flow
graph, we obtain a \emph{Vector Addition System with States and one
  zero-test} (\vassz for short). These models are formally defined in
the sequel.

For these systems, the repeated control state reachability consists in
deciding whether a given control state can be visited infinitely often
along some run. This problem is interesting since a number of
model-checking problems, such as LTL model-checking, are reducible to
it. For the class of \vass, the repeated control state reachability
problem is known to be decidable thanks to a reduction to the
\emph{computation of the cover set}. In this section, we extend this
decidability result for the class of \vassz. However, our proof relies
on a reduction to the \emph{reachability problem} for
\vassz~\cite{Reinhardt:08,MFCS:11}. We leave as an open question
whether the repeated control state reachability for \vassz can be
reduced to the computation of the cover.

\medskip

Let us first recall the classical extensions of \vas and \vasz with
States, respectively written \vass and \vassz. States can be seen as
mutually-exclusive, 1-bounded counters, and hence are only used as a
syntactic \hbox{convenience}.
\begin{definition}\textbf{(\vassz)}
  A \emph{Vector Addition System with States and one zero-test
    (\vassz) of dimension $d$} is a tuple $\VV = \tup{A, \az,
    \delta,\ini,Q,T, \qini}$, where $\tup{A, \az, \delta,\ini}$ is a \vasz of
  dimension $d$, $Q$ is a non-empty finite set of \emph{control
    states}, $T\subseteq Q\times (A \cup \{\az\})\times Q$ is a finite set of
  \emph{transitions}, and $\qini \in Q $ is the \emph{initial control
    state}.
\end{definition}
A \emph{Vector Addition System with States (\vass)} is defined
similarly from a \vas $\tup{A, \delta,\ini}$, with $T\subseteq Q\times A\times Q$, and can be
thought of as a \vassz where the action \az is not used. The \vassz
semantics is defined as follows. Let us call \emph{state} any pair
$(q,\vec x)\in Q\times\setN^d$. A \vassz of dimension $d$ induces a transition
system over the set of states, given for every $a\in A\cup\{\az\}$ by:
\begin{equation*}
  (p,\vec{x}) \ru{~a~} (q, \vec{y}) \text{ if }  (p,a,q)\in T\text{ and }\vec{x}\ru{~a~} \vec{y}
\end{equation*}
These relations extend uniquely into relations $\ru{~w~}$ over the set
of states, for $w\in (A\cup\{\az\})^*$, by requiring that $\ru{~\varepsilon~}$ is the
identity relation and $\ru{~w_1w_2}$ is the composition
$\ru{~w_1~}\circ\ru{~w_2~}$, for $w_1,w_2\in (A\cup\{\az\})^*$. The
\emph{reachability relation}, denoted by $\ru{~*~}$ is defined as the
union of all relations $\ru{~w~}$, when $w$ ranges over
$(A\cup\{\az\})^*$. We also introduce the relation $\ru{~+~}$ defined as
the union of all relations $\ru{~w~}$ when $w$ ranges over
$(A\cup\{\az\})^+$.

\medskip

A control state $\qf\in Q$ is said to be \emph{visited infinitely often}
if there exists an infinite sequence $(\vec{x}_j)_{j>0}$ of vectors
$\vec{x}_j\in \setN^d$ such that $(\qini,\ini)\ru{~*~}(\qf,\vec{x}_1)$ and
such that $(\qf,\vec{x}_{j})\ru{~+~}(\qf,\vec{x}_{j+1})$ for all
$j>0$. The \emph{repeated control state reachability} consists in
deciding whether a given control state $\qf$ is visited infinitely
often.

\medskip

We first reduce the repeated control state reachability to a simpler
property.

\begin{lemma}\label{lem:repeat}
  Let $\VV = \tup{A, \az, \delta,\ini,Q,T, \qini}$ be a \vassz of
  dimension~$d$.  A control state $\qf$ is visited infinitely often if
  and only if there exist $\vec x,\vec y\in\setN^d$ such that
  $(\qini,\ini)\ru{~*~}(\qf,\vec{x})\ru{~w~}(\qf,\vec{y})$, and one of
  the following conditions is satisfied:
  \begin{enumerate}[label=$(\roman*)$]
  \item\label{item:11} we have $\vec{x}\leq \vec{y}$ and $w\in A^+$, or
  \item\label{item:12} we have $\vec{x}\leq_1\vec{y}$ and $w\in (A\cup\{\az\})^+$.
  \end{enumerate}
\end{lemma}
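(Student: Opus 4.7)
The $\Leftarrow$ direction is a straightforward iteration using monotony. For case~(i), since $w\in A^+$ contains no zero-test, VAS monotony allows firing $w$ again from $(\qf,\vec y)$ to reach $(\qf,\vec y+\delta(w))$, and iterating produces infinitely many $\qf$-visits. For case~(ii), we invoke the $\lequn$-monotony of \vassz noted in Section~\ref{sec:filtered-covers}: firing $w$ from $(\qf,\vec y)$ with $\vec x\lequn\vec y$ reaches some $(\qf,\vec{y}_1)$ with $\vec y\lequn\vec{y}_1$, and iteration again yields infinitely many $\qf$-visits.

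For the $\Rightarrow$ direction, fix an infinite run visiting $\qf$ at states $(\qf,\vec{x}_j)_{j\geq1}$ and split on the number of zero-tests fired along the run. If only finitely many are fired, consider the suffix after the last one: it still contains infinitely many $\qf$-visits, all connected by words in $A^*$. Dickson's Lemma then yields a pair $i<j$ with $\vec{x}_i\leq\vec{x}_j$ connected by a nonempty word in $A^+$, giving case~(i).

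If infinitely many zero-tests are fired, some transition $(q,\az,q')\in T$ is used infinitely often, say at steps $t_1<t_2<\cdots$ with states $(q,\vec{u}_k)$ just before firing, so $\vec{u}_k(1)=0$. Dickson's Lemma extracts a subsequence $(\vec{u}_{k_m})_m$ with $\vec{u}_{k_m}\leq\vec{u}_{k_{m'}}$, hence $\vec{u}_{k_m}\lequn\vec{u}_{k_{m'}}$, for $m<m'$. Since $\qf$ is visited infinitely often, one can pick $m<m'$ so that at least one $\qf$-visit, at some state $\vec{y}_0$, occurs within the run segment $(q,\vec{u}_{k_m})\ru{u}(q,\vec{u}_{k_{m'}})$, where $u\in(A\cup\{\az\})^+$. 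By $\lequn$-monotony, firing $u$ again from $(q,\vec{u}_{k_{m'}})$ is well-defined and produces, at the analogous intermediate position, a $\qf$-visit at a state $\vec{y}_1$ with $\vec{y}_0\lequn\vec{y}_1$. Concatenating gives $(\qini,\ini)\ru{*}(\qf,\vec{y}_0)\ru{w'}(\qf,\vec{y}_1)$ with $w'\in(A\cup\{\az\})^+$ (its length is at least $|u|\geq 1$), establishing case~(ii). The main obstacle is precisely this last step: Dickson directly delivers a $\lequn$-increasing pair at the auxiliary zero-test control state~$q$, not at $\qf$, so one must transfer the loop to $\qf$ by exhibiting a $\qf$-visit \emph{inside} the loop at $q$ and unfolding the loop once using $\lequn$-monotony.
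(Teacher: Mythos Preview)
Your proof is correct and follows essentially the same strategy as the paper's: split on whether finitely or infinitely many zero-tests occur, use Dickson's Lemma in the first case directly on the $\qf$-visits, and in the second case anchor at zero-test positions (where the first component is forced) to obtain a $\lequn$-increasing pair, then transfer the loop back to~$\qf$ via monotony. The only cosmetic differences are that the paper anchors at the state just \emph{after} a zero-test inside each $\qf$-to-$\qf$ segment $w_j=u_j\az v_j$ (so the $\qf$-bracketing is already in place and the loop is $w_{j+1}\cdots w_{k-1}u_k\az v_j$), whereas you anchor just \emph{before} a fixed zero-test transition, obtain a $q$-loop, and then rotate it to a $\qf$-loop by locating a $\qf$-visit inside; both routes amount to the same conjugation trick.
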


\begin{proof}
  Naturally, if \ref{item:11} or \ref{item:12} holds, then $\qf$ is
  visited infinitely often by monotony of $\ru{~w~}$.  Conversely,
  assume that $\qf$ is visited infinitely often. There exists an
  infinite sequence $(\vec{x}_j)_{j>0}$ of vectors $\vec{x}_j\in \setN^d$, a
  word $w_0\in (A\cup\{\az\})^*$ such that
  $(\qini,\ini)\ru{~w_0~}(\qf,\vec{x}_1)$, and an infinite sequence
  $(w_j)_{j>0}$ of words $w_j\in (A\cup\{\az\})^+$ such that
  $(\qf,\vec{x}_{j})\ru{~ w_j ~}(\qf,\vec{x}_{j+1})$ for every
  $j>0$. We introduce the set $J$ of indexes $j>0$ such that $\az$
  occurs in $w_j$. We distinguish two cases according to whether $J$
  is finite or infinite.

  Assume first that $J$ is finite. By replacing $w_0$ with $w_0\cdots w_m$,
  where $m=\max J$, and $w_\ell$ with $w_{m+\ell}$ for $\ell>0$, we may assume
  without loss of generality that $J=\emptyset$, \ie, that $w_j\in A^+$ for all
  $j>0$. By Dickson's lemma, there exist positive integers $j<k$ such
  that $\vec{x}_j\leq \vec{x}_k$. We deduce that~\ref{item:11} holds, by
  observing that
  $(\qini,\ini)\ru{~v~}(\qf,\vec{x})\ru{~w~}(\qf,\vec{y})$ with
  $v=w_0\ldots w_{j-1}$, $w=w_{j}\ldots w_{k-1}$, $\vec{x}=\vec{x}_j$ and
  $\vec{y}=\vec{x}_k$.

  Assume now that $J$ is infinite. By suitably concatenating some
  words $w_j$, we can assume without loss of generality that $\az$
  occurs in $w_j$ for every $j>0$. This means that $w_j$ can be
  decomposed into $w_j=u_j \az v_j$ for some words $u_j,v_j\in
  (A\cup\{\az\})^*$. Hence there exists a state $(q_j,\vec{y}_j)$ such that
  $(\qf,\vec{x}_{j})\ru{~u_j \az~}(q_j,\vec{y}_j)\ru{~
    v_j~}(\qf,\vec{x}_{j+1})$. Dickson's lemma shows that there exist
  $j<k$ such that $\vec{y}_j\leq \vec{y}_k$ and $q_j=q_k$. Since the
  vectors $\vec{y}_j$ and $\vec{y}_k$ appear just after the zero test
  $\az$, we deduce that $\vec{y}_j(0)=\vec{y}_k(0)$, so
  $\vec{y}_j\leq_1\vec{y}_k$. Let $\vec{z}=\vec{y}_k-\vec{y}_j$. Note
  that we have:
  $$(\qini,\ini)\ru{~w_0\ldots w_{j-1}u_j\az}(q_j,\vec{y}_j)
  \xrightarrow{~v_j~} (\qf,\vec{x}_{j+1})\xrightarrow{~w_{j+1}\ldots
    w_{k-1}u_k \az~}(q_k,\vec{y}_k)$$ Now we use monotony: since
  $(q_j,\vec{y}_j)\xrightarrow{~v_j~}(\qf,\vec{x}_{j+1})$,
  $\vec{y}_j\leq_1\vec{y}_k$, and $q_k=q_j$, we get
  $(q_k,\vec{y}_k)\xrightarrow{~v_j~}(\qf,\vec{x}_{j+1}+\vec{z})$. Therefore
  $(\qini,\ini)\ru{~v~}(\qf,\vec{x})\ru{~w~}(\qf,\vec{y})$ with
  $v=w_0\ldots w_j$, $w=w_{j+1}\ldots w_{k-1}u_k \az v_j$,
  $\vec{x}=\vec{x}_{j+1}$, and $\vec{y}=\vec{x}_{j+1}+\vec{z}$.
\end{proof}

\begin{theorem}
  \label{theo:rcsrp}
  The repeated control state reachability problem is decidable for \vassz.
\end{theorem}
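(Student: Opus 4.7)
The plan is to apply Lemma~\ref{lem:repeat}, which reduces the repeated control state reachability problem to deciding the existence of $\vec x, \vec y \in \setN^d$ and $w$ with $(\qini, \ini) \ru{~*~} (\qf, \vec x) \ru{~w~} (\qf, \vec y)$ satisfying either (i) $w \in A^+$ and $\vec x \leq \vec y$, or (ii) $w \in (A \cup \{\az\})^+$ and $\vec x \lequn \vec y$. I reduce each subproblem to the reachability problem for \vassz, which is decidable by Theorem~\ref{thm:reach}.

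For case~(i), I build an auxiliary \vassz $\VV'$ of dimension $3d$ with counters $(\vec u, \vec v, \vec w)$, the zero-test acting on $\vec u(1)$. The control graph has four stages: $(1)$ a faithful simulation of $\VV$ on $\vec u$, with $\vec v$ and $\vec w$ kept idle; $(2)$ a boost stage entered at $\qf$, consisting of self-loops of effect $(\vec 0, \vec{e}_i, \vec{e}_i)$ for each $i \in \{1, \ldots, d\}$, pre-loading both $\vec v$ and $\vec w$ with a guessed common vector $M \in \setN^d$; $(3)$ a coupled simulation restricted to $A$-actions that returns to $\qf$, where each transition of $\VV$ updates $\vec u$ and $\vec v$ simultaneously by $\delta(a)$; and $(4)$ a drain stage with self-loops $(\vec 0, -\vec{e}_i, -\vec{e}_i)$ (simultaneous) and $(\vec 0, -\vec{e}_i, \vec 0)$ ($\vec v$-only) for each $i$, leading to a terminal state $q_{\text{fin}}$ at which additional loops $(-\vec{e}_i, \vec 0, \vec 0)$ drain $\vec u$. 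A control flag ensures at least one transition in stage~$(3)$, enforcing $w \neq \varepsilon$. I query reachability of the target $(q_{\text{fin}}, \vec 0, \vec 0, \vec 0)$.

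The correctness proof goes both ways. From a witness $(\vec x, \vec y, w)$, choosing $M$ componentwise larger than every $|\delta(u)(i)|$ over prefixes $u$ of $w$ makes the coupled simulation fireable; after stage~$(3)$, $\vec v = M + \delta(w)$ and $\vec w = M$, so the simultaneous drain can fire $M$ times, yielding $\vec v = \delta(w) \geq \vec 0$ and $\vec w = \vec 0$, after which the $\vec v$-only drain empties $\vec v$. Conversely, if the target is reached, $\vec w$ has been reduced from $M$ to $\vec 0$, which can happen only through exactly $M$ simultaneous drains per component, and this requires $\vec v \geq \vec 0$ throughout, forcing $\delta(w) \geq \vec 0$. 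Case~(ii) differs only in that stage~$(3)$ also permits $\az$-transitions (still with zero-test on $\vec u(1)$), and the drain stage omits the $\vec v(1)$-only loop, so that the reachability query forces $\delta(w)(1) = 0$ exactly while still allowing $\delta(w)(i) \geq 0$ for $i \geq 2$.

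The main obstacle is that the partial sums $\delta(u)$ over prefixes $u$ of $w$ may dip below zero on some component, which would stall a naive coupled simulation of $\VV$ on a parallel copy tracking the running delta. The boost-and-drain gadget handles this by pre-loading $\vec v$ and its witness $\vec w$ with a sufficiently large common offset, reducing the inequality $\delta(w) \geq \vec 0$ to the ability to drain both down to $\vec 0$ in $\vec w$. A secondary technicality is that $\VV'$ must respect the single-zero-test constraint of \vassz: the zero-test is used solely to simulate $\az$ on $\vec u(1)$, and the other equalities $\vec v = \vec w = \vec 0$ at the target are imposed syntactically by the reachability query rather than by additional zero-test gadgets.
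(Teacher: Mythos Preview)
Your reduction is correct in spirit and reaches the same target as the paper---a reduction, via Lemma~\ref{lem:repeat}, to the reachability problem for an auxiliary \vassz---but the construction is genuinely different and somewhat heavier than the paper's. The paper builds a \vassz of dimension $2d$ rather than $3d$. Its key trick is to maintain \emph{two synchronized copies} of the counters from the very start: during the initial simulation each action has effect $(\delta(a),\delta(a))$, so that when $\qf$ is reached both blocks hold $\vec x$; the system then nondeterministically switches to a mode where actions act only on the first block, $(\delta(a),\vec 0)$, so that after firing $w$ one holds $(\vec y,\vec x)$. A drain gadget then tests $\vec x\leq\vec y$ (resp.\ $\vec x\lequn\vec y$) by reaching $(\vec 0,\vec 0)$. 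Because the second block already stores the nonnegative vector $\vec x$, the obstacle you identify---partial sums $\delta(u)$ dipping below zero along prefixes of $w$---never arises, and no boost stage is needed. Your boost-and-drain gadget is a valid alternative solution to the same obstacle, at the cost of an extra block of $d$ counters and a nondeterministic guess of the offset~$M$.

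One technical point in your construction does need care. A \vassz has a \emph{single} zero-test action $\az$ with a \emph{single} displacement $\delta'(\az)$. In stage~(1) you want $\az$ to act as $(\delta(\az),\vec 0,\vec 0)$, leaving $\vec v$ idle; in stage~(3) for case~(ii) you want it to act as $(\delta(\az),\delta(\az),\vec 0)$, so that $\vec v$ accumulates the full displacement $\delta(w)$. These two requirements are incompatible unless $\delta(\az)=\vec 0$. The paper resolves this by first normalizing so that $\delta(\az)=\vec 0$ (splitting each zero-test transition into a pure test followed by an ordinary action carrying the displacement), and you should do the same; without it, your $\vec v$ records only the $A$-part of $\delta(w)$ in case~(ii) and the equivalence with $\vec x\lequn\vec y$ breaks. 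With that normalization in place, your argument goes through.
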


\begin{proof}
  Consider a \vassz $\VV = \tup{A, \az, \delta,\ini,Q,T,\qini}$ of
  dimension $d$ and a control state $\qf\in Q$. Without loss of
  generality, by introducing some extra control states and actions, we
  can assume that $\delta(\az)$ is the zero vector.

  We construct from $\VV$ a \vassz $\VV'=\tup{A',\az,\delta',Q',T',\qini}$
  of dimension $2d$ as follows. We duplicate the set of control states
  $Q$ into two additional copies for simulating
  conditions~\ref{item:11} and \ref{item:12} of
  Lemma~\ref{lem:repeat}. These copies are denoted by $Q_{(i)}$ and
  $Q_{(ii)}$, and the copies of a control state $q\in Q$ are denoted by
  $q_{(i)}$ and $q_{(ii)}$. We define $Q'=Q\cup Q_{(i)}\cup Q_{(ii)}$. We
  duplicate the set of actions $A$ into two additional copies
  $A_{(i)}$ and $A_{(ii)}$. The copies of an action $a\in A$ are denoted
  by $a_{(i)}$ and $a_{(ii)}$. We introduce the set of transitions

  \begin{equation*}
    \begin{array}{l@{\;}l@{\;}l}
      T_{(i)}&=\{(p_{(i)},a_{(i)},q_{(i)}) \mid (p,a,q)\in T \land a\in A\}&{}\cup\{(\qf,a_{(i)},q_{(i)}) \mid (\qf,a,q)\in T \land a\in A\},\\
      T_{(ii)}&=\{(p_{(ii)},a_{(ii)},q_{(ii)}) \mid (p,a,q)\in T\}&{}\cup \{(\qf,a_{(ii)},q_{(ii)}) \mid (\qf,a,q)\in T\},
    \end{array}
  \end{equation*}
  where $(\az)_{(ii)}$ denotes $\az$. Observe that transitions in
  $T_{(i)}$ are not labeled by the zero-test~$\az$. The set of
  transitions of $\VV'$ is $T'=T\cup T_{(i)}\cup T_{(ii)}$. The displacement
  function $\delta'$ is defined by $\delta'(a)=(\delta(a),\delta(a))$, and
  $\delta'(a_{(i)})=\delta'(a_{(ii)})=(\delta(a),\vec{0})$ for every $a\in A$, and
  $\delta'(\az)=(\vec{0},\vec{0})$.  Now just observe that for every
  $\vec{x},\vec{y}\in\setN^d$, we have:
  
  \begin{itemize}
  \item [$(i)$] There exists a run in $\VV$ of the form
    $(\qini,\ini)\xrightarrow{~*~}(\qf,\vec{x})\xrightarrow{~w~}(q,\vec{y})$
    such that $w\in A^+$ if and only if $(q_{(i)},\vec{y},\vec{x})$ is
    reachable in $\VV'$.

  \item[$(ii)$] There exists a run in $\VV$ of the form
    $(\qini,\ini)\xrightarrow{~*~}(\qf,\vec{x})\xrightarrow{~w~}(q,\vec{y})$
    such that $w\in (A\cup\{\az\})^+$ if and only if
    $(q_{(ii)},\vec{y},\vec{x})$ is reachable in $\VV'$.
  \end{itemize}
  From Lemma~\ref{lem:repeat} we deduce that $\qf$ is a repeated
  control state in $\VV$ if and only there exists for $\VV'$ a
  reachable state of the form $((\qf)_{(i)},\vec{y},\vec{x})$ with
  $\vec{x}\leq \vec{y}$, or a reachable state of the form
  $((\qf)_{(ii)},\vec{y},\vec{x})$ with $\vec{x}\leq_1\vec{y}$.
  
  We reduce these two problems to the reachability problem for a
  \vassz $\VV''$ obtained from $\VV'$ by adding two extra states
  $r_{(i)}$ and $r_{(ii)}$, two extra transitions
  $((\qf)_{(i)},(\vec{0},\vec{0}),r_{(i)})$ and
  $((\qf)_{(ii)},(\vec{0},\vec{0}),r_{(ii)})$, and two extra cycles on
  $r_{(i)}$ and $r_{(ii)}$ that suitably decrease the counters, in
  such a way that
  \begin{itemize}[label=--,leftmargin=*]
  \item $((\qf)_{(i)},\vec{y},\vec{x})$ with $\vec{x}\leq \vec{y}$ is
    reachable in $\VV'$ if and only if $(r_{(i)},\vec{0},\vec{0})$ is
    reachable in $\VV''$, and
  \item $((\qf)_{(ii)},\vec{y},\vec{x})$ with $\vec{x}\leq_1\vec{y}$ is
    reachable in $\VV'$ if an only if $(r_{(ii)},\vec{0},\vec{0})$ is
    reachable in $\VV''$.
  \end{itemize}
  We have reduced the repeated control state reachability problem to
  the reachability problem for \vassz, which is
  decidable~\cite{Reinhardt:08,MFCS:11}.
\end{proof}

A classical application of the decidability of the repeated control
state reachability for \vass is the decidability of LTL
model-checking, and more generally of model-checking against
$\omega$-regular specifications (it is well-known that LTL specifications
can be effectively compiled into $\omega$-regular specifications, see
\cite{Vardi&Wolper:Reasoning-about-Infinite-Computations:1994:a} for
some original results, or
\cite{Vardi:automata-theoretic-approach-linear-temporal:1996:a} for a
survey). Let us informally describe this problem
(see~\cite{Esparza:98,Blockelet&Schmitz:Model-Checking-Coverability-Graphs:2011:a}
for formal presentations). Its inputs are a $\Sigma$-labeled \vassz $\VV$
and an $\omega$-regular language $L$ over $\Sigma$. By a $\Sigma$-labeled \vassz, we
mean a \vassz $\VV$ with transition set $T$, equipped with a labeling
function $\ell:T\to\Sigma$. The \emph{trace} of an infinite run of $\VV$ is the
infinite word over $\Sigma$ obtained as the image under $\ell$ of the run. The
question is whether all traces of \VV belong to $L$.

\smallskip For \vass, the standard technique to solve this problem is
to build the product $\VV\times\mathcal A$ of the \vass~$\VV$ with a B\"uchi
automaton $\mathcal A$ recognizing $L$, synchronized on $\Sigma$. The
problem then reduces to the repeated control state reachability in
$\VV\times\mathcal A$, which is a \vass. This also works in our case, since
the class of \vassz is closed under direct product with a finite-state
automaton. We deduce the following statement.

\begin{theorem}
  \label{thm:ltl-reduce}
  Model-checking a labeled vector addition system with states and one
  zero-test against an $\omega$-regular property (and in particular against
  an LTL specification) is decidable.
\end{theorem}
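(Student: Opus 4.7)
The plan is to reduce the problem to the repeated control state reachability problem for \vassz, which is decidable by Theorem~\ref{theo:rcsrp}. The reduction follows the standard automata-theoretic approach, adapted to handle the zero-test.

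First, given an $\omega$-regular language $L\subseteq\Sigma^\omega$, I would build a (nondeterministic) B\"uchi automaton $\mathcal{A}$ recognizing the complement $\Sigma^\omega\setminus L$. A trace of $\VV$ fails the specification $L$ if and only if it is accepted by $\mathcal{A}$. Since LTL formulas can be effectively translated into B\"uchi automata (via the classical construction cited in the excerpt), this reduction also covers the LTL case.

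Next, I would form the synchronized product $\VV \times \mathcal{A}$. The key observation is that the class of \vassz is closed under direct product with a finite-state automaton: the control states of the product are pairs $(q,s)$ where $q\in Q$ and $s$ is a state of $\mathcal{A}$, the counters and the zero-test remain those of~$\VV$, and a transition $(q,\vec x)\xrightarrow{a}(q',\vec x')$ of $\VV$ labeled by $\sigma\in\Sigma$ under $\ell$ is combined with any transition $s\xrightarrow{\sigma}s'$ of~$\mathcal{A}$. In particular the zero-test $\az$ occurs only on the \vassz component, so the product has still only one zero-test. Accepting control states of the product are pairs $(q,s)$ where $s$ is accepting in $\mathcal{A}$.

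The specification is violated if and only if some accepting control state of the product is visited infinitely often by an infinite run of $\VV \times \mathcal{A}$ starting from the initial state. This is precisely an instance of the repeated control state reachability problem, decidable by Theorem~\ref{theo:rcsrp} (running the decision procedure on each accepting control state in turn, and taking the disjunction). Since I have to check that \emph{all} traces satisfy $L$, I accept if and only if no accepting state of the product is a repeated control state. The only subtle step is checking that the product construction preserves both the single zero-test restriction and the intended semantics of infinite traces; everything else is a direct appeal to Theorem~\ref{theo:rcsrp}, which I expect to be the workhorse of the argument.
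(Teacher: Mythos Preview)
Your proposal is correct and follows essentially the same approach as the paper: form the synchronized product of the labeled \vassz with a B\"uchi automaton, observe that this product is still a \vassz (closure under product with a finite automaton), and reduce to the repeated control state reachability problem of Theorem~\ref{theo:rcsrp}. You are slightly more explicit than the paper in taking the automaton for the \emph{complement} $\Sigma^\omega\setminus L$ (as required for the universal formulation ``all traces belong to $L$'') and in iterating over accepting states, but these are standard details that the paper leaves implicit.
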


\section{Conclusion and perspectives}
\label{sec:conclusion}

\paragraph{\bfseries Summary} Our main result is a forward algorithm,
\emph{\`a la} Karp and Miller, to compute the downward closure of the
reachability set of a non-monotonic transition system: \vasz. The
proof first goes by strengthening the decidability of the reachability
set of a \vas: we show that the \emph{limit closure} of this set is
decidable. We have then introduced new sets, sitting between the cover
and the reachability set. We have shown that the decidability of the
limit closure of the reachability set entails the decidability of
filtered covers for a usual \vas. This tool has then be used to
perform accurate macro-steps in an adapted Karp-Miller algorithm for
\vasz. Finally, we have shown how to use this result to decide place
boundedness for \vasz, as well as the repeated control state
reachability problem, and LTL model-checking.

\paragraph{\bfseries \vas vs.\ \vasz} Classical decidable problems for
VAS are still decidable for \vasz: reachability, coverability,
boundedness, place boundedness, LTL model-checking, repeated control
state state reachability. One may want to investigate which logical
properties remain decidable for \vasz (see
\eg~\cite{Blockelet&Schmitz:Model-Checking-Coverability-Graphs:2011:a}
for properties on \vas solvable using Karp-Miller trees). Note that
\vasz cannot be simulated by VAS. For instance the prefix-closure of
the language $\{a^nb^n\mid n\geq1\}^*$ can be recognized by a \vasz, but not
by a VAS~\cite{Kosaraju:73}.

\paragraph{\bfseries Complexity and dependency to the reachability problem}
Unfortunately, we cannot say anything about the complexity of the
computation of the cover for \vasz, because our proof uses the
decidability of the reachability problem for VAS as an oracle, whose
complexity is still open. Observe that, more precisely, we have used
the decidability of the reachability problem for \vas in
Section~\ref{sec:set-limits-reachable}, and this cannot be avoided to
get Theorem~\ref{thm:reclim}. However, to decide the repeated control
state reachability problem in Section~\ref{sec:rcsrp}, we have also
used a reduction to the decidability of the reachability problem, this
time for \vasz. It is not clear whether one can avoid it: we leave it
as an open problem.

\paragraph{\bfseries Future work}
Our results cannot be trivially extended to the more general class of
VAS with hierarchical zero-tests~\cite{Reinhardt:08}.  In fact, for
this class, the coverability problem and the reachability problem are
mutually reducible with immediate log-space reductions.  The
reachability problem was proved to be decidable by Reinhardt in
\cite{Reinhardt:08}. Recently, the model of VAS with hierarchical
zero-tests was proved to be equivalent to VAS with one stack encoding
bounded-index context-free languages~\cite{DBLP:conf/fsttcs/AtigG11}.
As future work, we are interested in the decidability of the
reachability problem for VAS equipped with an unrestricted stack. With
this class, it becomes possible to model client-server systems where
clients are dynamically created and destructed, identical
finite-states machines, and the server is a recursive finite-state
machine communicating by rendez-vous. The reachability problem for
this class is open. For tackling this problem, we recently
investigated a simplification of Reinhardt's decidability proof of the
reachability problem for VAS with hierarchical
zero-tests~\cite{Reinhardt:08}: for the subclass of \vasz, the first
author published a simplified proof in \cite{MFCS:11}, based on the
work of the third author~\cite{Leroux:09}.

\section*{Acknowledgements}
\label{sec:ackowledgements}

We thank the referees whose careful reading helped us to improve
the paper.

\end{document}